\definecolor{MyBlue}{rgb}{0.12, 0.12, 0.76}
\newcommand{\cmark}{\ding{51}}
\newcommand{\xmark}{\ding{55}}
\newcommand{\thickhline}{%
    \noalign {\ifnum 0=`}\fi \hrule height 1.4pt
    \futurelet \reserved@a \@xhline
}
\newcolumntype{"}{@{\hskip\tabcolsep\vrule width 1.4pt\hskip\tabcolsep}}
\newtheorem{theorem}{Theorem}[section]
\newtheorem{lemma}{Lemma}[section]
\newtheorem{definition}{Definition}[section]
\let\oldReturn\Return
\renewcommand{\Return}{\State\oldReturn}
\newenvironment{mechanism}[1][htb]{%
    \renewcommand{\ALG@name}{Mechanism}% change algorithm name to "mechanism"
   \begin{algorithm}[#1]%
  }{\end{algorithm}}
\DeclareMathOperator*{\argmax}{arg\,max}
\newcommand\bbr{\mathbb{R}}
\newcommand\bbrpos{\mathbb{R}_{\ge 0}}
\newcommand\ep{\varepsilon}
\newcommand{\x}{\mathbf{x}}
\newcommand{\y}{\mathbf{y}}
\newcommand{\bfu}{\mathbf{u}}
\newcommand{\f}{\mathbf{f}}
\newcommand{\g}{\mathbf{g}}
\newcommand{\B}{\mathbf{b}}
\newcommand{\barn}{\bar{N}}
\newcommand{\C}{C}
\newcommand{\calC}{\mathcal{C}}
\newcommand\xprime{\mathbf{x'}}
\newcommand\bprime{\mathbf{b'}}
\newcommand\fprime{\mathbf{f'}}
\newcommand\gprime{\mathbf{g'}}
\newcommand\atp{\mathcal{ATP}}
\newcommand\xs{\mathbf{x^*}}
\newcommand\tilr{\tilde{R}}
\newcommand\todo[1]{}
\begin{document}

\title{Optimal Nash Equilibria for Bandwidth Allocation}
\author{Benjamin Plaut}
\date{bplaut@cs.stanford.edu\\ Stanford University}

\maketitle

\begin{abstract}
In bandwidth allocation, competing agents wish to transmit data along paths of links in a network, and each agent's utility is equal to the minimum bandwidth she receives among all links in her desired path. Recent market mechanisms for this problem have either focused on only Nash welfare~\cite{branzei_nash_2017}, or ignored strategic behavior~\cite{goel_beyond_2018}. We propose a nonlinear variant of the classic trading post mechanism, and show that for almost the entire family of CES welfare functions (which includes maxmin welfare, Nash welfare, and utilitarian welfare), every Nash equilibrium of our mechanism is optimal. We also prove that fully strategyproof mechanisms for this problem are impossible in general, with the exception of maxmin welfare. More broadly, our work shows that even small modifications (such as allowing nonlinear constraints) can dramatically increase the power of market mechanisms like trading post.
\end{abstract}

\section{Introduction}\label{sec:intro}

Bandwidth allocation is a classic resource allocation problem where competing agents wish to transmit data across paths in a network. Each link has a fixed capacity, and each agent's utility is equal to the minimum bandwidth she receives among all links in her desired path, i.e., the rate at which she is able to transmit data. We follow the standard model of Kelly et al.~\cite{kelly_1998_rate}, where there are no monetary payments, and each agent's path is fixed in advance. 

Although one could consider a model where bandwidth allocation and routing are handled simultaneously (i.e., by allowing agents to choose their paths), that would be less accurate in terms of how the internet actually works: routing (which is handled by IP) and bandwidth allocation (which is handled by TCP) are generally separate problems\footnote{See Section~\ref{sec:related} for a discussion of routing games.}. This paper is about bandwidth allocation, where pricing-based schemes (like trading post) naturally correspond to signaling mechanisms that indicate which links are congested, and an end-point protocol like TCP~\cite{cerf_1974_protocol} can be thought of as agent responses. One of the foundational works in the area of bandwidth allocation is Kelly et al.~\cite{kelly_1998_rate}, whose pricing scheme results in the allocation maximizing Nash welfare (the product of utilities).

In this paper, we take the role of a social planner,  whose goal is to design a mechanism that leads to a ``desirable" outcome (for some definition of ``desirable"). We study this through the lens of \emph{implementation theory}. A mechanism is said to Nash-implement a social choice rule $\Psi$ (for example, $\Psi$ could denote Nash welfare maximization) if every problem instance has least one Nash equilibrium, and every Nash equilibrium outcome is optimal with respect to $\Psi$. This is similar to saying that the price of anarchy -- the ratio of the optimum and the ``worst" Nash equilibrium -- of the mechanism is 1.\footnote{The price of anarchy~\cite{papa_worst_1999} concept applies only when $\Psi$ can be written as the maximization of some cardinal function. This is true when $\Psi$ denotes Nash welfare maximization, but is not true in general.} In this paper, we focus on pure Nash equilibria, i.e., we do not consider randomized strategies.

The result of Kelly et al.~\cite{kelly_1998_rate} assumes that agents are not strategic, and thus the Nash equilibria of their mechanism may be poor. In contrast, our augmented trading post mechanism will lead to optimal Nash equilibria, not just for Nash welfare, but for an entire family of welfare functions.

%%%%%%%
\subsection{Trading post}\label{sec:tp-intro}

Our main tool will be an augmented version of the \emph{trading post} mechanism. In the standard trading post mechanism, each agent $i$ submits a bid $b_{ij} \in \bbrpos$ on each good $j$, with the constraint that $\sum_j b_{ij} \le 1$ for each agent $i$. Let $x_{ij}$ be the fraction of good $j$ that agent $i$ receives: then trading post's allocation rule is $x_{ij} = \frac{b_{ij}}{\sum_k b_{kj}}$. In words, each agent receives a share of the good proportional to her share of the aggregate bid on that good. The bids consist of ``fake money": agents have no value for leftover money.

Trading post has the desirable property that the information requirements are quite light. Each agent's best response only depends on the aggregate bid of the other agents (i.e., $\sum_{k \ne i} b_{kj}$), not on their individual bids. Furthermore, the allocation rule is decentralized in the sense that there is no centralized price computation, and each link $j$ only needs to know the bids $b_{1j}, b_{2j}, \dots b_{nj}$.

%\footnote{Depending on the implementation, the constraint $\sum_j b_{ij} \le 1$ may need to be enforced in a centralized manner.}

However, the vanilla version of trading post also has limitations. First of all, it is not even guaranteed to have a Nash equilibrium for every problem instance.\footnote{This happens when there is a good that has large enough supply that is not the ``rate limiting factor" for any agent; see Sections~\ref{sec:related} and \ref{sec:price-zero} for additional discussion.} A partial solution to this was proposed by~\cite{branzei_nash_2017}. For every $\ep > 0$, they gave a modified version of trading post (parameterized by $\ep$) that always has a Nash equilibrium, and where every Nash equilibrium attains at least $1-\ep$ of the maximum possible Nash welfare.\footnote{They study \emph{Leontief utilities}, which is a generalization of bandwidth allocation to the setting where agents may desire goods in different proportions.} In the language of implementation theory, this mechanism Nash-implements a $1-\ep$ approximation of Nash welfare. In the course of our main result, we will strengthen this to full Nash implementation. It is important to note that their mechanism still uses the linear constraint of $\sum_j b_{ij} \le 1$; their modification has to do with a minimum allowable bid (see Section~\ref{sec:related} for additional discussion).

In this paper, we augment the trading post mechanism by allowing nonlinear bid constraints: instead of $\sum_j b_{ij} \le 1$, we require $\sum_j f_j(b_{ij}) \le 1$ for each agent $i$, where each $f_j$ is a nondecreasing function chosen by us ahead of time. Importantly, all agents are still subject to the same bid constraint, and we use the same allocation rule of $x_{ij} = \frac{b_{ij}}{\sum_k b_{kj}}$. This novel augmentation allows us to Nash-implement a wide range welfare functions, as opposed to just Nash welfare. Specifically, we will Nash-implement almost the entire family of CES welfare functions (see Section~\ref{sec:results} for more details). This is our main result.

%\footnote{Although this is not the focus of the paper, we note that our augmentation preserves the desirable property of light information requirements.}

%\footnote{One may be concerned that this violates the ``decentralization" property alluded to previously: a linear constraint be enforced by giving each agent one unit of ``fake money" and letting them distribute it however they wish. but how can $\sum_j f_j(b_{ij}) \le 1$ be enforced in a decentralized manner? Although this is not the focus of the paper, one could instead use the standard linear constraint of $\sum_j b_{ij} \le 1$ along with the modified allocation rule of $x_{ij} = \frac{f_j^{-1}(b_{ij})}{\sum_k f_j^{-1}(b_{kj})}$.}

%%%%%%%
\subsection{CES welfare functions}\label{sec:ces-intro}

A welfare function~\cite{bergson_1938_reformulation, samuelson_1947_foundations} assigns a real number to each possible outcome, with higher numbers (i.e, higher welfare) indicating outcomes that are more desirable to the social planner. Different welfare functions represent different priorities: in particular the tradeoff of overall efficiency and individual fairness. For any constant $\rho \in (-\infty, 0) \cup (0,1]$, the \emph{constant elasticity of substitution} (CES) welfare function is defined by
\[
\Big(\sum\limits_{i = 1}^n u_i^{\rho}\Big)^{1/\rho}
\]
where $u_i$ is agent $i$'s utility, and $\rho \in \bbr$ is the elasticity parameter. When $\rho = 1$, this is the utilitarian welfare, i.e., sum of utilities. Taking limits as $\rho$ goes to $-\infty$ and 0 yields maxmin welfare (the minimum utility)~\cite{rawls_1971_theory, sen_1976_welfare, sen_1977_social} and Nash welfare (the product of utilities)~\cite{kaneko_nash_1979, nash_bargaining_1950}, respectively. This class of welfare functions was first proposed by Atkinson~\cite{atkinson_1970_measurement} (although under a different name), and further developed by \cite{blackorby_1978_measures}. See \cite{moulin_2003_fair} for a modern introduction to this class of welfare functions.

The closer $\rho$ gets to $-\infty$, the more the social planner cares about individual equality (maxmin welfare being the extreme case of this), and the closer $\rho$ gets to 1, the more the social planner cares about overall societal good (utilitarian welfare being the extreme case of this). The CES welfare function (as opposed to the CES agent utility function) has received almost no attention in the computational economics community, despite being well-studied in the traditional economics literature~\cite{atkinson_1970_measurement, blackorby_1978_measures}.

These welfare functions also admit an axiomatic characterization~\cite{moulin_2003_fair}:
\begin{enumerate}
\item Monotonicity: if one agent's utility increases while all others are unchanged, the welfare function should prefer the new allocation.
\item Symmetry: the welfare function should treat all agents the same.
\item Continuity: the welfare function should be continuous.
\item Independence of common scale: scaling all agent utilities by the same factor should not affect which allocations have better welfare than others.
\item Independence of unconcerned agents: when comparing the welfare of two allocations, the comparison should not depend on agents who have the same utility in both allocations. 
\item The Pigou-Dalton principle: all things being equal, the welfare function should prefer more equitable allocations~\cite{dalton_1920_measurement,pigou_1912_wealth}. 
\end{enumerate}

Ignoring monotonic transformations of the welfare function (which of course do not affect which allocations have better welfare than others), the set of welfare functions satisfying these axioms is exactly the set of CES welfare functions with $\rho \in (-\infty,0)\cup(0,1]$\footnote{Without the Pigou-Dalton principle, $\rho > 1$ is also allowed. This can result in unnatural cases where it is optimal to give one agent everything and the rest none, even when this does not maximize the sum of utilities.}, including Nash welfare~\cite{moulin_2003_fair}.\footnote{This actually does not include maxmin welfare, which obeys weak monotonicity but not strict monotonicity.} This axiomatic characterization shows that we are not just focusing on an arbitrary class of welfare functions: CES welfare functions are arguably the most reasonable welfare functions.

Recently, \cite{goel_beyond_2018} showed that for any CES welfare function, nonlinear pricing can be used to obtain market equilibria with optimal CES welfare. However, their equilibrium notion -- \emph{price curve equilibrium} -- assumes that agents are not strategic.

\subsection{Related work}\label{sec:related}

%%%% TRADING POST
\paragraph{Trading post and market games.}

The trading post mechanism -- first proposed by Shapley and Shubik~\cite{shapley_trade_1977}, and sometimes called the ``Shapley-Shubik game"\footnote{A plethora of other names have been applied to this mechanism as well, including the proportional share mechanism~\cite{feldman_proportional_2009}, the Chinese auction~\cite{matros_chinese_2011}, and the Tullock contest in rent seeking~\cite{buchanan_toward_1980}.} --  is an example of a \emph{strategic market game} (for an overview of strategic market games, see~\cite{giraud_strategic_2003}). The study of markets has a long history
in the economics literature~\cite{arrow_existence_1954, brainard_how_2005,  varian_equity_1974, walras_elements_1874}\footnote{Recently, this topic has garnered significant attention in the computer science community as well (see~\cite{vazirani_2007_combinatorial} for an algorithmic exposition).}, but most of this work assumes that agents are \emph{price-taking}, meaning that they treat the market prices are fixed, and do not behave strategically to affect these prices.\footnote{There is some work treating price-taking market models as strategic games; see e.g., \cite{adsul_nash_2010, branzei_fisher_2014, branzei_nash_2017}.} A market game, however, treats the agents as strategic players who wish to selfishly maximize their own utility. Trading post does not have explicit prices set by a centralized authority: instead, prices arise implicitly from agents' strategic behavior. In particular, $\sum_k b_{kj}$ -- the aggregate bid on good $j$ -- functions as the implicit price of good $j$. Although the trading post mechanism is well-defined for any utility functions, the Nash equilibria are not guaranteed to have many nice properties in general, except in the limit as the number of agents goes to infinity~\cite{dubey_theory_1978} (in this case, the trading post Nash equilibria converge to the price-taking market equilibria).

The paper most relevant to ours is \cite{branzei_nash_2017}, which analyzed the performance of trading post (with a linear bid constraint) with respect to Nash welfare. They showed that for Leontief utilities (which generalize bandwidth allocation), a modified trading post mechanism approximates the Nash welfare arbitrarily well. Specifically, for any $\ep > 0$, they gave a mechanism (parameterized by $\ep$) which achieves a $1-\ep$ Nash welfare approximation: there is at least one Nash equilibrium, and every Nash equilibrium has Nash welfare at least $1-\ep$ times the optimal Nash welfare. Thus the price of anarchy is at most $\frac{1}{1-\ep}$; equivalently, this mechanism Nash-implements a $1-\ep$ approximation of Nash welfare. The reason that they were unable to perfectly implement Nash welfare is because when there is a good with supply much larger than other goods\footnote{Specifically, this occurs when a good has price zero. Having a much larger supply than other goods is sufficient but not necessary for this.}, vanilla trading post may not even have a Nash equilibrium. To fix this, they added a minimum allowable bid, and showed that for any $\ep > 0$, there is a minimum bid that gives them a $1-\ep$ Nash implementation. Instead of having a minimum allowable bid, we will add a special bid $\beta$, which will allow us to strengthen this to full Nash implementation (see Section~\ref{sec:price-zero}).

It is worth noting that~\cite{branzei_nash_2017} also considers a broader class of valuations than Leontief, but for this broader class, only a $1/2$ approximation is achieved. Another recent paper gave a strategyproof mechanism achieving a $1/e\approx .368$ approximation of the optimal Nash welfare~\cite{cole_mechanism_2013}. Their $1/e$ approximation guarantee is weaker than the 1/2 guarantee of \cite{branzei_nash_2017} (and the $1-\ep$ guarantee for Leontief), but strategyproofness is sometimes more desirable that Nash implementation. Unfortunately, strategyproofness in the bandwidth allocation setting is generally impossible (Theorem~\ref{thm:ces-not-sp}).

%%%% MARKETS
\paragraph{Price-taking markets.}

The simplest mathematical model of a price-taking market is a \emph{Fisher market}, due to Irving Fisher~\cite{brainard_how_2005}. In a Fisher market, there is a set of goods for sale, and each buyer enters the market with a budget she wishes to spend. Each good has a price, and each buyer purchases her favorite bundle among those that are affordable under her budget constraint. Prices are linear, meaning that the cost of a good is proportional to the quantity purchased, and buyers are assumed to have no value for leftover money, so they will always exhaust their entire budgets. A market equilibrium assigns a price to each good so that the demand exactly equals the supply. For a wide class of agent utilities, including bandwidth allocation utilities, an equilibrium is guaranteed to exist~\cite{arrow_existence_1954}.\footnote{Specifically, an equilibrium is guaranteed to exist as long agent utilities are continuous, quasi-concave, and non-satiated. The full Arrow-Debreu model also allows for agents to enter to market with goods themselves and not only money; the necessary conditions on utilities are slightly more complex in that setting.} The seminal work of Eisenberg and Gale showed that for linear prices and a large class of agent utilities (including bandwidth allocation), the market equilibria correspond exactly to the allocations maximizing Nash welfare~\cite{eisenberg_aggregation_1961, eisenberg_consensus_1959}.\footnote{The conditions for the correspondence between Fisher market equilibria and Nash welfare are slightly stricter than those for market equilibrium existence, but are still quite general. Sufficient criteria were given in \cite{eisenberg_aggregation_1961} and generalized slightly by \cite{jain_eisenberggale_2010}.} Furthermore, the prices are equal to the optimal Lagrange multipliers in the convex program for maximizing Nash welfare (the Eisenberg-Gale convex program).

Recently, \cite{goel_beyond_2018} extended this model to allow nonlinear prices, where the cost of a good may be any nondecreasing function of the quantity purchased. These functions are called \emph{price curves}. They showed that for bandwidth allocation, for any $\rho \in (-\infty, 1)$, there exist price curves that make every maximum CES welfare allocation a market equilibrium. Furthermore, these prices take a natural form: the cost of purchasing $x \in \bbrpos$ of good $j$ is $g_j(x) = q_j x^{1-\rho}$, for some nonnegative constants $q_1 \dots q_m$. Interestingly, for $\rho = 0$ -- which denotes Nash welfare -- this function form reduces to a linear price $q_j$, and we know that linear pricing maximizes Nash welfare. Furthermore, $q_1 \dots q_m$ are the optimal Lagrange multipliers in the convex program for maximizing CES welfare.

Trading post with linear bid constraints ($\sum_j b_{ij} \le 1$) can be thought of as a market game equivalent of the Fisher market model: it implements Nash welfare (\cite{branzei_nash_2017} proved a $1-\ep$ approximation, but we will strengthen this to exact implementation), and the implicit trading post prices (the aggregate bids) are equal to the Fisher market equilibrium prices. Our augmented trading post, with bid constraint $\sum_j f_j(b_{ij}) \le 1$, can be thought of as a market game equivalent of the price curves model. The augmented trading post mechanism we use to implement CES welfare will use $f_j(b) = b^{1-\rho}$ for each good $j$, further strengthening this analogy.

%%%% WELFARE FUNCTIONS
%\paragraph{Welfare functions.}

%A \emph{cardinal welfare function} is a function which encapsulates a societal value system by assigning a real number to each possible outcome. This concept was first proposed in 1938~\cite{bergson_1938_reformulation}, and further developed by~\cite{samuelson_1947_foundations}. For brevity, these are often referred to as just ``welfare functions". Various welfare functions have been proposed, the most well-studied being utilitarian welfare (the sum of utilities), Nash welfare (the product of utilities)~\cite{nash_bargaining_1950, kaneko_nash_1979}, and maxmin welfare (the minimum agent utility)~\cite{rawls_1971_theory, sen_1976_welfare, sen_1977_social}. The class of CES welfare functions was first proposed by~\cite{atkinson_1970_measurement} and further discussed by \cite{blackorby_1978_measures}, although under a different name. 

%%%%% BANDWIDTH ALLOCATION
\paragraph{Bandwidth allocation.}

Bandwidth allocation has been studied both with and without monetary payments; we focus on the later setting, following the model of Kelly et al.~\cite{kelly_1998_rate}. Although it has been known that different marking schemes (such as RED and CHOKe~\cite{floyd_1993_random, pan_2000_choke}) and versions of TCP lead to different objective functions (eg. \cite{padhye_1998_modeling}), a market-based understanding was developed only for Nash Welfare, starting with the pioneering work of Kelly et al.~\cite{kelly_1998_rate}. Furthermore, the market scheme of Kelly et al. is in the price-taking setting; the only strategic market analysis of bandwidth allocation that we are aware of is the $1-\ep$ approximation of Nash welfare due to~\cite{branzei_nash_2017}.

%%%%% ROUTING
\paragraph{Routing games.}

A related topic is that of \emph{routing games}. In a routing game, each agent has a fixed source and destination in the network, but chooses which path she uses to get there. Each agent incurs a cost for each link she travels over, and the cost each agent pays is typically nondecreasing function of the total traffic over that link. Each agent wishes to minimize the total cost she incurs by strategically choosing which path to follow. In the standard bandwidth allocation model, each agent has a fixed path, and her goal is to maximize the total amount of flow she is able to send from her source to her destination (which is equal to the minimum bandwidth she receives among links in her path). Instead of choosing which path to follow, each agent's strategy is how she bids (or more generally, how she interacts with the allocation mechanism). For an overview of routing games, see~\cite{roughgarden_selfish_2005}.

%%%%% IMPLEMENTATION
\paragraph{Implementation theory.}

Implementation theory is the study of designing mechanisms whose outcomes coincide with some desirable social choice rule. A social choice rule could be the maximization of a cardinal function, such as a CES welfare function, or something else, such as the set of Pareto optimal allocations. A full survey is outside the scope of this paper; we direct the interested reader to \cite{maskin_2002_book}.

The ``outcome" of a mechanism is not really well-defined; we need to specify a \emph{solution concept}. The solution concept that we focus on for most of this paper is Nash equilibrium. Possibly the most crucial result regarding implementation in Nash equilibrium (Nash implementation, for short) is due to Maskin~\cite{maskin_nash_1999}, who identified a necessary condition for Nash implementation, and a partial converse. He showed that in a very general environment (much broader than bandwidth allocation), any Nash-implementable social choice rule must satisfy what he calls \emph{monotonicity}. Monotonicity, in combination with a property called \emph{no veto power}, is sufficient for Nash implementation. In Section~\ref{sec:no-veto-power}, we show that CES welfare functions do not satisfy no veto power, and so cannot be Nash-implemented by Maskin's approach.

\subsection{Our results}\label{sec:results}

\renewcommand{\arraystretch}{1.6}
\begin{table*}
\centering
\begin{tabular}{ c"c|c|c} 
  & $\rho = -\infty$ & $\rho \in (-\infty, 1)$, & $\rho = 1$\\ 
  \thickhline
Nash-implementable? & \cmark $\ $ (Thm.~\ref{thm:maxmin-nash})
 		& \cmark $\ $ (Thm.~\ref{thm:ces}) 
		& \textbf{?}\\
 \hline
DSE-implementable? & \cmark $\ $(Thm.~\ref{thm:maxmin})
 	 & \xmark $\ $(Thm.~\ref{thm:ces-not-sp}) & \xmark $\ $(Thm.~\ref{thm:ces-not-sp})\\
\end{tabular}
% \vspace{.05 in}
\caption{A summary of our main implementation results. Here $\rho = -\infty$ denotes maxmin welfare, $\rho \in (-\infty, 1)$ includes Nash welfare as $\rho = 0$, and $\rho = 1$ denotes utilitarian welfare. DSE stands for ``dominant strategy equilibrium". ``\cmark" indicates that the type of implementation specified by the row is possible for the social choice rule specified by the column, while ``\xmark" indicates that we give a counterexample, and ``\textbf{?}" indicates an open question.}
\label{tbl:results-table}
\end{table*}
\renewcommand{\arraystretch}{1}

Our results fall into two categories, both summarized by Table~\ref{tbl:results-table}.

%%%% TRADING POST
\paragraph{Nash-implementing CES welfare functions.}

We view the Nash implementation of CES welfare functions by trading post as our main result (Theorem~\ref{thm:ces}). For each $\rho \in (-\infty, 1)$, we define an augmented trading post mechanism with a nonlinear bid constraint of $\sum_j b_{ij}^{1-\rho} \le 1$ for each agent $i$.\footnote{The reader may notice that for $\rho = 0$ -- which corresponds to Nash welfare -- this constraint reduces to the standard linear constraint of $\sum_j b_{ij} \le 1$, which is what we should expect: we know from~\cite{branzei_nash_2017} that trading post with the linear constraint leads to good Nash welfare.} We denote this mechanism by $\atp(\rho)$. We show that $\atp(\rho)$ has at least one Nash equilibrium, and that all of its Nash equilibria maximize CES welfare. 

Our result improves that of \cite{goel_beyond_2018} by strengthening their price curve equilibrium (which assumes agents are not strategic) to a strategic equilibrium, and improves that of \cite{branzei_nash_2017} by generalizing from just Nash welfare to all CES welfare functions (except $\rho = 1$) and strengthening their $1-\ep$ approximation to exact implementation.\footnote{It is worth noting that the result of \cite{branzei_nash_2017} holds for Leontief utilities, a generalization of bandwidth allocation utilities.} Furthermore, because the price curve equilibria can be computed in polynomial time~\cite{goel_beyond_2018}, our Nash equilibria can also be computed in polynomial time.

Our proof makes use of the following results (stated informally):
\begin{enumerate}
\item Theorem~\ref{thm:tp-to-pc}: Any Nash equilibrium of $\atp$ can be converted into an ``equivalent" price curve equilibrium.
\item Theorem~\ref{thm:pc-to-tp}: Any price curve equilibrium can be converted into an ``equivalent" Nash equilibrium of $\atp$.
\item Lemma~\ref{lem:ces-price-curves}~\cite{goel_beyond_2018}: If $\x$ is a maximum CES welfare allocation, then there exist price curves $\g$ of the form $g_j(x) = q_j x^{1-\rho}$ such that $(\x, \g)$ is a price curve equilibrium.
\item Lemma~\ref{lem:ces-sufficient}~\cite{goel_beyond_2018}: If $(\x, \g)$ is a price curve equilibrium and each $g_j$ has the form $g_j(x) = q_j x^{1-\rho}$, then $\x$ is a maximum CES welfare allocation.
\end{enumerate}
Lemmas~\ref{lem:ces-price-curves} and \ref{lem:ces-sufficient} together imply that $\x$ is a maximum CES welfare allocation if and only if it is a price curve equilibrium with respect to some price curves $\g$ of the form $g_j(x) = q_j x^{1-\rho}$ (where $q_1\dots q_m$ are nonnegative constants). Theorems~\ref{thm:tp-to-pc} and \ref{thm:pc-to-tp} allow us to convert between price curve equilibria and Nash equilibria of $\atp$, and thus enable us to apply Lemmas~\ref{lem:ces-price-curves} and \ref{lem:ces-sufficient} to the Nash equilibria of $\atp(\rho)$. Specifically, Theorem~\ref{thm:tp-to-pc} in combination with Lemma~\ref{lem:ces-sufficient} will show that any Nash equilibrium of $\atp(\rho)$ maximizes CES welfare, and Theorem~\ref{thm:pc-to-tp} in combination with Lemma~\ref{lem:ces-price-curves} will show that $\atp(\rho)$ has at least one Nash equilibrium. 

Section~\ref{sec:reduction} is devoted to proving our reduction between price curve equilibrium and Nash equilibria of trading post: Theorems~\ref{thm:tp-to-pc} and \ref{thm:pc-to-tp}. This reduction is the main tool we use to Nash-implement CES welfare maximization. Section~\ref{sec:ces} then uses this reduction, in combination with Lemmas~\ref{lem:ces-price-curves} and \ref{lem:ces-sufficient}, to prove our main theorem: Theorem~\ref{thm:ces}.

Our trading post approach breaks down for $\rho = -\infty$ and $\rho = 1$. We are able to Nash-implement $\rho = -\infty$ by a different mechanism (see below), but we were not able to resolve whether $\rho = 1$ is Nash-implementable. We leave this as an open question.

%%% DSE AND MAXMIN
\paragraph{Results for dominant strategy implementation and maxmin welfare.}

A natural question is whether these results can be improved from Nash implementation to implementation in dominant strategy equilibrium (DSE). Section~\ref{sec:dse} shows that the answer is mostly no: for any $\rho \in (-\infty, 1]$, there is no mechanism which DSE-implements CES welfare maximization (Theorem~\ref{thm:ces-not-sp}). We do this by showing that there is no strategyproof mechanism for this problem: the revelation principle tells us that DSE-implementability implies strategyproofness, so impossibility of strategyproofness implies impossibility of DSE implementation.

On the positive side, we show that maxmin welfare ($\rho = -\infty$) can in fact be DSE-implemented by a simple revelation mechanism (Theorem~\ref{thm:maxmin}). This is actually stronger than strategyproofness: strategyproofness requires truth-telling to be \emph{a} DSE, but does not rule out the possibility of additional dominant strategy equilibria that are not optimal. In contrast, DSE implementation requires \emph{every} DSE to be optimal.

Although every DSE is also a Nash equilibrium, DSE-implementability does \emph{not} imply Nash-implementability \cite{dasgupta_impl_1979}. A DSE implementation requires every DSE to be optimal, but there could be Nash equilibria (which are not dominant strategy equilibria) that are not optimal. This means that Theorem~\ref{thm:maxmin} does not imply Nash-implementability of maxmin welfare. In fact, our revelation mechanism which DSE-implements maxmin welfare is not a Nash implementation: there exist Nash equilibria which are not optimal (see Section~\ref{sec:maxmin-not-nash} for an example). Our last result of Section~\ref{sec:dse} is that there is a different mechanism which does Nash-implement maxmin welfare (Theorem~\ref{thm:maxmin-nash}).\footnote{The mechanism for Theorem~\ref{thm:maxmin-nash} is unrelated to trading post: our trading post approach breaks down for both maxmin welfare and utilitarian welfare. This is because $g_j(x) = q_j x^{1-\rho}$ is not a valid price curve when $\rho \to -\infty$ or when $\rho = 1$.}

The rest of the paper is structured as follows. Section~\ref{sec:model} formally defines the models of bandwidth allocation, price curves, trading post, and implementation theory. Section~\ref{sec:reduction} presents our reduction between price curves and our augmented trading post mechanism. In Section~\ref{sec:ces}, we use this reduction to Nash-implement CES welfare maximization for $\rho \in (-\infty, 1)$. Finally, Section~\ref{sec:dse} handles DSE-implementation and maxmin welfare.

\section{Model}\label{sec:model}

Let $N = \{1,2,\ldots n\}$ be a set of agents, and let $M = \{1,2,\dots m\}$ be a set of divisible goods, each representing a link in a network. Throughout the paper, we use $i$ and $k$ to refer to agents and $j$ and $\ell$ to refer to goods. Let $s_j$ denote the available supply of good $s_j$. The social planner needs to determine an \emph{allocation} $\x \in \bbr_{\ge 0}^{n\times m}$, where $x_i \in \bbr_{\ge 0}^m$ is the \emph{bundle} of agent $i$, and $x_{ij} \in [0,s_j]$ is the quantity of good $j$ allocated to agent $i$. An allocation cannot allocate more than the available supply: $\x$ is a valid allocation if and only if $\sum_i x_{ij} \leq s_j$ for all $j$.

Agent $i$'s utility for a bundle $x_i$ is denoted by $u_i(x_i)  \in \bbr_{\ge 0}$. We assume agents have \emph{bandwidth allocation utilities}, which take the form:
\[
u_i(x_i) = \min_{j \in R_i} x_{ij}
\]
where $R_i$ is the set of links that agent $i$ requires. We assume that $R_i \ne \emptyset$ for all $i$, i.e., each agent desires at least one good. It will sometimes be useful to define \emph{weights} $w_{ij}$ where $w_{ij} = 1$ if $j \in R_i$, and 0 otherwise.

Just as agents have utilities over the bundles they receive, we can imagine a social planner who wishes to design a mechanism to maximize some societal welfare function $\Phi(\x)$. One can think of $\Phi$ as the social planner's utility function, which takes as input the agent utilities, instead of a bundle of goods. The most well-studied welfare functions are the \emph{maxmin welfare} $\Phi(\x) = \min_{i \in N} u_i(x_i)$, the \emph{Nash welfare} $\Phi(\x) = \big(\prod_{i \in N} u_i \big)^{1/n}$, and the \emph{utilitarian welfare} $\Phi(\x) = \sum_{i \in N} u_i(x_i)$. As discussed in Section~\ref{sec:intro}, these three welfare functions can be generalized by a CES welfare function:
\[
\Phi_\rho(\x) = \Big(\sum\limits_{i \in N} u_i(x_i)^{\rho}\Big)^{1/\rho}
\]
where $\rho$ is a constant in $(-\infty, 0) \cup (0, 1]$. The limits as $\rho \to -\infty$ and $\rho \to 0$ yield maxmin welfare and Nash welfare, respectively. Throughout the paper, we will use $\rho = -\infty$ and $\rho = 0$ to denote maxmin welfare and Nash welfare (e.g., `` This theorem holds for $\rho \in (\infty, 1)$" would include Nash welfare but not maxmin welfare).

For $\rho \ne 1$, this function is strictly concave in $u_1(x_1)\dots u_n(x_n)$, so every optimal allocation $\x$ has the same utility vector.\footnote{There could be multiple optimal allocations, however. For example, consider one agent who desires two goods with supply $s_1$ and $s_2 > s_1$. The agent's optimal utility will be $s_1$, but we can either allocate the rest of the second good anyway, or leave some unallocated; the utility is unaffected.}

%%%%%%%%%%%%%%%%%%%%%%%%%%%%%
\subsection{Price curves}\label{sec:price curves}

Price curves, introduced by \cite{goel_beyond_2018}, generalize the well-studied Fisher market model. In a Fisher market~\cite{brainard_how_2005}, each good is available for sale and each agent enters the market with a fixed budget she wishes to spend. Each good $j$ has a price $p_j$, and the cost of purchasing $x \in \bbrpos$ of good $j$ is $p_j \cdot x$. Price curves allow the cost of a good to be any nondecreasing function of the quantity purchased. When each price curve $g_j$ is defined by $g_j(x) = p_j\cdot x$, this reduces to the Fisher market model. We do not consider strategic behavior in the price curves model; instead, we use this model as a tool for analyzing the Nash equilibria of our augmented trading post.

Like the Fisher market model, the price curves model assumes that agents have no value for leftover money; this will imply that each agent always spends her entire budget. Throughout the paper, we will assume that all agents have the same budget, and normalize all budgets to 1 without loss of generality.  In this paper, we will assume that price curves are either strictly increasing, or identically zero (denoted $g_j \equiv 0$)\footnote{This assumption is not made in \cite{goel_beyond_2018}, but is helpful for our purposes.}. We also assume that price curves are normalized $(g_j(0) = 0$) and continuous.

Informally, a \emph{price curve equilibrium} assigns a price curve $g_j: \bbrpos \to \bbrpos$ to each good $j$ so that the agents' demand equals supply. Formally, for price curves $\g = (g_1\dots g_m)$, the cost of a bundle $x_i$ is
\[
\C_\g(x_i) = \sum_{j \in M} g_j(x_{ij})
\]
and the \emph{demand set} $D_i(\g)$ is the set of agent $i$'s favorite affordable bundles:
\[
D_i(\g) = \argmax\limits_{x_i \in \bbr^m_{\geq 0}:\ C_\g(x_i) \leq 1} u_i(x_i)
\]
If $g_j$ is strictly increasing for all $j \in M$, an agent with bandwidth allocation utility will only purchase goods in her set $R_i$, and will purchase the exact same quantity of each. When $g_j \equiv 0$, any agent can add more of that good at no additional cost: this cannot improve her utility, but it also cannot hurt it. This complication with zero-price goods is discussed more in Section~\ref{sec:price-zero}.

A price curve equilibrium (PCE) $(\x, \g)$ is an allocation $\x$ and price curves $\g$ such that
\begin{enumerate}
\item Each agent receives a bundle in her demand set: $x_i \in D_i(\g)$.
\item The market clears: for all $j \in M$, $\sum_{i \in N} x_{ij} \leq s_j$, and $\sum_{i \in N} x_{ij} = s_j$ whenever $g_j \not\equiv 0$.\footnote{The definition given in \cite{goel_beyond_2018} omits this condition, because that paper is also interested in equilibria where the supply is not exhausted. Such equilibria cannot be optimal for our purposes, so we disallow them in our definition.}
\end{enumerate}
The second condition states that the demand never exceeds the supply, and that any good whose supply is not completely exhausted must have a price of zero. This implies that no agent has utility for the leftover goods: otherwise she would simply buy more at no additional cost. 

%%%%%%%%%%%%%%%%%%%%%%%%%%%%%
\subsection{The trading post mechanism}

In the standard trading post mechanism, each agent $i$ places a \emph{bid} $b_i \in \bbrpos^m$, where $b_{ij} \in \bbrpos$ is the amount $i$ bids on good $j$. Each agent $i$ must obey the constraint $\sum_{j \in M} b_{ij} \le 1$. We use $\B \in \bbrpos^{m\times n}$ to represent the matrix of all bids.

Each agent receives a fraction of the good in proportion to the fraction of the total bid on that good. Formally,
\[
x_{ij} = \frac{b_{ij}}{\sum_{k \in N} b_{kj}} \cdot s_j
\]
As in the Fisher market model, we assume that agents have no value for leftover money. The aggregate bid on good $j$ is $\sum_{k \in N} b_{ij}$, and can be thought of as the ``price" of good $j$: in fact, this analogy will be crucial in our proofs.

We augment the standard trading post mechanism in two ways. The first is necessary in order to ensure the existence of equilibrium when goods have price zero, and the second is to extend this mechanism to implement CES welfare functions beyond Nash welfare.

%%%%%%%%
\subsubsection{Handling goods with price zero}\label{sec:price-zero}

In Fisher markets, it is possible for some goods to have price zero. This occurs when that good is not the ``rate-limiting factor", i.e., there is enough of that good for everyone and the supply constraint is not tight. This is a problem for standard trading post: in order to receive any amount of good $j$, agent $i$ must bid $b_{ij} > 0$. But if the supply constraint is not tight in the Fisher market setting, there will be at least one agent receiving more of the good than they need. Such an agent will decrease their bid so that she is only receiving what she needs. However, this process will continue infinitely, with agents repeatedly decreasing their bids on this good, but never reaching bid 0. 

To handle this, we present the following modified allocation rule. We allow an additional special bid of $\beta$ so that $b_{ij} \in \bbrpos \cup \{\beta\}$. Conceptually, a bid of 0 indicates that the agent actually does not want the good; bidding $\beta$ indicates that the agent desires the good, but is hoping to get it for free, so to speak. We treat $\beta$ as zero in arithmetic, for example, in the constraint $\sum_{j \in M} b_{ij} \le 1$. Similarly, we interpret $b_{ij} > 0$ to mean $b_{ij} \not\in \{0,\beta\}$.

Our modified allocation rule follows this series of steps:
\begin{enumerate}
\item If at least one agent bids a positive (i.e., neither 0 nor $\beta$) amount on good $j$, we follow the standard trading post rule: $x_{ij} = \mfrac{b_{ij}}{\sum_{k \in N} b_{ij}}s_j$.
\item However, if all agents bid 0 either or $\beta$ on good $j$, then we allow each agent to have as much good $j$ as they want. Specifically, for any agent $i$ with $b_{ij} = \beta$, let $\ell_i$ be an arbitrary good with $b_{i\ell_i} > 0$. Then we allocate $x_{ij} = x_{i\ell_i}$. For completeness, if there is no good $\ell$ with $b_{i\ell} > 0$ (although this will never happen at equilibrium), we set $x_{ij} = 0$.  For agents $i$ bidding 0 on good $j$, we set $x_{ij} = 0$.
\item After following the above steps, for any good $\ell$ where $\sum_{i \in N} x_{i\ell} > s_\ell$ (violating the supply constraint), for all $i\in N$ bidding $\beta$ on good $\ell$, we set $x_{ij} = 0$ for all $j\in M$ as a penalty. In words, if so many agents try to get good $j$ for free that the supply constraint is violated, they are all penalized by receiving nothing. Not to worry: this will never happen at equilibrium.
\end{enumerate}
This modification will allow us to simulate a good having price zero.

\todo{Should I put this in pseudocode?}

It is important that we allow separate bids of 0 and $\beta$. Consider a good $j$ where $b_{kj} \in \{0,\beta\}$ for all $k \in N$. Suppose some agent $i$ does not need good $j$, and bidding $\beta$ would cause the supply constraint to be violated and the Step 3 penalty to be invoked. Such an agent can bid 0 on good $j$, which allows her to still spend no money on this good, without the possibility of invoking the Step 3 penalty.

%%%%%%%%
\subsubsection{Allowing nonlinear constraints}

It will turn out that trading post with the standard constraint of $\sum_{j \in M} b_{ij} \le 1$ implements Nash welfare. To implement other CES welfare functions, let $\f = (f_1\dots f_m)$ be nondecreasing functions from $\bbrpos$ to $\bbrpos$. We call $\f$ the \emph{constraint curves}. Like price curves, we assume that each $f_j$ is continuous and normalized. Unlike price curves, we require each $f_j$ to be strictly increasing: $f_j \equiv 0$ is not allowed. Throughout the paper, we will use $\f, \fprime$ to denote constraint curves and $\g, \gprime$ to denote price curves.

We define the mechanism $\atp(\f)$ as follows. Given bids $\B = (b_1\dots b_n) \in \bbrpos^{n\times m}$, $\atp(\f)$ allocates each good $j$ according to the three-step allocation rule described in the previous section. However, each agent's bid constraint is now
\[
\sum_{j \in M} f_j(b_{ij}) \le 1
\]
We can define $C_\f(b_i)$ like we defined $C_\g(x_i)$ for price curves $\g$ and a bundle $x_i$. Specifically, $C_\f(b_i) = \sum_{j \in M} f_j(b_{ij})$. Thus each agent's constraint is $C_\g(x_i) \le 1$ in the price curves model, and is $C_\f(b_i) \le 1$ in the trading post model.

The most natural case will be when $f_1\dots f_m$ are all the same function. In particular, let $\atp(\rho)$ be the mechanism where $f_j(b) = b^{1-\rho}$ for all $j \in M$. In general, we will use $\atp(\f, \B)$ to denote the allocation $\x$ produced by the mechanism $\atp(\f)$ when agents bid $\B \in \bbrpos^{n\times m}$.

%%%%%%%%%%%%%%%%%%%%%%%%%%%%%%
\subsection{Implementation theory}\label{sec:impl}

This section covers only the basic concepts of implementation theory; we direct the reader to \cite{maskin_2002_book} for a broad overview of this area.

A \emph{social choice rule} $\Psi$ takes as input a utility profile $\bfu = u_1\dots u_n$ and returns a set of ``optimal" outcomes. In our case, $\Psi$ will represent maximizing a CES welfare function. Define $\Psi_\rho(\bfu)$ by
\[
\Psi_\rho(\bfu) = \argmax_{\x \in \bbr_{\ge 0}^{n\times m}} \Big(\sum_{i \in N} u_i(x_i)^\rho\Big)^{1/\rho}
\]
In general, a social choice rule need not express the maximization of any cardinal function.

Let $\calC$ be a solution concept (e.g., Nash equilibrium), $H$ be a mechanism (sometimes called a ``game form"), and $H(\bfu)$ be the induced game for utility profile $\bfu$.\footnote{In general, the difference between a game and a mechanism is that the game definition includes the agent utilities, whereas a mechanism does not.} Let $\calC(H(\bfu))$ be the set of strategy profiles\footnote{A strategy profile is a list of strategies $S_1\dots S_n$, where $S_i$ is the strategy played by agent $i$. For trading post, a strategy is $b_i \in \bbrpos^m$, and a strategy profile is $\B \in \bbrpos^{m\times n}$.} satisfying $\calC$ for that game. For example, if $\calC$ denotes Nash equilibrium, then $\calC(H(\bfu))$ would be the set of Nash equilibria of the game $H(\bfu)$. To distinguish between equilibrium strategies (e.g., what agents bid) and equilibrium outcomes (e.g., the resulting allocation), we use $\calC_X(H(\bfu))$ to denote the set of outcomes resulting from strategy profiles satisfying $\calC$.

\begin{definition}\label{def:impl}
A mechanism $H$ \emph{implements} a social choice rule $\Psi$ if for any utility profile $\bfu$,
\[
\emptyset \neq \calC_X(H(\bfu)) \subseteq \Psi(\bfu)
\]
\end{definition}

Using the running example of Nash equilibrium, $H$ Nash-implements $\Psi$ if for any utility profile $\bfu$, there is at least one Nash equilibrium, and every Nash equilibrium of $H(\bfu)$ results in an outcome that is optimal under $\Psi$. We denote the set of Nash equilibria of $H(\bfu)$ by $NE(H(\bfu))$, and the set of outcomes resulting from some Nash equilibrium by $NE_X(H(\bfu))$. When only a single utility profile $\bfu$ is under consideration, we will frequently leave $\bfu$ implicit and write $NE(H)$.

It is worth noting that some of the literature refers to Definition~\ref{def:impl} as \emph{weak implementation}, where \emph{full implementation} requires that $\calC_X(H(\bfu)) = \Psi(\bfu)$, i.e., every outcome that is optimal under $\Psi$ should be a Nash equilibrium outcome of $H(\bfu)$. We feel that this distinction is not important in our case, since the utility vector in $\Psi_\rho(\bfu)$ is unique (with the exception of $\rho = 1$, which we do not Nash implement anyway): thus allocations $\x \in \Psi_\rho(\bfu)$ differ only in what they do with leftover supply, i.e., supply that will not affect anyone's utility. If one truly cared about this distinction, our augmented trading post mechanism could be further augmented by allowing each agent another special bid that indicated how much of the leftover supply they wanted. Since these special bids would not affect the utilities, the Nash equilibrium utilities would not be affected, and there would be a combination of leftover supply bids that achieves any maximum CES welfare allocation.\footnote{We would also need to include another penalty step if the leftover supply bids lead to a supply constraint being violated.}

We remind the reader of the following standard definitions:
\begin{enumerate}
\item Nash equilibrium: a strategy profile where no agent can strictly improve her utility by unilaterally changing her strategy. We consider only pure Nash equilibria, i.e., we do not allow randomized strategies.
\item Dominant strategy: a strategy that is optimal regardless of what other agents do.
\item Dominant strategy equilibrium (DSE): a strategy profile where each agent plays a dominant strategy.
\item Strategyproofness: A revelation mechanism (i.e., a mechanism that asks each agent to report her utility function) is \emph{strategyproof} if telling the truth is a dominant strategy for every agent.
\end{enumerate}

DSE-implementability implies strategyproofness via the revelation principle\footnote{See Chapter 9 of \cite{nisan_2007_algorithmic} for an introduction to the revelation principle.}, but it is \emph{not} generally true that any strategyproof social choice rule is DSE-implementable. Strategyproofness ensures that truth-telling is \emph{a} dominant strategy equilibrium, but there could also be bad equilibria that are not consistent with $\Psi$.

By definition, every DSE is also a Nash equilibrium. However, it is \emph{not} generally true that DSE-implementability implies Nash-implementability~\cite{dasgupta_impl_1979}. DSE-implementability requires that every DSE of the mechanism be optimal under $\Psi$, but the mechanism might have additional Nash equilibria (that are not dominant strategy equilibria) that are not consistent with $\Psi$. We will need to take both this and the previous paragraph into account when studying DSE implementation.

We now move on to our results, beginning with our reduction between price curves and $\atp$. This reduction will be the main tool we use to show that $\atp$ Nash-implement CES welfare maximization.

\section{Reduction between price curves and augmented trading post}\label{sec:reduction}

In this section, we show that any equilibrium of our augmented trading post mechanism can be transformed into a price curve equilibrium, and vice versa. Section~\ref{sec:ces} will use this result (along with the existence of price curve equilibria maximizing CES welfare, due to \cite{goel_beyond_2018}) to prove that the $\atp(\rho)$ mechanism Nash-implements CES welfare maximization.

Section~\ref{sec:intuition} describes the intuition behind the reduction. Section~\ref{sec:eq-cond} presents some useful necessary and sufficient conditions for price curve equilibrium and trading post Nash equilibrium. Section~\ref{sec:tp-to-pc} shows that any trading post Nash equilibrium can be transformed into a price curve equilibrium (Theorem~\ref{thm:tp-to-pc}), and Section~\ref{sec:pc-to-tp} shows that any price curve equilibrium can be transformed into a trading post Nash equilibrium (Theorem~\ref{thm:pc-to-tp}).

\subsection{Intuition behind the reduction}\label{sec:intuition}

First, notice that augmented trading post and price curves have similar-looking constraints: $\sum_{j \in M} f_j(b_{ij}) \le 1$ and $\sum_{j \in M} g_j(x_{ij}) \le 1$. If $\f = \g$, these constraints become identical, so $b_i$ is a feasible bid if and only if $x_i$ is a feasible purchase subject to price curves $\g$. Suppose that $(\x, \g)$ is a price curve equilibrium. For now, assume each $g_j$ is strictly increasing (the formal proof will also handle the possibility of $g_j\equiv 0$). Let $\xprime$ be the outcome of $\atp(\f)$ when agents bid $\B$ (i.e., $\xprime = \atp(\f, \B)$), and suppose that $b_{ij} = x_{ij}$ for all $i,j$: then
\[
x'_{ij} = \frac{b_{ij}}{\sum_{k \in N} b_{kj}} s_j = \frac{x_{ij}}{\sum_{k \in N} x_{ij}} s_j = x_{ij}
\]
where the last equality uses the fact that $\sum_{k \in N} x_{ij} = s_j$ when $(\x, \g)$ is a PCE and $g_j \not\equiv 0$.

Thus the allocation resulting from $\atp(\f)$ under bids $\B$ is in fact $\x$. Furthermore, since $(\x, \g)$ is a price curve equilibrium, each agent exhausts her price curve constraint: $C_\g(x_i) = 1$. Since $\f = \g$ and $\B = \x$, this implies that $C_\f(b_i) = 1$ for all $i \in N$. Furthermore, in any price curve equilibrium with all nonzero prices, each agent should be spending exclusively on goods in her set $R_i$, and purchasing them in equal amounts. Thus in the trading post outcome $\xprime$, each agent $i$ also also spending exclusively on $j \in R_i$ and acquiring them in equal amounts. 

We claim that $\B$ is a Nash equilibrium of $\atp(\f)$. Suppose the opposite: then there must exist an agent $i$ and an alternate bid $b'_i$ such that bidding $b'_i$ instead of $b_i$ increases her utility. Thus under $b'_i$, she receives strictly more of all goods in $R_i$. But this means that she must be bidding strictly more on each of these goods, which would violate her bid constraint, since $C_\f(b_i) = 1$ is already tight. Therefore $\B$ must be a Nash equilibrium of $\atp(\f)$. 

The above is an informal proof of one direction of the reduction: transforming price curve equilibria into trading post equilibria. Similarly, if we are given a Nash equilibrium $\B$ of $\atp(\f)$, we can let $\g = \f$ (actually, $\g$ will be a scaled version of $\f$) and $\x = \atp(\f, \B)$, and use the same intuition to show that $(\x, \g)$ is a price curve equilibrium.

There are several additional complications. The largest of these is dealing with goods that have price zero in $\g$; indeed, this is the issue that prevents vanilla trading post from implementing Nash welfare maximization~\cite{branzei_nash_2017}. Another difficulty is that in trading post, what you bid depends on others' bids (whereas for price curves, it only depends on $\g$). However, due to the nature of bandwidth allocation utilities, agents will always purchase in proportion to their weights $w_{ij}$, and the outcomes at equilibrium will correspond. We will end up with the following two theorems:

\begin{restatable}{theorem}{thmTpToPc}
\label{thm:tp-to-pc}
Let $\f$ be constraint curves where each $f_j$ is homogenous of degree $\alpha_j$ for some $\alpha_j > 0$. For bids $\B \in NE(\atp(\f))$, define nonnegative constants $a_1\dots a_m$ by $a_j = (\sum_{k \in N} b_{kj}/s_j)^{\alpha_j}$. Define price curves $\g$ by
\[ 
g_j(x) = \begin{cases}
0 & \text{ if } b_{ij}\in \{0,\beta\}\ \forall i \in N\\
a_j f_j(x) & \text{ otherwise} 
\end{cases}
\]
Let $\x = \atp(\f, \B)$. Then $(\x,\g)$ is a price curve equilibrium.
\end{restatable}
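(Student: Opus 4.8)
The plan is to verify the two defining conditions of a price curve equilibrium for the pair $(\x,\g)$ constructed in the statement: (1) each agent receives a bundle in her demand set under $\g$, and (2) the market clears, with supply fully exhausted on every good whose price curve is not identically zero. I would split the analysis according to whether a good $j$ has a positive bid from some agent or not, since the definition of $g_j$ and the allocation rule both branch on this.

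First I would handle the bookkeeping. For a good $j$ with some positive bid, the allocation rule gives $x_{ij} = b_{ij} s_j / \sum_k b_{kj}$, so $\sum_i x_{ij} = s_j$ — market clearing with equality holds automatically, which matches $g_j \not\equiv 0$. For a good $j$ where all bids are in $\{0,\beta\}$, we have $g_j \equiv 0$, so we only need $\sum_i x_{ij} \le s_j$; I would invoke the fact (asserted in the paper) that the Step 3 penalty is never triggered at equilibrium, so the supply constraint is respected. I also need the key scaling identity: with $a_j = (\sum_k b_{kj}/s_j)^{\alpha_j}$ and $f_j$ homogeneous of degree $\alpha_j$, one checks that $g_j(x_{ij}) = a_j f_j(x_{ij}) = a_j f_j\!\big(b_{ij} s_j/\sum_k b_{kj}\big) = f_j(b_{ij})$, using homogeneity to pull the scalar $s_j/\sum_k b_{kj}$ out to the $\alpha_j$ power, which is exactly $1/a_j$. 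Hence $C_\g(x_i) = \sum_{j} g_j(x_{ij}) = \sum_j f_j(b_{ij}) = C_\f(b_i)$ over the positively-bid goods, and the zero-price goods contribute nothing to either side.

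The substantive step is condition (1): $x_i \in D_i(\g)$. Feasibility ($C_\g(x_i) \le 1$) follows from the identity above plus $C_\f(b_i) \le 1$. For optimality I would argue by contradiction: suppose agent $i$ has an affordable bundle $y_i$ with $u_i(y_i) > u_i(x_i)$. Since $g_j$ is strictly increasing exactly on the positively-bid goods and zero elsewhere, I can assume $y_i$ only differs from $x_i$ by increasing amounts of goods in $R_i$ (adding zero-price goods is free but, being outside a min over $R_i$, cannot raise utility; increasing non-$R_i$ positively-priced goods only wastes budget). Then on each $j \in R_i$ that has a positive aggregate bid, $y_{ij} > x_{ij}$ translates — via the strictly increasing map between bids and allocations on that good — into a bid $b'_{ij} > b_{ij}$ with $f_j(b'_{ij}) = g_j(y_{ij})$; on $j \in R_i$ with $g_j \equiv 0$ the agent was already getting as much as she wanted for free at equilibrium, so $y_{ij}$ can be matched at zero cost. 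This yields a feasible bid $b'_i$ (its $\f$-cost equals the $\g$-cost of $y_i$, which is $\le 1$) that gives agent $i$ strictly more of every good in $R_i$ than $\B$ did, hence strictly higher utility — contradicting $\B \in NE(\atp(\f))$.

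**The main obstacle** I anticipate is the careful handling of zero-price goods and the special bid $\beta$: I must be sure that when $g_j \equiv 0$ an agent can indeed realize an arbitrarily large $y_{ij}$ at zero $\f$-cost (this is exactly what the $\beta$ bid and the three-step allocation rule are designed for, and why "the penalty is never invoked at equilibrium" must be used rather than just asserted), and that no agent at equilibrium is being denied a good in $R_i$ that she could obtain for free. A secondary technical point is justifying the "monotone correspondence between bids and allocations on a single good": increasing $b_{ij}$ while holding $b_{kj}$ ($k\ne i$) fixed strictly increases $x_{ij}$ and ranges over $(x_{ij}, s_j)$, so any target $y_{ij} \in (x_{ij}, s_j)$ is reachable — and $y_{ij} \ge s_j$ is impossible since $x_i$ already clears the market only when summed over all agents, so I should note $y_{ij} < s_j$ whenever some other agent holds a positive share, which holds because every agent in $R_j$'s support is bidding positively. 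These edge cases are where the proof earns its keep; the algebra via homogeneity is routine once set up.
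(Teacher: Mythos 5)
Your bookkeeping steps are fine: the homogeneity identity $g_j(x_{ij}) = a_j f_j(b_{ij} s_j/\sum_k b_{kj}) = f_j(b_{ij})$ is exactly the cost-equivalence the paper also uses, and market clearing on positively-bid goods plus the ``penalty is never invoked at equilibrium'' fact (which follows easily because every agent can secure positive utility by bidding positively everywhere, while a penalized agent gets zero) are handled the same way. The genuine gap is in your demand-set argument, specifically the claimed identity $f_j(b'_{ij}) = g_j(y_{ij})$ for the deviating bid. In trading post the agent is not a price taker: if she raises her bid on good $j$ while the others' bids $B_{-i,j} = \sum_{k \ne i} b_{kj}$ stay fixed, the aggregate bid (the implicit price) rises with her. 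The bid needed to reach an allocation $y_{ij}$ is $b'_{ij} = y_{ij}B_{-i,j}/(s_j - y_{ij})$, and a short computation shows $b'_{ij} = y_{ij}\sum_k b_{kj}/s_j$ (which is what your identity requires, since $g_j(y_{ij}) = f_j(y_{ij}\sum_k b_{kj}/s_j)$) holds only when $y_{ij} = x_{ij}$; for $y_{ij} > x_{ij}$ one gets $b'_{ij} > y_{ij}\sum_k b_{kj}/s_j$, so $f_j(b'_{ij}) > g_j(y_{ij})$. Hence the deviation you build may well violate the bid constraint, its $\f$-cost is not bounded by $C_\g(y_i) \le 1$, and no contradiction with $\B \in NE(\atp(\f))$ follows. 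Affordability under $\g$ simply does not transport to feasibility of an upward deviation in trading post, so the central step of your optimality argument fails.

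The paper avoids this entirely by never constructing a deviating bid inside this proof. It first characterizes Nash equilibria of $\atp(\f)$ (its Lemma 3.1): at equilibrium the budget is exactly exhausted, $C_\f(b_i) = 1$, and $x_{ij} = w_{ij}u_i(x_i)$ on every positively-bid good (no overshooting). Your proposal only ever uses $C_\f(b_i) \le 1$; the equality, which follows because an agent with slack could raise all bids slightly, is what makes the argument work. With it, the homogeneity identity gives $C_\g(x_i) = 1$, and then the comparison with alternative bundles happens purely in the price-curve world (the paper's Lemma 3.2): any bundle with $u_i(y_i) > u_i(x_i)$ must satisfy $y_{ij} > x_{ij} = u_i(x_i)$ on every positively-priced $j \in R_i$, and since $x_i$ already spends its whole budget on exactly those goods, $C_\g(y_i) > 1$, i.e., $y_i$ is unaffordable. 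This sidesteps both your cost identity and the awkward question of matching zero-price goods (where, note, Step 2 ties the free allocation to $x_{i\ell_i}$ rather than letting the agent take an arbitrary amount). Your flagged concerns about $\beta$-bids and the Step-3 penalty are real but minor and fixable; the deviation-cost identity is the step that would need to be replaced by the tightness-plus-price-curve comparison above.
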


\begin{restatable}{theorem}{thmPcToTp}
\label{thm:pc-to-tp}
Let $h$ be any constraint curve. Let $(\x, \g)$ be a price curve equilibrium, and define $\f$ and $\B$ by
\[ 
f_j(b) = \begin{cases}
h(b) & \text{ if } g_j \equiv 0\\
g_j(b) & \text{ otherwise} 
\end{cases}
\quad \quad \quad
b_{ij} = \begin{cases}
\beta & \text{ if } g_j \equiv 0 \text{ and } j \in R_i\\
0 & \text{ if } g_j \equiv 0 \text{ and } j \not\in R_i\\
x_{ij} & \text{ otherwise} 
\end{cases}
\] 
Then $\B$ is a Nash equilibrium of $\atp(\f)$.
\end{restatable}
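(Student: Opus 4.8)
The plan is to check directly that the constructed profile $\B$ is a Nash equilibrium of $\atp(\f)$, which amounts to three things: (i) $\B$ is feasible, i.e.\ $C_\f(b_i)\le 1$ for every agent $i$; (ii) running the three-step allocation rule of $\atp(\f)$ on $\B$ reproduces the price-curve-equilibrium allocation $\x$, so agent $i$ receives utility exactly $u_i(x_i)$; and (iii) no agent can deviate to a bid yielding strictly more than $u_i(x_i)$. Throughout I will use the structural facts about price curve equilibria established in Section~\ref{sec:eq-cond}: in a PCE $(\x,\g)$ we may assume each agent $i$ spends only on goods in $R_i$, buys each of them in the common quantity $u_i(x_i)>0$, requires at least one strictly-priced good (otherwise the free required goods would make her utility unbounded and her demand set empty), and exhausts her budget on her strictly-priced required goods, so $\sum_{j\in R_i,\ g_j\not\equiv 0} g_j(u_i(x_i)) = 1$.

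For (i): write $C_\f(b_i) = \sum_{j:\, g_j\equiv 0} f_j(b_{ij}) + \sum_{j:\, g_j\not\equiv 0} g_j(x_{ij})$. On zero-priced goods $b_{ij}\in\{0,\beta\}$ is treated as $0$ and $f_j=h$ is normalized, so the first sum vanishes; the second equals $\sum_{j\in R_i,\, g_j\not\equiv 0} g_j(u_i(x_i)) = 1$. Thus $C_\f(b_i)=1$; I will reuse the fact that this constraint is tight. For (ii): if $g_j\not\equiv 0$ then $\sum_k x_{kj}=s_j>0$ by market clearing, so some agent bids positively and Step~1 applies, giving $x_{ij}' = \frac{x_{ij}}{\sum_k x_{kj}}s_j = x_{ij}$. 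If $g_j\equiv 0$ then all bids on $j$ are $0$ or $\beta$, so Step~2 applies: an agent $i$ with $j\in R_i$ bids $\beta$ and is allocated $x_{i\ell_i}'$ for some $\ell_i$ with $b_{i\ell_i}>0$, and such $\ell_i$ exists because $i$ requires a strictly-priced good $\ell$ with $b_{i\ell}=u_i(x_i)>0$; since Step~1 already gave $x_{i\ell_i}'=x_{i\ell_i}=u_i(x_i)$, we get $x_{ij}'=u_i(x_i)=x_{ij}$. Agents with $j\notin R_i$ bid $0$ and receive $0=x_{ij}$. Finally $\sum_i x_{ij}' = \sum_i x_{ij}\le s_j$ for every $j$, so the Step~3 penalty never triggers; hence $\atp(\f,\B)=\x$ and each agent's utility under $\B$ is $u_i(x_i)$.

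For (iii): suppose agent $i$ changes her bid to $b_i'$ with $C_\f(b_i')\le 1$, yielding a bundle $\z$ with $u_i(\z)>u_i(x_i)$, so $z_j>u_i(x_i)$ for every $j\in R_i$. Fix a strictly-priced $j\in R_i$; she must bid $b_{ij}'>0$ there (else $z_j=0<u_i(x_i)$), and since the other agents still bid $x_{kj}$, summing to $s_j-u_i(x_i)$, we have $z_j = b_{ij}'s_j/(b_{ij}'+s_j-u_i(x_i))\le s_j$. A direct computation shows $z_j>u_i(x_i)$ forces $b_{ij}'>u_i(x_i)$ (if $u_i(x_i)=s_j$ this is already a contradiction with $z_j\le s_j$), and that in that regime $z_j<b_{ij}'$, so $g_j(z_j)<g_j(b_{ij}')$. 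Now define $\y$ by $y_j=z_j$ for $j\in R_i$ and $y_j=0$ otherwise; then $u_i(\y)=u_i(\z)>u_i(x_i)$, while
\[
C_\g(\y) \;=\; \sum_{j\in R_i,\, g_j\not\equiv 0} g_j(z_j) \;<\; \sum_{j\in R_i,\, g_j\not\equiv 0} g_j(b_{ij}') \;\le\; \sum_{j\in M} f_j(b_{ij}') \;=\; C_\f(b_i') \;\le\; 1,
\]
using $f_j=g_j$ on strictly-priced goods and $f_j\ge 0$ elsewhere. This contradicts $x_i\in D_i(\g)$, since $u_i(x_i)$ is the largest utility affordable under $\g$ at cost at most $1$. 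Hence $\B$ admits no profitable deviation and $\B\in NE(\atp(\f))$.

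I expect step (iii) to be the main obstacle, and within it the inequality $g_j(z_j)<g_j(b_{ij}')$ for strictly-priced required goods combined with transferring the trading-post budget bound ($C_\f$) to a price-curve budget bound ($C_\g$) via the identification $f_j=g_j$ — this is exactly the mechanism by which a nonlinear bid constraint mimics a price curve, so it is the crux of the argument. The zero-price goods are largely bookkeeping, but require care so that the Step~2/Step~3 branches behave as claimed; in particular, one must know that every agent always has a positively-bid good to copy from at $\B$, which is precisely why a PCE cannot leave an agent requiring only free goods.
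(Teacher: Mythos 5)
Your proof is correct, and all three parts are carefully argued. The main structural difference from the paper is that the paper factors the entire argument through Lemma~\ref{lem:tp-eq}, a necessary-and-sufficient characterization of Nash equilibria of $\atp(\f)$ (tight budget plus proportional consumption on positively-bid goods), and then simply checks its two conditions via a chain of lemmas (\ref{lem:u-positive-pc}--\ref{lem:cost-pc}); you instead verify the no-profitable-deviation condition from first principles. Within that step, the paper's Lemma~\ref{lem:tp-eq} concludes purely inside the trading post model: increasing utility forces $b'_{ij}>b_{ij}$ on all positively-bid goods, which pushes $C_\f(b'_i)$ above $1$. You take a slightly longer route: after deriving $b'_{ij}>u_i(x_i)$ on strictly-priced required goods (and $z_j<b'_{ij}$), you construct a hypothetical bundle $\y$ that is affordable under $\g$ yet better than $x_i$, contradicting $x_i\in D_i(\g)$. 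This leverages the price-curve demand-set optimality explicitly, which is a nice illustration of the ``trading post mimics a price-curve market'' intuition, though strictly speaking you could have stopped one step earlier: $b'_{ij}>u_i(x_i)$ and $f_j=g_j$ on strictly-priced goods already give $C_\f(b'_i)\ge\sum_{j\in R_i,\ g_j\not\equiv 0}g_j(b'_{ij})>\sum_{j\in R_i,\ g_j\not\equiv 0}g_j(u_i(x_i))=1$ directly, without $\y$ or the inequality $z_j<b'_{ij}$. Both routes are valid, and your careful treatment of the $\beta$-bid branches (the free goods, the Step~3 penalty, and the corner case $u_i(x_i)=s_j$) matches the bookkeeping the paper does in Lemmas~\ref{lem:u''} and~\ref{lem:x'-pc}, including the implicit but necessary observation that every agent in a PCE requires at least one strictly-priced good.
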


%%%%%%%%%%%%
\subsection{Equilibrium conditions for price curves and trading post}\label{sec:eq-cond}

%%% trading post eq conditions
Recall that $w_{ij} = 1$ if $j \in R_i$, and 0 otherwise. The following lemma for trading post states a useful necessary and sufficient condition for Nash equilibria of $\atp(\f)$.

\begin{restatable}{lemma}{lemTpEq}
\label{lem:tp-eq}
Let $\x = \atp(\f, \B)$. Then $\B \in NE(\atp(\f))$ if and only if all of the following hold:
\begin{enumerate}
\item For all $i \in N$, $x_{ij} = w_{ij} u_i(x_i)$ for all $j \in M$ where there exists $k \in N$ with $b_{kj} > 0$.
\item For all $i \in N$, $C_\f(b_i) = 1$.
\end{enumerate}
\end{restatable}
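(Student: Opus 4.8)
The plan is to prove both directions of the characterization, working from the allocation rule and the structure of bandwidth allocation utilities. Throughout, recall that $u_i(x_i) = \min_{j \in R_i} x_{ij}$, so an agent only benefits from a good $j$ insofar as $j \in R_i$, and she benefits only from the good(s) achieving the minimum.

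\textbf{($\Leftarrow$) The two conditions imply $\B$ is a Nash equilibrium.} Assume conditions 1 and 2 hold, and suppose for contradiction that some agent $i$ has a profitable deviation $b_i'$ yielding allocation $x_i'$ with $u_i(x_i') > u_i(x_i)$. First I would argue that condition 1 already pins down agent $i$'s utility: since $x_{ij} = w_{ij} u_i(x_i)$ on every good with a positive aggregate bid, and goods where all agents bid $0$ or $\beta$ are handled by the three-step rule (where, at such a profile, if $j \in R_i$ then the step-2 rule gives $x_{ij} = x_{i\ell_i} = u_i(x_i)$, matching $w_{ij}u_i(x_i)$), agent $i$ currently gets exactly $u_i(x_i)$ on each good in $R_i$ and $0$ elsewhere. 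To strictly increase her utility she must raise $x_{ij}$ strictly above $u_i(x_i)$ for \emph{every} $j \in R_i$. For each such $j$, increasing $x_{ij}$ requires increasing $b_{ij}$ — I would check this case by case against the allocation rule: if some other agent bids positively on $j$, then $x_{ij} = \frac{b_{ij}}{\sum_k b_{kj}} s_j$ is strictly increasing in $b_{ij}$ (holding others fixed); if no other agent bids positively, she needs $b_{ij} > 0$ to get anything under step 1, or a large enough positive bid to exceed her current share. In every case a strict increase in $x_{ij}$ forces a strict increase in $f_j(b_{ij})$ (using that $f_j$ is strictly increasing), hence $C_\f(b_i') > C_\f(b_i) = 1$, violating feasibility. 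One subtlety to handle: the deviation could involve switching some $b_{ij}$ to $\beta$; but $\beta$ counts as $0$ in $C_\f$, and a $\beta$-bid on $j$ only yields positive $x_{ij}$ when all agents bid $0$ or $\beta$ on $j$, in which case the agent's allocation on $j$ is tied to her allocation on some other good $\ell$ with $b_{i\ell} > 0$ — so she still cannot get ``extra'' bandwidth for free on all of $R_i$ simultaneously without exceeding her budget on the goods she does bid positively on. Also the step-3 penalty can only hurt her. This completes this direction.

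\textbf{($\Rightarrow$) A Nash equilibrium satisfies the two conditions.} Suppose $\B \in NE(\atp(\f))$ with $\x = \atp(\f,\B)$. For condition 2, I would show that if $C_\f(b_i) < 1$ for some $i$, the agent has a profitable deviation (or at least a weakly profitable one leading to a contradiction with equilibrium being about \emph{strict} improvement — here I need to be slightly careful). Actually the right argument: if $C_\f(b_i) < 1$, agent $i$ has leftover budget; I claim she can strictly increase $x_{ij}$ on every $j \in R_i$ by a small uniform bump in her positive bids, which strictly raises $\min_{j\in R_i} x_{ij}$. The only obstruction is if $u_i(x_i)$ is already ``maxed out'' in some structural sense — but bandwidth utilities are unbounded in each coordinate until supply runs out, and if supply on some $j \in R_i$ is exhausted by others, bumping her bid still shifts supply toward her, strictly increasing $x_{ij}$. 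So $C_\f(b_i) = 1$. For condition 1: on a good $j$ with some $b_{kj} > 0$, if $j \notin R_i$ then $b_{ij} > 0$ would be wasteful — she could move that bid mass onto goods in $R_i$ and strictly improve — so $b_{ij} \in \{0,\beta\}$ and thus $x_{ij} \in \{0, \text{tied value}\}$; I'd need to rule out the $\beta$-tie case giving positive value, which holds because $j$ has a positive bidder so step 2 doesn't apply and $x_{ij} = 0$. If $j \in R_i$, then $x_{ij} \ge u_i(x_i)$ always (it's one of the terms in the min), and if $x_{ij} > u_i(x_i)$ strictly, she is over-investing in $j$: she could shave $b_{ij}$ down slightly (keeping $x_{ij} \ge u_i(x_i)$) and redistribute the freed budget to raise the bottleneck good(s), strictly improving — contradiction. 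Hence $x_{ij} = u_i(x_i) = w_{ij}u_i(x_i)$ for all $j \in R_i$, and combined with the $j \notin R_i$ case we get condition 1.

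\textbf{Main obstacle.} The delicate part is the bookkeeping around the special bid $\beta$ and the three-step allocation rule: the clean ``increasing $x_{ij}$ requires increasing $f_j(b_{ij})$'' intuition must be verified against all branches of the allocation rule (positive bidders present vs.\ absent, $\beta$ vs.\ $0$, and the step-3 penalty), and I must confirm that no deviation exploiting $\beta$-for-free behavior circumvents the budget constraint on all of $R_i$ at once. I expect to isolate this in a short sub-argument showing that at any profile, an agent's utility is bounded by a function of her budget spent on $R_i$, with free ($\beta$) goods never being the unique bottleneck in a way she can exploit. The rest is a routine redistribution-of-budget argument exploiting strict monotonicity of the $f_j$ and the min-structure of the utility.
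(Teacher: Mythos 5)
Your proposal follows the same route as the paper: tie the equilibrium characterization to a tight budget constraint, so that any strictly improving deviation would require strictly increasing $f_j(b_{ij})$ on every good where agent $i$ currently bids positively (impossible since each $f_j$ is strictly increasing and $C_\f(b_i)=1$), and conversely show that slack budget or over-allocation $x_{i\ell}>w_{i\ell}\,u_i(x_i)$ on a good with a positive bidder can be exploited. On the $\beta$-subtlety you flag as the main obstacle, the paper has a cleaner resolution than your anchor-good argument: if $b_{ij}>0$ while every $k\ne i$ bids $0$ or $\beta$ on $j$, Step 1 already gives agent $i$ the entire supply $s_j$, so $u_i(x_i)=s_j$ is the maximum utility she could ever attain and no deviation can strictly improve — this immediately dispenses with the case where she might replace a positive bid with $\beta$.
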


\begin{proof}
$(\implies)$ Assume that the two conditions of the lemma are true. First, we claim that $b_{ij} \in \{0,\beta\}$ for all $j \not \in R_i$: agent $i$ only spends money on goods in $R_i$. This is because $w_{ij} u_i(x_i) = 0$ for $j \not\in R_i$, but $b_{ij} > 0$ ensures that $x_{ij} > 0$, so $x_{ij} = w_{ij} u_i(x_i)$ would be impossible. Therefore $C_\f(b_i) = \sum_{j \in M} f_j(b_{ij}) = \sum_{j \in R_i} f_j(b_{ij}) = 1$.

Now suppose that $\B$ is not a Nash equilibrium: then there exists an agent $i$ and bid $b_i'$ such that $u_i(x_i') > u_i(x_i)$, where $\xprime$ is the resulting allocation when agent $i$ bids $b_i'$ and every agent $k \ne i$ still bids $b_k$. Condition 1 implies that $x_{ij} = u_i(x_i)$ for all $j \in R_i$ with $b_{ij} > 0$ (since $w_{ij} = 1$ for $j \in R_i$). Since $b_{ij} > 0$ only when $j \in R_i$, we have $x_{ij} = u_i(x_i)$ whenever $b_{ij} > 0$. Thus $u_i(x_i') > x_{ij}$ when $b_{ij} > 0$. Since $x_{ij}' \ge u_i(x_i')$ for all $j \in R_i$, we have $x'_{ij} > x_{ij}$ when $b_{ij} > 0$.

We next claim that $b'_{ij} > b_{ij}$ whenever $b_{ij} > 0$. If there exists $k \ne i$ with $b_{kj} > 0$, then $b'_{ij} > b_{ij}$ is necessary to ensure that $x'_{ij} > x_{ij}$. The only other possibility is that $b_{ij} > 0$, but $b'_{ij} = \beta$, and $b_{kj} \in \{0,\beta\}$ for all $k \ne i$. But in this case, following Step 1 of $\atp$'s allocation rule, $x_{ij} = s_j$. Then $u_i(x_i) = s_j$. This is the highest utility agent $i$ could ever have, since $u_i(x_i) \le s_j$ for all $j \in R_i$. This contradicts $u_i(x_i') > u_i(x_i)$. We conclude that $b'_{ij} > b_{ij}$ whenever $b_{ij} > 0$.

Therefore, since each $f_j$ is strictly increasing,
\[
\sum_{j \in M} f_j(b_{ij}) = \sum_{j: b_{ij} > 0} f_j(b_{ij}) < \sum_{j: b_{ij} > 0} f_j(b'_{ij}) \le \sum_{j \in M} f_j(b'_{ij})
\]
Since $C_\f(b_i) = \sum_{j \in M} f_j(b_{ij}) = 1$ by assumption, we have $\sum_{j \in M} f_j(b'_{ij}) > 1$. This means that $b_i'$ violates the bid constraint, and so is not a valid bid. Therefore $\B$ is a Nash equilibrium.

$(\impliedby)$ Suppose that $\B$ is a Nash equilibrium of $\atp(\f)$. If $C_\f(b_i) > 1$, $b_i$ violates the supply constraint, so $\B$ cannot be a Nash equilibrium. If $C_\f(b_i) < 1$, agent $i$ can improve her utility by bidding slightly more on every good (and thus receiving slightly more of every good). Thus $C_\f(b_i) = 1$ must hold. 

Suppose $x_{i\ell} \ne w_{i\ell} u_i(x_i)$ for some $\ell \in M$ where there exists $k \in N$ with $b_{k\ell} > 0$. By definition of $u_i$, $u_i(x_i) w_{i\ell} > x_{i\ell}$ is impossible, so we must have $x_{i\ell} > w_{i\ell} u_i(x_i)$. Consider a new bid $b_i'$ where $b'_{ij} = b_{ij}$ for all $j \ne \ell$, but $b'_{ij}$ is such that $x_\ell = w_{i\ell} u_i(x_i)$ (where $\xprime$ is the resulting allocation when $i$ bids $b_i'$ and each $k \ne i$ bids $b_k$). Thus $b'_{i\ell} < b_{i\ell}$.

By definition of $u_i$, we have $u_i(x_i') = u_i(x_i)$, but $C_\f(b'_i) < C_\f(b_i) = 1$, since $f_j(b'_{ij}) \le f_j(b_{ij})$ for all $j \in M$, and $f_\ell(b'_{i\ell}) < f_\ell(b_{i\ell})$. Thus there must exist a bundle $b_i''$ with $b_{ij}'' > b_{ij}'$ for all $j$, but $C_\f(b''_i) \le 1$, i.e., $b_i''$ obeys the bid constraint. Furthermore, let $\mathbf{x''}$ be the resulting allocation when $i$ bids $b''_i$ and each $k \ne i$ bids $b_k$: then $x''_{ij} > x'_{ij}$ for all $j \in M$. Therefore $u_i(x_i'') > u_i(x_i') = u_i(x_i)$. But this means $\B$ cannot be a Nash equilibrium, which is a contradiction.
\end{proof}

Next, we give an analogous lemma for price curve equilibrium. Note that the last condition in Lemma~\ref{lem:pc-eq} is simply one of the conditions in the definition of PCE.

%%% price curve eq conditions
\begin{restatable}{lemma}{lemPcEq}
\label{lem:pc-eq}
An allocation $\x$ and price curves $\g$ are a PCE if and only if all of the following hold:
\begin{enumerate}
\item For all $i \in N$, $x_{ij} = w_{ij} u_i(x_i)$ whenever $g_j \not\equiv 0$.
\item For all $i \in N$, $C_\g(x_i) = 1$.
\item For all $j \in M$, $\sum_{i \in N} x_{ij} \le s_j$, and $\sum_{i \in N} x_{ij} = s_j$ whenever $g_j\not \equiv 0$.
\end{enumerate}
\end{restatable}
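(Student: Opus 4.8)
The plan is to reduce both directions to the behavior of the single scalar function $\phi_i(t) := \sum_{j \in R_i^+} g_j(t)$, where $R_i^+ := \{j \in R_i : g_j \not\equiv 0\}$ is the set of agent $i$'s non-free required goods. The key structural observation, used both ways, is that conditions 1 and 2 together say exactly that agent $i$ spends her whole unit budget buying $u_i(x_i)$ units of each good in $R_i^+$ and zero of every other non-free good, i.e. $\phi_i(u_i(x_i)) = 1$; and since each $g_j$ with $j\in R_i^+$ is (by assumption) continuous and strictly increasing, $\phi_i$ is continuous and strictly increasing whenever $R_i^+\neq\emptyset$, hence injective. The whole proof is then monotonicity/continuity bookkeeping, with the only real care needed around the free goods ($g_j\equiv 0$).

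\emph{Forward direction.} Suppose $(\x,\g)$ is a PCE. Condition 3 is literally the market-clearing clause of the PCE definition, so nothing is needed there. Fix an agent $i$. First I would observe $R_i^+\neq\emptyset$: if every required good were free, then from $x_i$ one could raise every coordinate in $R_i$ by $1$ at zero extra cost, strictly increasing $u_i$ and contradicting $x_i \in D_i(\g)$. Next, with $u^* := u_i(x_i)$, consider the trimmed bundle $y_i$ defined by $y_{ij} = u^*$ for $j \in R_i$ and $y_{ij} = 0$ otherwise. Since $x_{ij} \ge u^*$ for all $j \in R_i$ and each $g_j$ is nondecreasing with $g_j(0)=0$, we get $C_\g(y_i) = \sum_{j \in R_i^+} g_j(u^*) \le \sum_{j\in R_i^+} g_j(x_{ij}) \le C_\g(x_i) \le 1$, and $u_i(y_i) = u^*$, so $y_i \in D_i(\g)$ as well. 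Now if any of the equalities asserted by conditions 1 and 2 failed for $x_i$ --- i.e. $x_{ij} > u^*$ for some $j \in R_i^+$, or $x_{ij} > 0$ for some non-free $j \notin R_i$, or $C_\g(x_i) < 1$ --- then at least one of the above inequalities would be strict and $y_i$ would have positive budget slack; using continuity of the $g_j$ and $R_i^+\neq\emptyset$, one can spend that slack to increase every coordinate of $y_i$ in $R_i$ (those in $R_i^+$ by a small positive amount, the free ones arbitrarily), obtaining an affordable bundle of utility $> u^*$ and contradicting $y_i \in D_i(\g)$. Hence conditions 1 and 2 hold.

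\emph{Reverse direction.} Assume conditions 1--3. Condition 3 supplies the market-clearing clause of the PCE, so it remains only to show $x_i \in D_i(\g)$ for each $i$. By condition 1, $g_j(x_{ij}) = 0$ for every $j \notin R_i^+$ (free goods contribute $0$; non-free $j\notin R_i$ have $x_{ij}=0$ and $g_j(0)=0$), so condition 2 reads $\phi_i(u_i(x_i)) = \sum_{j \in R_i^+} g_j(u_i(x_i)) = 1$; in particular $R_i^+ \neq \emptyset$. Now take any affordable bundle $y_i$ and put $v := u_i(y_i)$, so $y_{ij} \ge v$ for all $j \in R_i \supseteq R_i^+$. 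Monotonicity of the $g_j$ and nonnegativity of cost give $1 \ge C_\g(y_i) \ge \sum_{j \in R_i^+} g_j(y_{ij}) \ge \phi_i(v)$, hence $\phi_i(v) \le 1 = \phi_i(u_i(x_i))$; since $\phi_i$ is strictly increasing this forces $v \le u_i(x_i)$, i.e. $u_i(y_i) \le u_i(x_i)$. Thus $x_i$ maximizes utility over the affordable set, so $x_i \in D_i(\g)$, and $(\x,\g)$ is a PCE.

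The step I expect to be the main obstacle is getting the free-good handling exactly right in the forward direction: one must verify that a free good never serves as the binding coordinate of $u_i$ at a demand-set point (it could otherwise be increased for free), that free goods contribute nothing to cost so that the budget equation collapses to $\phi_i(u_i(x_i))=1$, and --- crucially --- that the demand set is nonempty at all, which is precisely the statement $R_i^+\neq\emptyset$. Once those points are nailed down, both directions are short.
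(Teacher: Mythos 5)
Your proof is correct and rests on the same essential ingredients as the paper's: strict monotonicity of the non-free price curves, continuity for the small-perturbation step, and a budget comparison between $x_i$ and a competing bundle. The organizational difference is that in the direction (demand set $\Rightarrow$ conditions 1, 2) the paper runs two separate arguments --- first establishing $C_\g(x_i)=1$ by a slack-then-buy-more argument, then establishing condition 1 by trimming the excess on any offending good --- whereas you collapse both into a single trimming step: pass to the canonical bundle $y_i$ and observe that any violation of condition 1 or 2 makes one of the three inequalities in $C_\g(y_i)\le 1$ strict, producing slack. In the other direction, your packaging via the scalar $\phi_i$ makes the monotonicity argument crisp, but it is exactly the comparison the paper carries out term by term. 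One small point worth making explicit, which you do and the paper leaves implicit: $R_i^+\neq\emptyset$ must be verified (you get it from $x_i\in D_i(\g)$ in the forward direction and from $\phi_i(u_i(x_i))=1>0$ in the reverse), since otherwise the demand set is empty and the strict-inequality step in both proofs would collapse. Net: same proof, slightly tidier bookkeeping.
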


\begin{proof}
The third condition is simply one of the two conditions in the definition of PCE. The other requirement for PCE is that $x_i \in D_i(\g)$ for all $i \in N$, so it suffices to show that $x_i \in D_i(\g)$ if and only if $C_\g(x_i) = 1$ and $x_{ij} = w_{ij} u_i(x_i)$ whenever $g_j\not\equiv 0$.

$(\implies)$ Suppose that $C_\g(x_i) = 1$, and $x_{ij} = w_{ij} u_i(x_i)$ whenever $g_j\not\equiv 0$. We first claim that agent $i$ only spends money on goods in $R_i$. This is because $w_{ij} u_i(x_i) = 0$ for $j \not \in R_i$ (because $w_{ij} = 0$ for $j \not \in R_i$), and spending money implies that $g_j\not\equiv 0$ and $x_{ij} > 0$, which makes $x_{ij} = w_{ij} u_i(x_i)$ impossible. Thus $C_\g(x_i) = \sum_{j \in R_i} g_j(x_{ij}) = \sum_{j \in R_i: g_j\not\equiv 0} g_j(x_{ij})$.

Now suppose for sake of contradiction that there exists another bundle $x'_i$ that is also affordable, and $u_i(x'_i) > u_i(x_i)$. For all $j \in R_i$ with $g_j \not\equiv 0$, we have $x_{ij} = w_{ij} u_i(x_i) = u_i(x_i)$ (because $w_{ij} = 1$ for $j \in R_i$), so $u_i(x'_i) > x_{ij}$ for $j \in R_i$, $g_j \not\equiv 0$. Therefore
\[
C_\g(x_i) = \sum_{j \in R_i: g_j\not\equiv 0} g_j(x_{ij}) < \sum_{j \in R_i: g_j\not\equiv 0} g_j(x'_{ij}) \le \sum_{j \in M} g_j(x'_{ij}) = C_\g(x'_i)
\]
Since, $C_\g(x_i) = 1$, we have $C_\g(x'_i) > 1$. But this implies that $x'_i$ is not affordable, which is a contradiction. Therefore $x_i \in D_i(\g)$.

$(\impliedby)$ Suppose $x_i \in D_i(\g)$. If $C_\g(x_i) > 1$, $x_i$ is not affordable, which is impossible. If $C_g(x_i) < 1$, agent $i$ can improve her utility by purchasing slightly more of every good. Thus $\sum_{j \in M} g_j(x_{ij}) = 1$ must hold. 

Suppose $x_{i\ell} \ne w_{i\ell} u_i(x_i)$ for some $\ell \in M$ where $g_j\not\equiv 0$. By definition, $u_i(x_i) w_{i\ell} > x_{i\ell}$ is impossible, so we must have $x_{i\ell} > w_{i\ell} u_i(x_i)$. Consider a bundle $x'_i$ where $x'_{ij} = x_{ij}$ for all $j \ne \ell$, but $x'_{i\ell} = w_{i\ell} u_i(x_i)$. Then $u_i(x'_i) = u_i(x_i)$. Furthermore, $g_\ell(x'_{i\ell}) < g_\ell(x_{i\ell})$, so $C_\g(x'_i) < C_\g(x_i) \le 1$. Consider another bundle $x''_i$ where $x''_{ij} > x'_{ij}$ for all $j \in M$, but $C_\g(x''_i) \le 1$: this is always possible because each $g_j$ is continuous, and $C_\g(x'_i) < 1$. Then $x''_i$ is affordable, but $u_i(x''_i) > u_i(x'_i) = u_i(x_i)$. This contradicts $x_i \in D_i(\g)$.
\end{proof}

We are now ready to move on to the reduction itself.

\subsection{Transforming trading post equilibria into price curve equilibria}\label{sec:tp-to-pc}

This direction of the reduction will require an additional mild condition, involving the following definition.

\begin{definition}\label{def:homo}
We say that a function $f: \bbrpos \to \bbrpos$ is \emph{homogenous of degree $\alpha > 0$} if for any $b, c \in \bbrpos$, $f(c\cdot b) = c^{\alpha} f(b)$.
\end{definition}

Our main result of this section is the following theorem:

\thmTpToPc*

Before proving Theorem~\ref{thm:tp-to-pc}, we prove several helpful lemmas (Lemmas~\ref{lem:u-positive} -- \ref{lem:cost}). Throughout Lemmas~\ref{lem:u-positive} -- \ref{lem:cost}, we assume $\x, \g$, and $a_1\dots a_m$ are defined as in Theorem~\ref{thm:tp-to-pc}. We also assume that $\B \in NE(\atp(\f))$. Let $\xprime$ be the intermediate allocation after Step 2 of $\atp$'s allocation rule.

%%%%%% u'' > 0
Our first lemma simply states that all agents end up with positive utility.
\begin{lemma}\label{lem:u-positive}
For all $i \in N$, $u_i(x_i) > 0$.
\end{lemma}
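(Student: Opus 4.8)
\textbf{Proof plan for Lemma~\ref{lem:u-positive}.} The goal is to show that at a Nash equilibrium $\B$ of $\atp(\f)$, every agent gets strictly positive utility, i.e.\ $u_i(x_i) = \min_{j \in R_i} x_{ij} > 0$. The natural approach is by contradiction: suppose some agent $i$ has $u_i(x_i) = 0$, which means there is some good $\ell \in R_i$ with $x_{i\ell} = 0$. I would then argue that agent $i$ can profitably deviate, contradicting the Nash equilibrium assumption. Since each $R_i \neq \emptyset$ by the model assumption, this will suffice.

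The key step is constructing the profitable deviation. The idea is that agent $i$ should spread a small amount of budget across \emph{all} goods in $R_i$. Concretely, for a small $\delta > 0$, consider the bid $b'_i$ that puts an equal constraint-cost on each good in $R_i$: for each $j \in R_i$, choose $b'_{ij}$ so that $f_j(b'_{ij}) = 1/|R_i|$ (possible and giving $b'_{ij} > 0$ since each $f_j$ is continuous, strictly increasing, and normalized with $f_j(0)=0$), and $b'_{ij} = 0$ for $j \notin R_i$. This is a feasible bid: $C_\f(b'_i) = \sum_{j \in R_i} f_j(b'_{ij}) = 1$. Now I need to check the resulting allocation. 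For each $j \in R_i$: if some other agent $k \neq i$ bids positively on $j$, then since $b'_{ij} > 0$, Step~1 of the allocation rule gives $x'_{ij} = \frac{b'_{ij}}{\sum_k b'_{kj}} s_j > 0$. If no other agent bids positively on $j$ (all bid $0$ or $\beta$), then since $b'_{ij} > 0$ we are again in Step~1 with agent $i$ the sole positive bidder, so $x'_{ij} = s_j > 0$. Either way $x'_{ij} > 0$ for every $j \in R_i$, so $u_i(x'_i) = \min_{j \in R_i} x'_{ij} > 0 = u_i(x_i)$, contradicting that $\B$ is a Nash equilibrium.

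The one subtlety to handle carefully is Step~3 of the allocation rule: could agent $i$'s deviation trigger the penalty and zero out her allocation? The penalty in Step~3 applies only to agents bidding $\beta$ on an oversubscribed good, and in the deviation above agent $i$ bids a positive amount (not $\beta$) on every good in $R_i$ and $0$ elsewhere, so agent $i$ is never subject to a Step~3 penalty. One must also confirm that the presence of $\beta$-bids by \emph{other} agents doesn't interfere: when agent $i$ switches to a positive bid on good $j$, the allocation for good $j$ falls under Step~1, so any other agents who were bidding $\beta$ on $j$ now receive $\frac{b_{kj}}{\sum b} s_j = 0$ (their $\beta$ counts as $0$), which only affects \emph{their} allocations, not agent $i$'s — and in any case we only need that agent $i$'s allocation on each $j \in R_i$ is positive, which holds. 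I expect this last point — carefully tracking the three-step allocation rule, especially the interaction with $\beta$ and the Step~3 penalty — to be the main (though still mild) obstacle; the existence of the deviating bid itself is routine given continuity and strict monotonicity of the $f_j$.
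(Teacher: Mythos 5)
Your proposal is correct and takes the same approach as the paper: the paper's proof is a terse one-liner observing that any agent can always bid a positive amount on each good and thereby get positive utility, so a Nash equilibrium cannot leave anyone at zero utility. You have simply spelled out the concrete deviating bid and verified feasibility and the behavior of the three-step allocation rule (including that the Step 3 penalty cannot hit an agent who bids strictly positively), which the paper leaves implicit.
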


\begin{proof}
It is always possible for each agent to bid a nonzero amount on each good and obtain nonzero utility. Thus any Nash equilibrium must give each agent nonzero utility.
\end{proof}

%%%%%% x' = x
The following lemma states that the intermediate allocation after Step 2 is in fact the final allocation.
\begin{lemma}\label{lem:x'}
We have $\x = \xprime$.
\end{lemma}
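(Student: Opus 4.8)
The plan is to show that Step~3 of $\atp$'s allocation rule is never triggered at a Nash equilibrium, so that the post-Step-2 allocation $\xprime$ already respects the supply constraint on every good and therefore coincides with the final allocation $\x$.

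First I would observe that Steps~1 and~2 partition the goods: a good $j$ is handled by Step~1 exactly when some agent bids a positive (neither $0$ nor $\beta$) amount on it, and by Step~2 when every agent bids $0$ or $\beta$ on it. For a Step~1 good $j$, the standard trading post rule gives $\sum_{i \in N} x'_{ij} = \sum_{i \in N} \frac{b_{ij}}{\sum_{k \in N} b_{kj}} s_j = s_j$ (recall that $\beta$ counts as $0$ in these sums, so agents bidding $\beta$ contribute $0$ to both numerator and denominator), and $\sum_{k \in N} b_{kj} > 0$ by assumption, so the supply constraint holds with equality and good $j$ cannot trigger Step~3. For a Step~2 good $j$, if every agent bids $0$ then $x'_{ij} = 0$ for all $i$ and again there is no violation; hence the only way the supply bound on a good can be violated after Step~2 is through a good on which at least one agent bids $\beta$.

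Next I would rule out that remaining possibility. Suppose some Step~2 good $\ell$ has $\sum_{i \in N} x'_{i\ell} > s_\ell$. Since any agent bidding $0$ on $\ell$ receives $x'_{i\ell} = 0$, there must be an agent $i$ with $b_{i\ell} = \beta$ and $x'_{i\ell} > 0$; this agent lies in the penalty set of Step~3, so in the final allocation $x_{ij} = 0$ for all $j \in M$, giving $u_i(x_i) = 0$. This contradicts Lemma~\ref{lem:u-positive}, which guarantees $u_i(x_i) > 0$ for every agent at equilibrium. Hence no good's supply bound is violated after Step~2, Step~3 leaves the allocation untouched, and $\x = \xprime$.

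This is a case analysis with no serious obstacle. The only points requiring care are keeping the two uses of the prime notation straight (here $\xprime$ is the intermediate post-Step-2 allocation, not a deviation outcome), checking that Step~1 goods automatically meet the supply bound so that the sole candidate violations come from $\beta$-bids on Step~2 goods, and then invoking Lemma~\ref{lem:u-positive} to close that last case.
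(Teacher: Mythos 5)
Your proof is correct and follows essentially the same route as the paper: show Step~3 is never invoked at equilibrium because any penalized agent would end up with zero utility, contradicting Lemma~\ref{lem:u-positive}. Your extra case analysis (Step~1 goods clear exactly, so only $\beta$-bids on Step~2 goods could violate supply) just fills in details the paper leaves implicit.
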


\begin{proof}
We need to show that Step 3 of $\atp$'s allocation rule is not invoked. Suppose it were invoked: then there is an agent $i$ who ends up with $x_{ij} = 0$ for all $j$, and thus $u_i(x_i) = 0$. But this contradicts Lemma~\ref{lem:u-positive}. We conclude that $\x = \xprime$.
\end{proof}

%%%% cost
Lemma~\ref{lem:cost} states that under these constraint curves and bids, the bid constraint is equivalent to the price curves constraint.
\begin{lemma}\label{lem:cost}
For all $i \in N$, $C_\g(x_i) = C_\f(b_i)$.
\end{lemma}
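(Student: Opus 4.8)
The plan is to prove the identity term by term over the goods. Write $C_\g(x_i) = \sum_{j \in M} g_j(x_{ij})$ and $C_\f(b_i) = \sum_{j \in M} f_j(b_{ij})$, so it suffices to show $g_j(x_{ij}) = f_j(b_{ij})$ for each $j \in M$ and then sum. I would partition $M$ exactly as the definition of $\g$ in Theorem~\ref{thm:tp-to-pc} does: let $M_0$ be the set of goods $j$ with $b_{kj} \in \{0,\beta\}$ for all $k \in N$, and let $M_+ = M \setminus M_0$ be the goods on which some agent places a positive bid.

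For $j \in M_0$ the argument is just bookkeeping around the $\beta$ convention: by definition $g_j \equiv 0$, so $g_j(x_{ij}) = 0$; and since $b_{ij} \in \{0,\beta\}$, which is treated as $0$ in arithmetic, and each $f_j$ is normalized with $f_j(0) = 0$, we also get $f_j(b_{ij}) = 0$. For $j \in M_+$, I would first invoke Lemma~\ref{lem:x'}, which gives $\x = \xprime$, i.e. Step 3 of the allocation rule is not triggered; since some agent bids positively on $j$, the allocation of good $j$ is governed by Step 1, so $x_{ij} = \frac{b_{ij}}{B_j}\, s_j$ where $B_j := \sum_{k \in N} b_{kj} > 0$ (again treating $\beta$ as $0$, which also covers the subcase where agent $i$ herself bids $0$ or $\beta$ on $j$). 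Rewriting $x_{ij} = \frac{s_j}{B_j}\cdot b_{ij}$ and using that $f_j$ is homogenous of degree $\alpha_j$ yields $f_j(x_{ij}) = (s_j/B_j)^{\alpha_j} f_j(b_{ij})$. Since $g_j(x_{ij}) = a_j f_j(x_{ij})$ and $a_j = (B_j/s_j)^{\alpha_j}$, the two homogeneity factors cancel and $g_j(x_{ij}) = f_j(b_{ij})$.

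Summing the term-by-term equalities over all $j \in M$ gives $C_\g(x_i) = C_\f(b_i)$, completing the proof. There is no real analytic obstacle here: the cancellation of the $(B_j/s_j)^{\alpha_j}$ and $(s_j/B_j)^{\alpha_j}$ factors is immediate, and the only points requiring care are the $\beta$-as-zero convention on $M_0$ (and on agents bidding $0$ or $\beta$ within $M_+$) and the appeal to Lemma~\ref{lem:x'} to justify that the clean Step-1 formula for $x_{ij}$ actually holds for goods in $M_+$.
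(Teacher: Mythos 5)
Your proposal is correct and follows essentially the same argument as the paper: invoke Lemma~\ref{lem:x'} so the Step-1 formula $x_{ij} = \frac{b_{ij}}{\sum_k b_{kj}} s_j$ applies on goods with a positive bid, cancel the homogeneity factor against $a_j$, and handle the all-$\{0,\beta\}$ goods via $g_j \equiv 0$ and $f_j(0)=0$. The only cosmetic difference is that you verify $g_j(x_{ij}) = f_j(b_{ij})$ term by term, whereas the paper manipulates the full sums; the content is identical.
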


\begin{proof}
By the allocation rule of $\atp$, for all $j \in M$ where there exists $k \in N$ with $b_{kj} > 0$, for all $i \in N$, we have
\[
x'_{ij} = \frac{b_{ij}}{\sum_{k \in N} b_{kj}} s_j
\]
Lemma~\ref{lem:x'} implies that $\x = \xprime$. Also, since $a_j = (\sum_{k \in N} b_{kj}/s_j)^{\alpha_j}$ we have $s_j /\sum_{k \in N} b_{kj} = a_j^{-1/\alpha_j}$, so
\[
x_{ij} = \frac{b_{ij}}{\sum_{k \in N} b_{kj}} s_j = b_{ij} a_j^{-1/\alpha_j}
\]
whenever there exists $k \in N$ with $b_{kj} > 0$. By the definition of $\g$, $g_j \not\equiv 0$ if and only if there exists $k \in N$ with $b_{kj} > 0$ (since constraint curves are assumed to be strictly increasing). Therefore
\begin{align*}
C_\g(x_i) =&\ \sum_{j \in M} g_j(x_{ij})\\
=&\ \sum_{j: g_j \equiv 0} g_j(x_{ij}) + \sum_{j: g_j \not \equiv 0} g_j(x_{ij})\\
=&\ \sum_{j: g_j \not \equiv 0} a_j f_j(x_{ij})\\
=&\ \sum_{j: g_j \not \equiv 0} a_j f_j(b_{ij} a_j^{-1/\alpha_j})\\
=&\ \sum_{j: g_j \not \equiv 0} \frac{a_j}{a_j} f_j(b_{ij})\\
=&\ \sum_{j: g_j \not \equiv 0} f_j(b_{ij})
\end{align*}
By the definition of $\g$, $g_j \equiv 0$ is equivalent to $b_{kj} \in \{0,\beta\}$ for all $k \in N$. Thus $\sum_{j: g_j \equiv 0} f_j(b_{ij}) = 0$, so
\begin{align*}
\sum_{j: g_j \not \equiv 0} f_j(b_{ij}) =&\ \sum_{j \in M} f_j(b_{ij})\\
=&\ C_\f(b_i)
\end{align*}
as required.
\end{proof}

%%%%%%% FINAL REDUCTION THEOREM
We are now ready to prove the main result of this section.

\thmTpToPc*

\begin{proof}
Since $\B\in NE(\atp(\f))$, we have $C_\f(b_i) = 1$ for all $i \in N$ by Lemma~\ref{lem:tp-eq}. This implies $C_\g(x_i) = 1$ by Lemma~\ref{lem:cost}. Lemma~\ref{lem:tp-eq} also gives us $x_{ij} = w_{ij} u_i(x_i)$ whenever there exists $k \in N$ with $b_{kj} > 0$. As before, $g_j \not\equiv 0$ if and only if there exists $k \in N$ with $b_{kj} > 0$. Therefore $x_{ij} = w_{ij} u_i(x_i)$ whenever $g_j \not\equiv 0$.

Thus in order to apply Lemma~\ref{lem:pc-eq}, we just need to show that $\sum_{i \in N} x_{ij} \le s_j$ for all $j \in M$, and that $\sum_{i \in N} x_{ij} = s_j$ whenever $g_j\not\equiv 0$. Since $\x$ is a valid allocation, we immediately have $\sum_{i \in N} x_{ij} \le s_j$ for all $j \in M$. Consider an arbitrary good $j$ with $g_j \not\equiv 0$: then by the definition of $g_j$, there exists $k \in N$ with $b_{kj} > 0$. Thus good $j$ is allocated according to Step 1 of $\atp$'s allocation rule, and we get
\[
x'_{ij} = \frac{b_{ij}}{\sum_{k \in N} b_{kj}} \cdot s_j
\]
Summing this across agents gives us
\[
\sum_{i \in N} x'_{ij}= \sum_{i \in N} \frac{b_{ij}}{\sum_{k \in N} b_{kj}} \cdot s_j = s_j
\]
Thus by Lemma~\ref{lem:x'}, $\sum_{i \in N} x_{ij} = s_j$, as required. Therefore we can apply Lemma~\ref{lem:pc-eq} and conclude that $(\x, \g)$ is a PCE.
\end{proof}

\subsection{Transforming price curve equilibria into trading post equilibria}\label{sec:pc-to-tp}

Our main result of this section is the following theorem:

\thmPcToTp*

The proof of this theorem is slightly more involved that the proof of Theorem~\ref{thm:tp-to-pc}, but the intuition is the same. As before, we prove this theorem via a series of lemmas (Lemmas~\ref{lem:u-positive-pc} -- \ref{lem:cost-pc}). Let $\xprime = \atp(\f, \B)$ be the final allocation resulting from bids $\B$, and let $\mathbf{x''}$ be the allocation resulting from bids $\B$ after Step 2 of $\atp$'s allocation rule. We use these definitions and assume that $(\x, \g)$ is a PCE for the remainder of Section~\ref{sec:pc-to-tp}.

%%%%%%
As in the other direction of the reduction, our first lemma states that all agents end up with positive utility.
\begin{lemma}\label{lem:u-positive-pc}
For all $i \in N$, $u_i(x_i) > 0$.
\end{lemma}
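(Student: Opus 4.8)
The plan is to mirror the structure of Lemma~\ref{lem:u-positive} from the other direction of the reduction: argue that in any price curve equilibrium, every agent must receive strictly positive utility, because there is always a trivially affordable bundle giving positive utility. First I would note that each agent $i$ has a nonempty required set $R_i$ (assumed in the model), so her utility depends only on the goods she actually needs. Second, I would observe that since each $g_j$ is either identically zero or continuous and strictly increasing with $g_j(0) = 0$, for any target level $t > 0$ the cost $\sum_{j \in R_i} g_j(t)$ is a continuous nondecreasing function of $t$ that equals $0$ at $t = 0$; hence there exists $t > 0$ small enough that $\sum_{j \in R_i} g_j(t) \le 1$. The bundle $x'_i$ that buys exactly $t$ of each good in $R_i$ (and nothing else) is therefore affordable and has $u_i(x'_i) = t > 0$.

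Then the argument concludes: since $x_i \in D_i(\g)$, agent $i$'s bundle maximizes her utility over all affordable bundles, so $u_i(x_i) \ge u_i(x'_i) > 0$. This uses only the first condition in the definition of PCE (each agent receives a bundle in her demand set), not the market-clearing condition, so it applies to the given PCE $(\x, \g)$ without issue.

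I do not anticipate a serious obstacle here — the only subtlety is making sure the affordability argument handles the mixed case where some $g_j$ with $j \in R_i$ are identically zero and others are strictly increasing, but this is immediate since identically-zero curves contribute $0$ to the cost and the continuity/normalization of the strictly increasing ones gives the needed small-$t$ bound. A short, clean proof is:

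\begin{proof}
Fix $i \in N$. Since $R_i \ne \emptyset$, pick any $t > 0$. Define the bundle $x'_i$ by $x'_{ij} = t$ for $j \in R_i$ and $x'_{ij} = 0$ otherwise. Then $C_\g(x'_i) = \sum_{j \in R_i} g_j(t)$. For each $j$ with $g_j \equiv 0$ this term is $0$; for each $j$ with $g_j \not\equiv 0$, $g_j$ is continuous with $g_j(0) = 0$, so $g_j(t) \to 0$ as $t \to 0$. Hence for $t$ sufficiently small, $C_\g(x'_i) \le 1$, i.e., $x'_i$ is affordable, and $u_i(x'_i) = \min_{j \in R_i} x'_{ij} = t > 0$. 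Since $(\x, \g)$ is a PCE, $x_i \in D_i(\g)$, so $u_i(x_i) \ge u_i(x'_i) > 0$.
\end{proof}
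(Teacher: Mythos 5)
Your proposal is correct and is essentially the same argument as the paper's: the paper notes that regardless of the price curves, an agent can always afford a small positive amount of each good she needs (hence positive utility), and since $x_i \in D_i(\g)$ her equilibrium utility must be at least that. You have simply spelled out the small-$t$ affordability step that the paper leaves implicit.
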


\begin{proof}
Regardless of the price curves, it is always possible for each agent to buy a nonzero amount of each good and obtain nonzero utility. Since $x_i \in D_i(\g)$, $x_i$ must give agent $i$ nonzero utility.
\end{proof}

%%%%%% x'' = x
We next claim that for all goods with nonzero price, the intermediate allocation after Step 2 of $\atp(\f)$ is equal to $\x$, the allocation from the price curve equilibrium.
\begin{lemma}\label{lem:x''}
For all $i \in N$, $x''_{ij} = x_{ij}$ whenever $g_j \not\equiv 0$.
\end{lemma}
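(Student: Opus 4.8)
The goal is to show that for all agents $i$ and all goods $j$ with $g_j \not\equiv 0$, the intermediate allocation $x''_{ij}$ after Step~2 of $\atp(\f)$ equals the price-curve-equilibrium allocation $x_{ij}$. The plan is to trace through the allocation rule on such a good $j$ and exploit the definition of the bids $\B$ and constraint curves $\f$ given in Theorem~\ref{thm:pc-to-tp}.

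First I would fix a good $j$ with $g_j \not\equiv 0$. By the construction of $\B$, for every agent $i$ we have $b_{ij} = x_{ij}$ (the ``otherwise'' case, since $g_j \not\equiv 0$), and by the construction of $\f$ we have $f_j = g_j$ on this good. The key sub-claim is that at least one agent bids a positive (i.e., not $0$ and not $\beta$) amount on good $j$, so that Step~1 of the allocation rule applies. To see this: since $(\x,\g)$ is a PCE and $g_j \not\equiv 0$, the market-clearing condition gives $\sum_{i \in N} x_{ij} = s_j > 0$, so some agent $i$ has $x_{ij} > 0$, hence $b_{ij} = x_{ij} > 0$. (One should note $b_{ij}$ is a genuine positive real here, not $\beta$, because $\beta$ is only assigned when $g_j \equiv 0$.) Therefore Step~1 applies to good $j$, giving
\[
x''_{ij} = \frac{b_{ij}}{\sum_{k \in N} b_{kj}}\, s_j = \frac{x_{ij}}{\sum_{k \in N} x_{kj}}\, s_j.
\]
Now I would invoke the market-clearing condition of the PCE once more: since $g_j \not\equiv 0$, $\sum_{k \in N} x_{kj} = s_j$, so the fraction collapses and $x''_{ij} = x_{ij}$, as desired.

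The main obstacle — really the only subtlety — is making sure the special bid $\beta$ does not interfere: one must check that on goods with $g_j \not\equiv 0$ no agent is bidding $\beta$ (true by construction, since $\beta$ is reserved for goods with $g_j \equiv 0$), so that the ``positive bid'' branch of Step~1 is genuinely triggered and the standard proportional rule holds with real arithmetic. Everything else is a one-line substitution using $b_{ij} = x_{ij}$ and the PCE market-clearing equality $\sum_k x_{kj} = s_j$. I would also remark that Lemma~\ref{lem:u-positive-pc} is not strictly needed for this particular lemma but confirms the overall consistency of the construction.
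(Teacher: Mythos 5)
Your proof is correct and follows the same skeleton as the paper's: observe that on a good with $g_j \not\equiv 0$ every bid equals $x_{ij}$ and is a genuine real (not $\beta$), argue that some agent bids positively so Step~1 applies, and then collapse the proportional-share formula using the market-clearing identity $\sum_k x_{kj} = s_j$. The one place you diverge is in establishing that Step~1 is triggered: the paper deduces $x_{kj} > 0$ for some $k$ from Lemma~\ref{lem:u-positive-pc} together with the (unstated) assumption that every good is desired by at least one agent, whereas you deduce it directly from $\sum_k x_{kj} = s_j > 0$. Your route is slightly leaner — it reuses the same market-clearing fact you already need in the final step, dispenses with Lemma~\ref{lem:u-positive-pc} (which you correctly flag as unnecessary here), and avoids the implicit ``every good is required by someone'' assumption. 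Both routes rely on $s_j > 0$, which is standard. Your parenthetical remark that $f_j = g_j$ on this good is harmless but unused, since the allocation rule depends only on the bids.
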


\begin{proof}
When $g_j \not\equiv 0$, $b_{ij} = x_{ij}$. Since each good is required by at least one agent, and $u_i(x_i) > 0$ for all $i$ by Lemma~\ref{lem:u-positive-pc}, there exists $k \in N$ where $x_{kj} > 0$. Therefore $b_{kj} > 0$, so we follow Step 1 of $\atp$'s allocation rule. Thus for all $i \in N$,
\begin{align*}
x''_{ij} =&\ \frac{b_{ij}}{\sum_{k \in N} b_{kj}} s_j\\
=&\ \frac{x_{ij}}{\sum_{k \in N} x_{kj}} s_j
\end{align*}
Since $(\x, \g)$ is a PCE, Lemma~\ref{lem:pc-eq} gives us $\sum_{k \in N} x_{kj} = s_j$ whenever $g_j\not\equiv 0$. Therefore
\[
x''_{ij} = \frac{x_{ij}}{\sum_{k \in N} x_{kj}} s_j = x_{ij}
\]
as required.
\end{proof}

%%%%% b_{ij} > 0
The next lemma states that for all goods where some agent is bidding a positive amount, every agent's bundle in $\mathbf{x''}$ matches up exactly with her weights and her utility for $x_i$.
\begin{lemma}\label{lem:b-positive}
For all $j \in M$ where there exists $k \in N$ with $b_{kj} > 0$, we have $x''_{ij} = w_{ij} u_i(x_i)$.
\end{lemma}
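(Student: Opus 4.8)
The plan is to split on whether $g_j \not\equiv 0$ or $g_j \equiv 0$, since the construction of $\B$ and the behavior of $\atp$'s allocation rule differ in these two cases. In both cases, the goal is to show $x''_{ij} = w_{ij} u_i(x_i)$ for every good $j$ that receives a positive bid from some agent.

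\emph{Case 1: $g_j \not\equiv 0$.} Here $b_{ij} = x_{ij}$ for all $i$, so the hypothesis that some $b_{kj} > 0$ is automatically consistent. By Lemma~\ref{lem:x''}, $x''_{ij} = x_{ij}$. Since $(\x,\g)$ is a PCE, Lemma~\ref{lem:pc-eq}(1) gives $x_{ij} = w_{ij} u_i(x_i)$ whenever $g_j \not\equiv 0$. Chaining these yields $x''_{ij} = w_{ij} u_i(x_i)$, as desired.

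\emph{Case 2: $g_j \equiv 0$.} By the definition of $\B$, for such $j$ we have $b_{ij} = \beta$ if $j \in R_i$ and $b_{ij} = 0$ if $j \notin R_i$ — so in particular $b_{ij} \in \{0,\beta\}$ for \emph{every} agent. But then the hypothesis ``there exists $k \in N$ with $b_{kj} > 0$'' is false, so there is nothing to prove for these goods. One should state this explicitly: the case $g_j \equiv 0$ contributes no constraints to the lemma, because the only goods receiving a positive (non-$\beta$, non-zero) bid are exactly the goods with $g_j \not\equiv 0$.

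I do not expect any real obstacle here: the lemma is essentially a bookkeeping step that combines Lemma~\ref{lem:x''} with Lemma~\ref{lem:pc-eq}(1) and observes that the special-bid construction never puts a genuine positive bid on a zero-price good. The one subtlety worth a sentence is justifying that the hypothesis ``$b_{kj} > 0$ for some $k$'' forces $g_j \not\equiv 0$: by construction, $b_{kj} \notin \{0,\beta\}$ can only happen in the ``otherwise'' branch, i.e., when $g_j \not\equiv 0$ and $b_{kj} = x_{kj}$; conversely $x_{kj} > 0$ holds for some $k$ whenever $g_j \not\equiv 0$ by Lemma~\ref{lem:u-positive-pc} together with the fact that every good is required by at least one agent. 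So ``some agent bids positively on $j$'' and ``$g_j \not\equiv 0$'' are in fact equivalent, which reduces the lemma entirely to Case 1.
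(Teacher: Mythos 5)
Your proposal is correct and follows essentially the same route as the paper: the hypothesis that some $b_{kj}>0$ forces $g_j\not\equiv 0$ by the construction of $\B$, after which Lemma~\ref{lem:pc-eq} gives $x_{ij}=w_{ij}u_i(x_i)$ and Lemma~\ref{lem:x''} gives $x''_{ij}=x_{ij}$. Your extra case for $g_j\equiv 0$ is just making explicit that the hypothesis is then vacuous, which the paper handles implicitly.
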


\begin{proof}
By the definition of $\B$, if $b_{kj} > 0$ for some $k \in N$, then $g_j \not\equiv 0$. Since $(\x, \g)$ is a price curve equilibrium, we then have $x_{ij} = w_{ij} u_i(x_i)$ by Lemma~\ref{lem:pc-eq}. Lemma~\ref{lem:x''} gives us $x''_{ij} = x_{ij}$, so $x''_{ij} = w_{ij} u_i(x_i)$.
\end{proof}

%%%%%% u_i'' = u_i
Next, we show that each agent's utility for her bundle after Step 2 is equal to her utility for $x_i$.
\begin{lemma}\label{lem:u''}
For all $i \in N$, $u_i(x''_i) = u_i(x_i)$.
\end{lemma}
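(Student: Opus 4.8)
The plan is to prove the stronger statement that $x''_{ij} = u_i(x_i)$ for every $i \in N$ and every $j \in R_i$. Since bandwidth allocation utilities satisfy $u_i(x''_i) = \min_{j \in R_i} x''_{ij}$, and a minimum of quantities all equal to $u_i(x_i)$ is exactly $u_i(x_i)$, this immediately gives the lemma. I would fix an agent $i$ and split $R_i$ according to whether the corresponding price curve is zero or not.

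For $j \in R_i$ with $g_j \not\equiv 0$, the case is routine: the construction of $\B$ in Theorem~\ref{thm:pc-to-tp} sets $b_{ij} = x_{ij}$, Lemma~\ref{lem:x''} gives $x''_{ij} = x_{ij}$, and Lemma~\ref{lem:pc-eq} (condition~1, with $w_{ij} = 1$ since $j \in R_i$) gives $x_{ij} = u_i(x_i)$; chaining these yields $x''_{ij} = u_i(x_i)$. The delicate case is $j \in R_i$ with $g_j \equiv 0$: here $b_{ij} = \beta$ and every agent bids in $\{0,\beta\}$ on good $j$, so $j$ is handled by Step~2 of $\atp$'s allocation rule, which sets $x''_{ij} = x''_{i\ell_i}$ where $\ell_i$ is an arbitrary good on which agent $i$ bids a positive (neither $0$ nor $\beta$) amount. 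To make this well-defined I would first establish the preliminary claim that every agent requires at least one positive-price good, i.e.\ there is some $\ell \in R_i$ with $g_\ell \not\equiv 0$: since $(\x,\g)$ is a PCE, $C_\g(x_i) = 1$ (Lemma~\ref{lem:pc-eq}), so agent $i$ spends a positive amount on some good $j$, forcing $g_j \not\equiv 0$ and $x_{ij} > 0$, and then condition~1 of Lemma~\ref{lem:pc-eq} forces $j \in R_i$. Given this, $b_{i\ell} = x_{i\ell} = u_i(x_i) > 0$ (using Lemma~\ref{lem:u-positive-pc}), so agent $i$ indeed bids positively somewhere and Step~2 applies as intended. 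Finally, any good $\ell_i$ with $b_{i\ell_i}$ positive must satisfy $b_{i\ell_i} = x_{i\ell_i}$, hence $g_{\ell_i} \not\equiv 0$ and (by condition~1 of Lemma~\ref{lem:pc-eq}, since $x_{i\ell_i} > 0$) $\ell_i \in R_i$; so the routine case applies to $\ell_i$, giving $x''_{i\ell_i} = u_i(x_i)$, and therefore $x''_{ij} = u_i(x_i)$.

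The main obstacle is precisely the preliminary claim --- ruling out the degenerate possibility that all of some agent's required goods have price zero, in which case Step~2 would assign her nothing and the lemma (and the theorem it supports) would fail. Everything else is bookkeeping with Lemmas~\ref{lem:x''}, \ref{lem:pc-eq}, and \ref{lem:u-positive-pc} together with the definition of $\B$.
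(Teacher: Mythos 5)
Your proof is correct and follows the same case split on $g_j \equiv 0$ versus $g_j \not\equiv 0$ that the paper uses; for the zero-price case the paper invokes its Lemma~\ref{lem:b-positive}, while you re-derive the same facts ($\ell_i \in R_i$ and $x''_{i\ell_i} = u_i(x_i)$) inline from Lemmas~\ref{lem:x''} and~\ref{lem:pc-eq}, which is the same substance. The one genuine addition in your write-up is the preliminary claim that each agent $i$ must have some $\ell \in R_i$ with $g_\ell \not\equiv 0$, so that Step~2 of the allocation rule has a valid $\ell_i$ to point to; the paper's proof simply writes ``let $\ell_i$ be a good with $b_{i\ell_i} > 0$'' without checking that one exists (the mechanism's fallback sets $x''_{ij} = 0$ otherwise, which would break the lemma). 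Your argument for that claim --- $C_\g(x_i) = 1 > 0$ forces a good with $g_j(x_{ij}) > 0$, hence $g_j \not\equiv 0$ and $x_{ij} > 0$, hence $w_{ij} = 1$ by Lemma~\ref{lem:pc-eq} --- is correct and is a welcome tightening.
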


\begin{proof}
It suffices to show that for all $j \in R_i$, $x''_{ij} = u_i(x_i)$.

Case 1: $j \in R_i$ and $g_j \not\equiv 0$. Lemma~\ref{lem:x''} implies that $x''_{ij} = x_{ij}$ in this case. Since $(\x,\g)$ is a price curve equilibrium, Lemma~\ref{lem:pc-eq} implies that $x_{ij} = w_{ij} u_i(x_i)$. Since $w_{ij} = 1$ for $j \in R_i$, $x''_{ij} = u_i(x_i)$, as required.

Case 2: $j \in R_i$ and $g_j \equiv 0$. If $g_j \equiv 0$, the definition of $\B$ implies that all agents bid either $\beta$ or 0 on $j$. Thus will be following Step 2 of $\atp$'s allocation rule. By definition of $\B$, $b_{ij} = \beta$ in this case. Following Step 2 of the $\atp$ allocation rule, let $\ell_i$ be a good with $b_{i\ell_i} > 0$: then $x''_{ij} = x''_{i\ell_i}$. Since $b_{i\ell_i} > 0$ by assumption, we have $x''_{i\ell_i} = w_{i \ell_i} u_i(x_i)$ by Lemma~\ref{lem:b-positive}. Furthermore, $b_{i\ell_i} > 0$ implies $x''_{i\ell_i} > 0$, so $w_{i \ell_i} u_i(x_i) > 0$. Thus we must have $w_{i\ell_i} = 1$, which implies $x''_{ij} =x''_{i \ell_i}= u_i(x_i)$.

Therefore $x''_{ij} = u_i(x_i)$ for all $j \in R_i$, so $u_i(x''_i) = u_i(x_i)$, as required.
\end{proof}

%%%%% u_i' = u_i
The next lemma states that the intermediate allocation after Step 2 is equal to the final allocation produced by $\atp(\f, \B)$.
\begin{lemma}\label{lem:x'-pc}
We have $\xprime = \mathbf{x''}$.
\end{lemma}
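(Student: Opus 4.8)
The plan is to show that Step 3 of the $\atp$ allocation rule is never triggered on bids $\B$; since Step 3 is the only part of the allocation rule that can modify the allocation produced by Step 2, establishing this immediately yields $\xprime = \mathbf{x''}$. Step 3 is triggered exactly when there is a good $\ell$ with $\sum_{i \in N} x''_{i\ell} > s_\ell$, so it suffices to verify the supply constraint $\sum_{i \in N} x''_{i\ell} \le s_\ell$ for every good $\ell$. I would split into two cases according to whether $g_\ell \equiv 0$.

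If $g_\ell \not\equiv 0$, then Lemma~\ref{lem:x''} gives $x''_{i\ell} = x_{i\ell}$ for every $i \in N$, and since $(\x,\g)$ is a PCE, Lemma~\ref{lem:pc-eq} gives $\sum_{i \in N} x_{i\ell} = s_\ell$; hence $\sum_{i\in N} x''_{i\ell} = s_\ell$ and no violation occurs at such a good. If $g_\ell \equiv 0$, then by the definition of $\B$ each agent bids $\beta$ on $\ell$ when $\ell \in R_i$ and bids $0$ otherwise, so every agent is allocated good $\ell$ via Step 2. Agents bidding $0$ get $x''_{i\ell} = 0$, and for an agent $i$ bidding $\beta$ the argument in Case 2 of the proof of Lemma~\ref{lem:u''} shows $x''_{i\ell} = u_i(x_i)$. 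Since $\ell \in R_i$ for every such agent, $u_i(x_i) = \min_{j \in R_i} x_{ij} \le x_{i\ell}$, so $\sum_{i\in N} x''_{i\ell} = \sum_{i:\, \ell \in R_i} u_i(x_i) \le \sum_{i:\, \ell\in R_i} x_{i\ell} \le \sum_{i \in N} x_{i\ell} \le s_\ell$, where the last step is the market-clearing condition of the PCE (Lemma~\ref{lem:pc-eq}). Thus no good's supply is violated after Step 2, Step 3 does nothing, and $\xprime = \mathbf{x''}$.

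The only step with any content is the zero-price case, and the point is the following: although the PCE need only satisfy $\sum_i x_{i\ell} \le s_\ell$ (possibly with slack) at a zero-priced good $\ell$, the trading-post construction hands out the "free" allocation $x''_{i\ell} = u_i(x_i)$, which is bounded above by the PCE allocation $x_{i\ell}$, so it inherits precisely this slack supply bound. The positive-price case is an immediate consequence of Lemma~\ref{lem:x''} and PCE market clearing, so I do not anticipate any real obstacle.
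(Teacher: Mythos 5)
Your proof is correct and follows essentially the same route as the paper: both show Step 3 is never invoked because the post-Step-2 allocation respects every supply constraint, with zero-priced goods handled by observing that the Step-2 allocation there equals $u_i(x_i) \le x_{i\ell}$ and then invoking the PCE feasibility of $\x$ (the paper merely phrases this as a contradiction rather than a direct case check).
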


\begin{proof}
We need to show that the penalty in Step 3 is not invoked. Suppose it is invoked: then there is a good $j$ allocated by Step 2 where $\sum_{i\in N} x''_{ij} > s_j$. For each $i \in N$ bidding $\beta$ on good $j$, define $\ell_i$ as usual: then $x''_{ij} = x''_{i\ell_i}$. Since $b_{i\ell_i} > 0$, Lemma~\ref{lem:b-positive} implies that $x''_{i\ell_i} = w_{i\ell_i} u_i(x_i)$ whenever $b_{ij} = \beta$.
\[
s_j < \sum_{i\in N} x''_{ij} = \sum_{i: b_{ij} = \beta} x''_{i\ell_i} = \sum_{i: b_{ij} = \beta} w_{i\ell_i} u_i(x_i)
\]
By definition of $\B$, we must have $g_j \equiv 0$: that is the only situation where agents bid $\beta$. Furthermore, $b_{ij} = \beta$ if and only if $j \in R_i$. Also using $w_{i\ell_i} \le 1$ (in reality, $w_{i\ell_i} = 1$ exactly, but we only need the inequality), gives us
\[
s_j < \sum_{i: j \in R_i} w_{i\ell_i} u_i(x_i) \le \sum_{i: j\in R_i} u_i(x_i)
\]
Using $w_{ij} = 1$ if and only if $j \in R_i$ then gives us
\[
s_j <  \sum_{i: j\in R_i} u_i(x_i) = \sum_{i: j \in R_i} w_{ij} u_i(x_i) = \sum_{i \in N} w_{ij} u_i(x_i)
\]
By definition of $u_i$, $x_{ij} \ge w_{ij} u_i(x_i)$ for all $j \in M$. Therefore
\[
\sum_{i \in N} x_{ij} \ge \sum_{i \in N} w_{ij} u_i(x_i) > s_j
\]
But this implies that $\x$ is not a valid allocation, which is a contradiction. We conclude that Step 3 is not invoked, and thus $u_i(x'_i) = u_i(x''_i)$, which is equal to $u_i(x_i)$ by Lemma~\ref{lem:u''}.
\end{proof}

%%%% cost
Next, we show that the price curves constraint and bid constraint coincide.
\begin{lemma}\label{lem:cost-pc}
For all $i \in N$, $C_\f(b_i) = C_\g(x_i)$.
\end{lemma}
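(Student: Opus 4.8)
The plan is to show that $C_\f(b_i) = C_\g(x_i)$ by splitting the sum over goods according to whether $g_j \equiv 0$ or not, using the definitions of $\f$ and $\B$ from Theorem~\ref{thm:pc-to-tp}. First I would recall that by construction, $f_j = g_j$ whenever $g_j \not\equiv 0$, and $f_j = h$ (with $h$ strictly increasing and normalized) whenever $g_j \equiv 0$; similarly, $b_{ij} = x_{ij}$ whenever $g_j \not\equiv 0$, while $b_{ij} \in \{0,\beta\}$ whenever $g_j \equiv 0$.

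The key computation is then straightforward: write
\[
C_\f(b_i) = \sum_{j: g_j \equiv 0} f_j(b_{ij}) + \sum_{j: g_j \not\equiv 0} f_j(b_{ij}).
\]
For the first sum, $b_{ij} \in \{0,\beta\}$, which we treat as $0$ in arithmetic, and since $f_j = h$ is normalized, $h(0) = 0$; hence the first sum vanishes. For the second sum, $f_j(b_{ij}) = g_j(x_{ij})$, so it equals $\sum_{j: g_j \not\equiv 0} g_j(x_{ij})$. On the price curves side, $C_\g(x_i) = \sum_{j: g_j \equiv 0} g_j(x_{ij}) + \sum_{j: g_j \not\equiv 0} g_j(x_{ij})$, and the first of these sums is $0$ because $g_j \equiv 0$ means $g_j(x_{ij}) = 0$ identically. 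Therefore both $C_\f(b_i)$ and $C_\g(x_i)$ reduce to $\sum_{j: g_j \not\equiv 0} g_j(x_{ij})$, giving the desired equality.

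This lemma is essentially bookkeeping, so I do not anticipate a genuine obstacle; the only point requiring care is confirming that the $\beta$ and $0$ bids on zero-price goods contribute nothing to $C_\f(b_i)$ — this is exactly where the convention that $\beta$ is treated as $0$ in the constraint $\sum_j f_j(b_{ij}) \le 1$, together with normalization $f_j(0) = 0$, is used. Once this lemma is in hand, it combines with $u_i(x''_i) = u_i(x_i)$ (Lemma~\ref{lem:u''}), $\xprime = \mathbf{x''}$ (Lemma~\ref{lem:x'-pc}), and $C_\g(x_i) = 1$ (from $\x \in D_i(\g)$ via Lemma~\ref{lem:pc-eq}) to verify the hypotheses of Lemma~\ref{lem:tp-eq} and conclude that $\B \in NE(\atp(\f))$, completing the proof of Theorem~\ref{thm:pc-to-tp}.
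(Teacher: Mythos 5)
Your proof is correct and follows essentially the same route as the paper's: split $C_\f(b_i)$ into the sum over goods with $g_j \equiv 0$ and the sum over goods with $g_j \not\equiv 0$, observe that the first vanishes because $b_{ij}\in\{0,\beta\}$ (treated as $0$) and $f_j$ is normalized, and that the second equals the corresponding sum of $g_j(x_{ij})$ since $f_j = g_j$ and $b_{ij}=x_{ij}$ there. You spell out the $h(0)=0$ / $\beta$-as-$0$ bookkeeping slightly more explicitly than the paper does, but the argument is the same.
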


\begin{proof}
By the definition of $\B$, $b_{ij} \in \{0,\beta\}$ when $g_j \equiv 0$. Therefore:
\begin{align*}
C_\f(b_i) =&\ \sum_{j \in M} f_j(b_{ij})\\
 =&\ \sum_{j: g_j\not\equiv 0} f_j(b_{ij}) + \sum_{j: g_j \equiv 0} f_j(b_{ij})\\
=&\ \sum_{j: g_j\not\equiv 0} f_j(b_{ij})\\
=&\ \sum_{j: g_j\not\equiv 0} g_j(x_{ij})\\
=&\ \sum_{j \in M} g_j(x_{ij})\\
=&\ C_\g(x_i)
\end{align*}
\end{proof}

%%%%%%% FINAL REDUCTION THEOREM
We are now ready to prove the main result of this section.

\thmPcToTp*

\begin{proof}
Suppose $(\x, \g)$ is a price curve equilibrium. By Lemma~\ref{lem:pc-eq}, we have $C_\g(x_i) = 1$ for all $i\in N$. Thus Lemma~\ref{lem:cost-pc} implies that $C_\f(b_i) = 1$ as well, which satisfies condition 2 of Lemma~\ref{lem:tp-eq}.

Lemma~\ref{lem:b-positive} implies that $x''_{ij} = w_{ij} u_i(x_i)$ whenever there exists $k \in N$ with $b_{kj} > 0$. Combining this with Lemmas~\ref{lem:u''} and \ref{lem:x'-pc} gives us $x'_{ij} = w_{ij} u_i(x'_i)$ whenever there exists $k \in N$ with $b_{kj} > 0$. This satisfies condition 1 of Lemma~\ref{lem:tp-eq}. Therefore by Lemma~\ref{lem:tp-eq}, $\B$ is a Nash equilibrium of $\atp(\f)$.

\end{proof}

\section{Nash-implementing CES welfare functions with trading post}\label{sec:ces}

In this section, we use the reduction between price curves and augmented trading post to show that for any $\rho \in (-\infty, 1)$, $\atp(\rho)$ Nash-implements CES welfare maximization. Recall that $\atp(\rho)$ is the augmented trading post mechanism where $f_j(b) = b^{1-\rho}$ for all $j \in M$. Our key tools will be the reduction from Section~\ref{sec:reduction}, and pair of lemmas from~\cite{goel_beyond_2018} regarding price curve equilibria. The final result is Theorem~\ref{thm:ces}:

\begin{restatable}{theorem}{thmCES}
\label{thm:ces}
For any $\rho \in (-\infty, 1)$, the mechanism $\atp(\rho)$ Nash-implements the maximum CES welfare social choice rule.
\end{restatable}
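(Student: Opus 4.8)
The plan is to establish the two requirements of Nash implementation — nonemptiness of $NE(\atp(\rho))$ and the inclusion $NE_X(\atp(\rho)) \subseteq \Psi_\rho$ — separately, in each case using the reduction of Section~\ref{sec:reduction} to pass between $\atp(\rho)$ and the price-curve model, where Lemmas~\ref{lem:ces-price-curves} and \ref{lem:ces-sufficient} characterize the maximum-CES-welfare allocations exactly. The one observation that makes the reduction applicable is that for every $\rho < 1$ the function $f_j(b) = b^{1-\rho}$ is a legitimate constraint curve (continuous, normalized, and strictly increasing, since $1-\rho > 0$) and is homogeneous of degree $\alpha_j = 1-\rho > 0$ in the sense of Definition~\ref{def:homo}; hence Theorem~\ref{thm:tp-to-pc} may be invoked for $\atp(\rho)$.

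\emph{Soundness.} Let $\B \in NE(\atp(\rho))$ and $\x = \atp(\rho, \B)$. By the homogeneity just noted, Theorem~\ref{thm:tp-to-pc} produces price curves $\g$ with $g_j \equiv 0$ on goods that receive only $0$ or $\beta$ bids and $g_j(x) = a_j x^{1-\rho}$ on the rest (with $a_j = (\sum_k b_{kj}/s_j)^{1-\rho} > 0$), and guarantees that $(\x, \g)$ is a price curve equilibrium. In either case $g_j(x) = q_j x^{1-\rho}$ for some $q_j \ge 0$, so Lemma~\ref{lem:ces-sufficient} applies and $\x$ is a maximum-CES-welfare allocation, i.e. $\x \in \Psi_\rho(\bfu)$. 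Since $\B$ was arbitrary, $NE_X(\atp(\rho)) \subseteq \Psi_\rho(\bfu)$.

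\emph{Nonemptiness.} Fix a maximum-CES-welfare allocation $\x$. By Lemma~\ref{lem:ces-price-curves} there are price curves $\g$ with $g_j(x) = q_j x^{1-\rho}$ (and $q_j > 0$ precisely when $g_j \not\equiv 0$) such that $(\x, \g)$ is a price curve equilibrium. Applying Theorem~\ref{thm:pc-to-tp} with $h(b) = b^{1-\rho}$ yields bids $\B \in NE(\atp(\f))$, where $f_j(b) = b^{1-\rho}$ when $g_j \equiv 0$ and $f_j(b) = g_j(b) = q_j b^{1-\rho}$ when $g_j \not\equiv 0$. This mechanism is not literally $\atp(\rho)$, because of the multipliers $q_j$; the remaining step removes them by rescaling. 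Define $\B'$ by $b'_{ij} = q_j^{1/(1-\rho)} b_{ij}$ on each good with $g_j \not\equiv 0$ and $b'_{ij} = b_{ij} \in \{0, \beta\}$ elsewhere. The trading-post allocation rule (Steps 1--3 alike) depends on the bids on a good only through their ratios, so scaling every bid on a given good by the same positive constant leaves the outcome unchanged: $\atp(\rho, \B') = \atp(\f, \B)$. Moreover $\sum_{j \in M} (b'_{ij})^{1-\rho} = \sum_{j: g_j \not\equiv 0} q_j b_{ij}^{1-\rho} = C_\f(b_i) = 1$ for every $i$, and the set of goods receiving a positive bid is the same under $\B'$ and $\B$; hence both conditions of Lemma~\ref{lem:tp-eq} carry over from $\B$ to $\B'$, and $\B' \in NE(\atp(\rho))$. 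Thus $NE(\atp(\rho)) \ne \emptyset$.

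Combining the two halves gives $\emptyset \ne NE_X(\atp(\rho)) \subseteq \Psi_\rho(\bfu)$ for every utility profile $\bfu$, which is exactly Definition~\ref{def:impl} for Nash implementation. I expect the only delicate point to be the rescaling in the nonemptiness direction: Lemma~\ref{lem:ces-price-curves} hands us price curves $q_j x^{1-\rho}$ with problem-dependent, generally non-unit multipliers, whereas $\atp(\rho)$ carries the fixed constraint $\sum_j b_{ij}^{1-\rho} \le 1$, so one must exploit scale-invariance of the allocation rule to absorb the $q_j$'s into the bids. (Equivalently, one can bypass the black-box appeal to Theorem~\ref{thm:pc-to-tp} and build $\B'$ directly from $(\x, \g)$ via $b'_{ij} = q_j^{1/(1-\rho)} x_{ij}$ on positive-price goods and $\beta$ or $0$ on zero-price goods, then check Lemma~\ref{lem:tp-eq} by hand — essentially the proof of Theorem~\ref{thm:pc-to-tp} with the rescaling folded in.) Everything else is a direct application of the results already in hand.
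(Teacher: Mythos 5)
Your proposal is correct and follows essentially the same route as the paper: the same decomposition into nonemptiness (via Lemma~\ref{lem:ces-price-curves} and Theorem~\ref{thm:pc-to-tp}) and soundness (via Theorem~\ref{thm:tp-to-pc} and Lemma~\ref{lem:ces-sufficient}), with the same need to absorb the multipliers $q_j$ into the bids. The only organizational difference is that the paper handles that rescaling once, up front, by invoking Lemma~\ref{lem:scaling} to identify $NE_X(\atp(\rho))$ with $NE_X(\atp(\f))$ for $f_j(b)=q_j b^{1-\rho}$, whereas you apply Theorem~\ref{thm:tp-to-pc} to $\atp(\rho)$ directly for soundness and re-derive the scale-invariance argument inline in the nonemptiness direction -- which is exactly the content of the paper's Lemma~\ref{lem:scaling-intermediate}.
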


Before we can prove Theorem~\ref{thm:ces}, we need one more property: Section~\ref{sec:scaling} shows that scaling the constraint curves does not affect the set of Nash equilibrium outcomes. We then prove the main theorem in Section~\ref{sec:ces-main}.

%%%%%% SCALING STUFF
%%%%%%%
%%%%%%%
\subsection{Nash equilibria of trading post are invariant to scaling of constraint curves}\label{sec:scaling}

In order to use the reduction from Section~\ref{sec:reduction}, we would like to set $f_j(b) = g_j(b) = q_j b^{1-\rho}$. However, this would not be a valid mechanism: $q_1\dots q_m$ depend on the utility profile $\bfu$, and the mechanism cannot depend on $\bfu$. In this section, we show that scaling by $q_1\dots q_m$ does not affect the Nash equilibrium outcomes of $\atp$. This will allow us to use the mechanism $\atp(\rho)$ instead, which does not depend on $\bfu$.

%%%%%% scaling lemma part 1

Recall that for the mechanism $\atp(\f)$, $NE(\atp(\f))$ is the set of Nash equilibrium bids $\B$, and $NE_X(\atp(\f))$ is set of allocations $\x$ resulting from some $\B \in NE(\atp(\f))$.

\begin{lemma}\label{lem:scaling-intermediate}
Let $a_1\dots a_m$ be positive constants and let $\f$ be constraint curves where each $f_j$ is homogenous of degree $\alpha_j > 0$. Define $\fprime$ by $f'_j(b) = a_j f_j(b)$. Then $NE_X(\atp(\f)) \subseteq NE_X(\atp(\fprime))$.
\end{lemma}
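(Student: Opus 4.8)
The plan is to show that every Nash equilibrium of $\atp(\f)$ can be rescaled, good by good, into a Nash equilibrium of $\atp(\fprime)$ with the same outcome. Fix an arbitrary $\x \in NE_X(\atp(\f))$ and write $\x = \atp(\f,\B)$ for some $\B \in NE(\atp(\f))$. Define $\B'$ by $b'_{ij} = a_j^{-1/\alpha_j} b_{ij}$ whenever $b_{ij} > 0$ (i.e.\ $b_{ij} \notin \{0,\beta\}$), and $b'_{ij} = b_{ij}$ otherwise (so $\beta$ stays $\beta$ and $0$ stays $0$). Degree-$\alpha_j$ homogeneity is used precisely to absorb the constant $a_j$: $f'_j(b'_{ij}) = a_j f_j(a_j^{-1/\alpha_j} b_{ij}) = a_j \cdot a_j^{-1} f_j(b_{ij}) = f_j(b_{ij})$ when $b_{ij} > 0$, while $f'_j(0) = f'_j(\beta) = a_j f_j(0) = 0 = f_j(0)$ by normalization. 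Summing over $j$ gives $C_{\fprime}(b'_i) = C_\f(b_i)$, which equals $1$ for every $i$ by Lemma~\ref{lem:tp-eq}; in particular $\B'$ is feasible for $\atp(\fprime)$ and condition 2 of Lemma~\ref{lem:tp-eq} already holds for $\B'$. (It is also immediate that $\fprime$ is a legitimate tuple of constraint curves: each $f'_j = a_j f_j$ with $a_j > 0$ inherits continuity, normalization, and strict monotonicity from $f_j$.)

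The next step is to check that $\atp(\fprime, \B') = \x$. Since scaling by a positive constant preserves positivity and fixes $0$ and $\beta$, we have $b'_{kj} > 0 \iff b_{kj} > 0$, so exactly the same goods are handled by Step 1 (resp.\ Step 2) of the allocation rule under $\B'$ as under $\B$. On a Step 1 good $j$ the scaling factor cancels between numerator and denominator, $\sum_k b'_{kj} = a_j^{-1/\alpha_j}\sum_k b_{kj}$, so $x'_{ij} = \frac{b'_{ij}}{\sum_k b'_{kj}} s_j = \frac{b_{ij}}{\sum_k b_{kj}} s_j = x_{ij}$, and agents bidding $0$ on $j$ get $0$ in both. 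On a Step 2 good $j$, an agent $i$ with $b_{ij} = \beta$ is allocated the amount she receives on some good $\ell_i$ with $b_{i\ell_i} > 0$; $\ell_i$ is a Step 1 good on which the two allocations agree, and by Lemma~\ref{lem:tp-eq} applied to $\B$ every such $\ell_i$ gives her exactly $u_i(x_i)$, so the arbitrary choice made by the mechanism is immaterial and the Step 2 allocation again equals $\x$ (and the degenerate ``no such $\ell_i$'' case cannot occur, since $C_\f(b_i) = 1 > 0$). Finally, since $\B$ is a Nash equilibrium, Lemma~\ref{lem:x'} says $\x = \atp(\f,\B)$ is already the post-Step-2 allocation and is a valid allocation; hence the post-Step-2 allocation of $\atp(\fprime, \B')$ is the valid allocation $\x$, the Step 3 penalty is not triggered, and $\atp(\fprime, \B') = \x$.

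It remains to apply Lemma~\ref{lem:tp-eq} to $\B'$. Condition 2 was verified above. For condition 1, ``some agent bids positively on $j$'' selects the same set of goods under $\B'$ as under $\B$, and the resulting allocation is the same $\x$, so $x_{ij} = w_{ij} u_i(x_i)$ on those goods transfers verbatim from the statement of Lemma~\ref{lem:tp-eq} for $\B$. Therefore $\B' \in NE(\atp(\fprime))$ with outcome $\x$, so $\x \in NE_X(\atp(\fprime))$; since $\x$ was arbitrary, $NE_X(\atp(\f)) \subseteq NE_X(\atp(\fprime))$.

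I expect the only genuine obstacle to be the bookkeeping for the special bid $\beta$ together with the ``arbitrary'' choices in Step 2 and the penalty in Step 3: one must invoke the equilibrium structure from Lemma~\ref{lem:tp-eq} (an agent's allocation on every good she bids positively on equals her utility) to see those choices cannot change the outcome, and Lemma~\ref{lem:x'} (via positive utility at equilibrium) to dismiss Step 3 without circularly assuming $\B'$ is already an equilibrium. Everything else is the one-line homogeneity computation.
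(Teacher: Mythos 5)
Your proof is correct and follows essentially the same route as the paper's: rescale the bids by $a_j^{-1/\alpha_j}$ on each good, use homogeneity to show the bid constraint is preserved, observe that the allocation is unchanged good by good, and then transfer the two conditions of Lemma~\ref{lem:tp-eq}. You are in fact somewhat more careful than the paper in two spots — justifying that the ``arbitrary'' choice of $\ell_i$ in Step~2 cannot change the outcome (via the equilibrium structure) and explicitly ruling out the Step~3 penalty via Lemma~\ref{lem:x'} — both of which the paper glosses over.
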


\begin{proof}
Let $\x$ be an arbitrary allocation in $NE_X(\atp(\f))$; we will show that $\x \in NE_X(\atp(\fprime))$. By definition, there exist bids $\B \in NE(\atp(\f))$ such that $\x = \atp(\f, \B)$. Define $\bprime$ by $b'_{ij} = a_j^{-1/\alpha_j} b_{ij}$ when $b_{ij} > 0$ and $b'_{ij} = b_{ij}$ otherwise. We first show that $C_\fprime(\bprime) = C_\f(\B)$:
\[
\sum_{j \in M} f'_j(b'_{ij}) = \sum_{j \in M} a_j f_j(a_j^{-1/\alpha_j} b_{ij})= \sum_{j \in M} a_j (a_j^{-1/\alpha_j})^{\alpha_j} f_j(b_{ij}) =\sum_{j \in M} f_j(b_{ij})
\]
Let $\xprime = \atp(\fprime, \bprime)$. For any good $j$ where $b'_{kj} > 0$ for some $k \in N$ (and thus also $b_{kj} > 0$),
\begin{align*}
x'_{ij} =&\ \frac{b'_{ij}}{\sum_{k \in N} b'_{kj}}\\
=&\ \frac{a_j^{-1/\alpha_j} b_{ij}}{\sum_{k \in N} a_j^{-1/\alpha_j} b_{kj}}\\
=&\ \frac{b_{ij}}{\sum_{k \in N} b_{kj}}\\
=&\ x_{ij}
\end{align*}
Thus for any good $j$ where $b'_{kj} > 0$ for some $k \in N$, we have $x'_{ij} = x_{ij}$. For any good $j$ where $b'_{kj} \in \{0,\beta\}$ for all $k$, we also have $b_{kj} \in \{0,\beta\}$ for all $k$. Thus in both cases we follow Step 2 of $\atp$'s allocation rule. Since $x'_{ij} = x_{ij}$ for the good $j$ where $b'_{kj} > 0$ for some $k$, Step 2 results in $x'_{ij} = x_{ij}$ for goods where $b'_{kj} \in \{0,\beta\}$ for all $k$. Therefore $\xprime = \x$.

This implies that $u_i(x_i) = u_i(x'_i)$ for all $i \in N$. Therefore $x_{ij} = w_{ij} u_i(x_i)$ if and only if $x'_{ij} = w_{ij} u_i(x'_i)$. Thus the conditions of Lemma~\ref{lem:tp-eq} hold for $\B, \f$ if and only if they hold for $\bprime, \fprime$. Therefore since $\B \in NE(\atp(\f))$, we have $\bprime \in NE(\atp(\fprime))$, and thus $\x = \xprime \in NE_X(\atp(\fprime))$. We conclude that $NE_X(\atp(\f)) \subseteq NE_X(\atp(\fprime))$.
\end{proof}

%%%%%%% scaling lemma part 2
\begin{lemma}\label{lem:scaling}
Let $a_1\dots a_m$ be positive scalars and let $\f$ be constraint curves where each $f_j$ is homogenous of degree $\alpha_j$. Define $\fprime$ by $f'_j(b) = a_j f_j(b)$. Then $NE_X(\atp(\f)) = NE_X(\atp(\fprime))$.
\end{lemma}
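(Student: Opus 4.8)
The plan is to get both inclusions between $NE_X(\atp(\f))$ and $NE_X(\atp(\fprime))$. One direction is exactly Lemma~\ref{lem:scaling-intermediate}, which already gives $NE_X(\atp(\f)) \subseteq NE_X(\atp(\fprime))$. So the only work is the reverse inclusion $NE_X(\atp(\fprime)) \subseteq NE_X(\atp(\f))$. The natural move is to apply Lemma~\ref{lem:scaling-intermediate} again, but with the roles of $\f$ and $\fprime$ swapped: that is, express $\f$ as a positive scaling of $\fprime$ and check the hypotheses of the lemma are still met.

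Concretely, first I would observe that $f_j(b) = a_j^{-1} f'_j(b)$ for each $j$, with $a_j^{-1}$ a positive constant since $a_j > 0$. Next I would verify that $\fprime$ satisfies the hypothesis needed to invoke Lemma~\ref{lem:scaling-intermediate} in this direction, namely that each $f'_j$ is homogeneous of some positive degree. This is immediate: $f'_j(c b) = a_j f_j(c b) = a_j c^{\alpha_j} f_j(b) = c^{\alpha_j} f'_j(b)$, so $f'_j$ is homogeneous of the same degree $\alpha_j > 0$. Also each $f'_j$ is still continuous, normalized ($f'_j(0) = a_j f_j(0) = 0$), and strictly increasing, so it is a legitimate constraint curve. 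Then Lemma~\ref{lem:scaling-intermediate}, applied with $\fprime$ in place of $\f$ and with the positive scalars $a_1^{-1}, \dots, a_m^{-1}$, yields $NE_X(\atp(\fprime)) \subseteq NE_X(\atp(\f))$.

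Combining the two inclusions gives $NE_X(\atp(\f)) = NE_X(\atp(\fprime))$, as claimed. I do not expect any real obstacle here: the entire content is the symmetry observation that "being a positive rescaling of" is a symmetric relation on constraint curves once one checks homogeneity is preserved under rescaling, and Lemma~\ref{lem:scaling-intermediate} does all the heavy lifting. The only thing to be slightly careful about is confirming that $\fprime$ genuinely meets the standing assumptions on constraint curves (continuous, normalized, strictly increasing, homogeneous of positive degree) so that the second application of Lemma~\ref{lem:scaling-intermediate} is valid — but each of these is a one-line check.
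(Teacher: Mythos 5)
Your proof is correct and takes essentially the same route as the paper: apply Lemma~\ref{lem:scaling-intermediate} once for the forward inclusion, then again with roles swapped using the reciprocal scalars $a_j^{-1}$, after checking that each $f'_j$ remains homogeneous of degree $\alpha_j$. The extra sanity checks you add (continuity, normalization, strict monotonicity of $\fprime$) are harmless and correct, but the paper leaves them implicit.
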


\begin{proof}
Lemma~\ref{lem:scaling-intermediate} gives us $NE_X(\atp(\f)) \subseteq NE_X(\atp(\fprime))$, so it remains only to show that $NE_X(\atp(\fprime)) \subseteq NE_X(\atp(\f))$. We can actually do this by symmetry. Define $a'_1\dots a'_m$ by $a_j' = 1/a_j$. Then $a'_1\dots a'_m$ are positive scalars such that $f_j(b) = a'_j f'_j(b)$. Each $f_j'$ is also be homogenous of degree $\alpha_j$:
\[
f'_j(c\cdot b) = a_j f_j(c \cdot b) = c^{\alpha_j} a_j f_j(b) = c^{\alpha_j} f_j'(b)
\]
Then we can apply Lemma~\ref{lem:scaling-intermediate} with the roles of $\fprime$ and $\f$ swapped to give us $NE_X(\atp(\fprime)) \subseteq NE_X(\atp(\f))$, which completes the proof.
\end{proof}

%%%%%%
%%%%%%
%%%%%% MAIN THEOREM
\subsection{Main theorem}\label{sec:ces-main}

The last tool we need is the following pair of lemmas, both due to~\cite{goel_beyond_2018}:

\begin{lemma}[\cite{goel_beyond_2018}]\label{lem:ces-price-curves}
For utility profile $\bfu$, $\rho \in (-\infty, 1)$, and $\x \in \Psi_\rho(\bfu)$, there exist price curves $\g$ such that $(\x,\g)$ is a price curve equilibrium. Furthermore, for each $j \in M$, $g_j$ takes the form $g_j(x) = q_j x^{1-\rho}$ for some nonnegative constants $q_1\dots q_m$.
\end{lemma}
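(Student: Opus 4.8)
The plan is to read off the price curves from the Lagrange (KKT) multipliers of the convex program that maximizes CES welfare, in direct analogy with how Fisher-market prices arise from the Eisenberg--Gale program in the linear case. Introduce auxiliary variables $u_i$ together with variables $x_{ij}$ for $j\in R_i$, and consider
\[
\max\ \frac{1}{\rho}\sum_{i\in N}u_i^{\rho}\quad\text{s.t.}\quad u_i\le x_{ij}\ \ (i\in N,\ j\in R_i),\quad \sum_{i:\,j\in R_i}x_{ij}\le s_j\ \ (j\in M),\quad x\ge 0
\]
(for $\rho=0$ the objective is instead $\sum_{i\in N}\log u_i$). For every $\rho<1$ this objective is concave and strictly increasing in each $u_i$, and all constraints are affine, so the KKT conditions are necessary at any optimum; moreover, since maximizing the objective pushes each $u_i$ up to $\min_{j\in R_i}x_{ij}$, the allocation part of any optimum is a CES-welfare maximizer and, conversely, for $\x\in\Psi_\rho(\bfu)$ the point $(\x,u(\x))$ is an optimum. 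Note also that $u_i(x_i)>0$ at the optimum, since each agent can secure positive utility, so we never sit on the $u\ge 0$ boundary.

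Let $q_j\ge 0$ be the multiplier on the supply constraint of good $j$ and $\lambda_{ij}\ge 0$ the multiplier on $u_i\le x_{ij}$. I would then extract the equilibrium as follows. Stationarity in $x_{ij}$ (which is positive, so its nonnegativity multiplier vanishes) gives $\lambda_{ij}=q_j$; stationarity in $u_i$ gives $u_i^{\rho-1}=\sum_{j\in R_i}\lambda_{ij}=\sum_{j\in R_i}q_j$. Set $g_j(x)=q_j x^{1-\rho}$: since $1-\rho>0$, this is strictly increasing when $q_j>0$ and identically zero when $q_j=0$, matching the standing assumptions on price curves. Complementary slackness on the supply constraints yields market clearing ($q_j>0\Rightarrow\sum_i x_{ij}=s_j$), and complementary slackness on $u_i\le x_{ij}$ yields $x_{ij}=u_i$ whenever $q_j=\lambda_{ij}>0$, i.e., each agent buys an equal amount of every positively priced good she needs. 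Hence $\C_\g(x_i)=\sum_{j\in R_i:\,q_j>0}q_j x_{ij}^{1-\rho}=u_i^{1-\rho}\sum_{j\in R_i}q_j=u_i^{1-\rho}\cdot u_i^{\rho-1}=1$, so budgets are exactly exhausted. For the demand condition, any bundle giving agent $i$ utility $v$ costs at least $\sum_{j\in R_i}q_j v^{1-\rho}=v^{1-\rho}\sum_{j\in R_i}q_j$, which is at most $1$ iff $v\le u_i$ (using $1-\rho>0$); so $u_i$ is the highest affordable utility and $x_i\in D_i(\g)$. With market clearing and the equivalence $q_j=0\iff g_j\equiv 0$, all three conditions of Lemma~\ref{lem:pc-eq} hold, so $(\x,\g)$ is a price curve equilibrium.

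The $\rho=0$ case (logarithmic objective, linear price curves $g_j(x)=q_j x$) is routine bookkeeping. The real obstacle is the same one that recurs throughout the paper: goods with price zero. The lemma asserts that \emph{every} maximizer $\x\in\Psi_\rho(\bfu)$ is a price curve equilibrium, and an optimal allocation may exhaust a good's supply ``wastefully'' --- dumping some of it on an agent who does not need it, or over-allocating it to an agent who is bottlenecked elsewhere --- in which case Lemma~\ref{lem:pc-eq}(1) forces us to price that good at zero even though its supply constraint is tight, while complementary slackness seems to permit $q_j>0$. The way around this is that optimality of $\x$ rules out the truly bad configuration: if a wastefully-exhausted good were the unique binding constraint of some agent, we could move a wasted unit to that agent and strictly increase welfare, a contradiction; consequently the (possibly degenerate) KKT system admits a multiplier solution with $q_j=0$ for every wastefully-exhausted good, and we choose such a solution. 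Pinning down that simultaneous choice of multipliers across all agents is the part of the argument that requires the most care.
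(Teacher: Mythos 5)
The paper does not actually prove this lemma: it is imported wholesale from \cite{goel_beyond_2018}, and the only methodological hint the paper gives (in Section~\ref{sec:results} and Section~\ref{sec:ces-main}) is that $q_1,\dots,q_m$ are the optimal Lagrange multipliers of the CES convex program---which is precisely the route you take. Your KKT derivation (stationarity $\lambda_{ij}=q_j$, $\sum_{j\in R_i}q_j = u_i^{\rho-1}$, market clearing from complementary slackness, budget exhaustion $C_\g(x_i)=1$, and the demand-set argument using $1-\rho>0$) is the correct approach and the core calculations are right.

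However, the last paragraph misidentifies the remaining work and is not a valid closing argument. There is a real loose end in your budget calculation: you write $C_\g(x_i)=\sum_{j\in R_i:\,q_j>0}q_j x_{ij}^{1-\rho}$, silently dropping the terms $g_j(x_{ij})$ for $j\notin R_i$, and this requires knowing that $x_{ij}=0$ for every $j\notin R_i$ with $q_j>0$. But the way you propose to handle this---``choosing'' a multiplier solution with $q_j=0$ on wastefully-exhausted goods, justified by a Rooftop-type perturbation---is not the right mechanism, and it is not needed. When you take the KKT conditions at the specific $(\x,u(\x))$ with $\x\in\Psi_\rho(\bfu)$, every valid multiplier vector already does the job automatically: complementary slackness on $u_i\le x_{ij}$ forces $q_j=\lambda_{ij}=0$ whenever some $i$ with $j\in R_i$ has slack $x_{ij}>u_i(x_i)$; and if $q_j>0$, complementary slackness on the supply constraint $\sum_{i:\,j\in R_i}x_{ij}\le s_j$ gives equality, which together with feasibility $\sum_{i\in N}x_{ij}\le s_j$ forces $x_{ij}=0$ for every $i$ with $j\notin R_i$. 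So the troublesome configuration (a positively-priced good that is wastefully exhausted in $\x$) is not merely avoidable by a careful multiplier choice---it is impossible. Replace the speculative final paragraph with these two complementary-slackness observations and the proof is complete.
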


\begin{lemma}[\cite{goel_beyond_2018}]\label{lem:ces-sufficient}
Suppose $\rho \in (-\infty, 1)$ and that price curves $\g$ take the form $g_j(x) = q_j x^{1-\rho}$ for each $j \in M$, for some nonnegative constants $q_1\dots q_m$. Then if $(\x, \g)$ is a PCE, $\x \in \Psi_\rho(\bfu)$.
\end{lemma}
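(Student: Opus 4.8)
Since $\Phi_\rho$ is strictly concave in the utility vector for $\rho\neq 1$, the maximum CES welfare allocations all share one utility vector, so it suffices to show that the PCE allocation $\x$ satisfies $\Phi_\rho(\x^*)\le\Phi_\rho(\x)$ for every feasible allocation $\x^*$; this gives $\x\in\Psi_\rho(\bfu)$. The plan is to run a ``prices are Lagrange multipliers'' argument directly: use the constants $q_1\dots q_m$ as a certificate of optimality via the concavity (or convexity) of $t\mapsto t^\rho$, rather than setting up a KKT system.

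First I would extract the structure of the equilibrium from Lemma~\ref{lem:pc-eq}. Write $u_i = u_i(x_i)$, which is positive by Lemma~\ref{lem:u-positive-pc}, and set $Q_i = \sum_{j\in R_i:\, q_j>0} q_j$. Condition~1 of Lemma~\ref{lem:pc-eq} gives $x_{ij}=u_i$ for $j\in R_i$ with $q_j>0$, so condition~2 reads $C_\g(x_i)=u_i^{1-\rho}Q_i=1$; since $1-\rho>0$ this forces $Q_i = u_i^{\rho-1}>0$ (in particular every agent has at least one positively-priced required good). Condition~3 gives $\sum_i x_{ij}=s_j$ whenever $q_j>0$. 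Combining these yields the accounting identity $\sum_{j:\,q_j>0} q_j s_j = \sum_i\sum_{j:\,q_j>0} q_j x_{ij} = \sum_i u_i Q_i = \sum_i u_i^{\rho}$, where the middle step uses $x_{ij}=u_i$ for $j\in R_i$ with $q_j>0$ and $x_{ij}=0$ for $j\notin R_i$ with $q_j>0$.

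Next, for an arbitrary feasible $\x^*$ I would bound the ``price of $\x^*$'': using $u_i(x^*_i)=\min_{j\in R_i}x^*_{ij}\le x^*_{ij}$ for each $j\in R_i$, and $q_j,x^*_{ij}\ge 0$,
\[
\sum_i u_i^{\rho-1}u_i(x^*_i) = \sum_i\sum_{j\in R_i:\,q_j>0} q_j\,u_i(x^*_i) \le \sum_{j:\,q_j>0} q_j\sum_i x^*_{ij} \le \sum_{j:\,q_j>0} q_j s_j = \sum_i u_i^{\rho}.
\]
Then apply the first-order inequality for $t\mapsto t^\rho$ at $t=u_i>0$. For $\rho\in(0,1)$ this map is concave, so $u_i(x^*_i)^\rho \le (1-\rho)u_i^\rho + \rho\,u_i^{\rho-1}u_i(x^*_i)$; summing over $i$ and using the displayed bound (legitimate since $\rho>0$) gives $\sum_i u_i(x^*_i)^\rho \le \sum_i u_i^\rho$, hence $\Phi_\rho(\x^*)\le\Phi_\rho(\x)$. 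For $\rho<0$ the map is convex, the inequality reverses, multiplying the displayed bound by $\rho<0$ flips it as well, so $\sum_i u_i(x^*_i)^\rho \ge \sum_i u_i^\rho$; since $t\mapsto t^{1/\rho}$ is \emph{decreasing} for $\rho<0$, this again yields $\Phi_\rho(\x^*)\le\Phi_\rho(\x)$. The degenerate subcase where some $u_i(x^*_i)=0$ is handled separately: for $\rho<0$ it forces $\Phi_\rho(\x^*)=0<\Phi_\rho(\x)$, and for $\rho\in(0,1)$ the tangent inequality still holds at $t=0$, so nothing changes.

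The main obstacle is purely the sign bookkeeping: the direction of the tangent-line inequality for $t\mapsto t^\rho$ and the direction of the final monotone reparametrization $t\mapsto t^{1/\rho}$ both flip between $\rho\in(0,1)$ and $\rho\in(-\infty,0)$, and one must also be careful with zero-price goods (to guarantee $Q_i>0$ and that the cost sums range only over $\{j:q_j>0\}$) and with allocations $\x^*$ having a zero utility. None of this is deep; the whole argument rests on the concavity/convexity of $t\mapsto t^\rho$ together with the market-clearing accounting identity derived from Lemma~\ref{lem:pc-eq}.
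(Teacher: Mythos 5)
The paper itself offers no proof of this lemma: it is imported verbatim from \cite{goel_beyond_2018}, where (as the surrounding text indicates) it is established by viewing $q_1\dots q_m$ as optimal Lagrange multipliers / KKT certificates for the convex program maximizing CES welfare. Your argument is a correct, self-contained alternative to that route: you extract from Lemma~\ref{lem:pc-eq} the identities $x_{ij}=w_{ij}u_i(x_i)$, $Q_i=u_i(x_i)^{\rho-1}$, and the market-clearing identity $\sum_{j:q_j>0}q_j s_j=\sum_i u_i(x_i)^{\rho}$, then certify optimality of $\x$ against any feasible $\x^*$ via the tangent-line (gradient) inequality for $t\mapsto t^{\rho}$, with the sign flips for $\rho\in(0,1)$ versus $\rho<0$ handled correctly, and the zero-utility and zero-price edge cases addressed. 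This is essentially the same ``prices as dual certificates'' idea made elementary, which buys a proof that never has to set up or cite the convex program; the KKT route in \cite{goel_beyond_2018} buys, in exchange, the converse direction (Lemma~\ref{lem:ces-price-curves}) and the polynomial-time computability of the $q_j$'s, which your variational argument does not address (nor does it need to for this lemma).

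One completeness point: under this paper's conventions, ``$\rho\in(-\infty,1)$'' includes $\rho=0$, meaning Nash welfare with linear price curves $g_j(x)=q_jx$, and your case split ($t\mapsto t^{\rho}$ concave for $\rho\in(0,1)$, convex for $\rho<0$) silently skips it. The identical argument works with $h(t)=\log t$: condition~2 gives $u_i(x_i)Q_i=1$, the accounting identity gives $\sum_{j:q_j>0}q_js_j=n$, and concavity of $\log$ yields $\sum_i\log u_i(x_i^*)\le\sum_i\log u_i(x_i)$. You should also state explicitly that for $\rho<0$ with some $u_i(x_i^*)=0$ you are using the (standard, continuity-based) convention $\Phi_\rho(\x^*)=0$, since the formula $\bigl(\sum_i u_i^{\rho}\bigr)^{1/\rho}$ is not literally defined there. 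With those two small additions the proof is complete.
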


Lemma~\ref{lem:ces-price-curves} states that for any maximum CES welfare allocation $\x$, there exist price curves of the form $g_j(x) = q_j x^{1-\rho}$ where $(\x, \g)$ is a PCE. Lemma~\ref{lem:ces-sufficient} states the converse: if $\g$ takes the form $g_j(x) = q_j x^{1-\rho}$ for nonnegative constants $q_1\dots q_m$, and $(\x, \g)$ is a PCE, then $\x$ is a maximum CES welfare allocation. Furthermore, $q_1\dots q_m$ are the Lagrange multipliers of the convex program for maximizing CES welfare, so $q_1\dots q_m$ can be computed in polynomial time.

We are now finally ready to prove that $\atp(\rho)$ Nash-implements $\Psi_\rho$. We will make of the follow results from previous sections:
\begin{enumerate}
\item Theorem~\ref{thm:tp-to-pc}: Any Nash equilibrium of $\atp$ can be converted into an ``equivalent" price curve equilibrium.
\item Theorem~\ref{thm:pc-to-tp}: Any price curve equilibrium can be converted into an ``equivalent" Nash equilibrium of $\atp$.
\item Lemma~\ref{lem:scaling}: The set of Nash equilibrium outcomes of $\atp$ is invariant to constant scaling of the constraint curves.
\item Lemma~\ref{lem:ces-price-curves}~\cite{goel_beyond_2018}: For any maximum CES welfare allocation $\x,$ there exist price curves $\g$ of the form $g_j(x) = q_j x^{1-\rho}$ such that $(\x, \g)$ is a PCE.
\item Lemma~\ref{lem:ces-sufficient}~\cite{goel_beyond_2018}: If there exist price curves $\g$ of the form $g_j(x) = q_j x^{1-\rho}$ such that $(\x, \g)$ is a PCE, then $\x$ is a maximum CES welfare allocation.
\end{enumerate}

Recall that for a utility profile $\bfu$, the induced game of mechanism $\atp(\rho)$ is denoted by $\atp(\rho)(\bfu)$. We left $\bfu$ implicit when dealing with Nash equilibria in previous sections, but we make it explicit here.

\thmCES*

\begin{proof}
We need to show that for any utility profile $\bfu$, $\emptyset \ne NE_X(\atp(\rho)(\bfu)) \subseteq \Psi_\rho(\bfu)$: in words, for any $\bfu$, there is at least one Nash equilibrium, and every Nash equilibrium allocation of $\atp(\rho)(\bfu)$ is a maximum CES welfare allocation with respect to $\rho$ and $\bfu$.

Pick any $\xs \in \Psi_\rho(\bfu)$, and define $q_1\dots q_m$ and $\g$ as in Lemma~\ref{lem:ces-price-curves}. Define $\f$ by $f_j(b) = q_j b^{1-\rho}$ when $q_j \ne 0$, and $f_j(b) = b^{1-\rho}$ when $q_j = 0$. By Lemma~\ref{lem:scaling}, we have $NE_X(\atp(\rho)(\bfu)) = NE_X(\atp(\f)(\bfu))$. Thus it suffices to show that $\emptyset \ne NE_X(\atp(\f)(\bfu)) \subseteq \Psi_\rho(\bfu)$.

We first show that $NE_X(\atp(\f)(\bfu)) \ne \emptyset$, i.e., $\atp(\f)(\bfu)$ has at least one Nash equilibrium. By Lemma~\ref{lem:ces-price-curves}, $(\xs, \g)$ is a PCE. Since $g_j(x) = q_j x^{1-\rho}$ by Lemma~\ref{lem:ces-price-curves}, we have $f_j(b) = g_j(b)$ whenever $q_j \ne 0$ (which is equivalent to $g_j \equiv 0$). When $g_j \equiv 0$, $f_j(b) = b^{1-\rho}$, which is strictly increasing. Thus $\f$ satisfies the requirements of Theorem~\ref{thm:pc-to-tp}. If we define $\B$ as a function of $\xs$ as in Theorem~\ref{thm:pc-to-tp}, then by Theorem~\ref{thm:pc-to-tp}, $\B \in NE(\atp(\f)(\bfu))$. Therefore $\atp(\f)(\bfu)$ has at least one Nash equilibrium.

It remains to show that $NE_X(\atp(\f)(\bfu)) \subseteq \Psi_\rho(\bfu)$, i.e., every Nash equilibrium outcome of $\atp(\f)(\bfu)$ is a maximum CES welfare allocation. Consider an arbitrary $\x \in NE_X(\atp(\f)(\bfu))$. Then there exists $\B \in NE(\atp(\f)(\bfu))$ such that $\x = \atp(\f, \B)$. Noting that each $f_j$ is homogenous of degree $1-\rho$, define $\gprime$ as a function of $\f$ and $a_1\dots a_m$ as a function of $\B$ as in Theorem~\ref{thm:tp-to-pc}:
\[
a_j = \Big(\frac{\sum_{k \in N} b_{kj}}{s_j}\Big)^{1-\rho}
\quad \text{and} \quad
g'_j(x) = \begin{cases}
0 & \text{ if } b_{ij}\in \{0,\beta\}\ \forall i \in N\\
a_j f_j(x) & \text{ otherwise}
\end{cases} 
\]
By Theorem~\ref{thm:tp-to-pc}, $(\x, \gprime)$ is a PCE. Furthermore, we can write each $g'_j$ as $g'_j(x) = q'_j x^{1-\rho}$ for nonnegative constants $q'_1\dots q'_m$. Therefore by Lemma~\ref{lem:ces-sufficient}, $\x \in \Psi_\rho(\bfu)$.

Thus we have shown that $\x \in \Psi_\rho(\bfu)$ for all $\x \in NE_X(\atp(\f)(\bfu))$, so $NE_X(\atp(\rho)(\bfu)) \subseteq \Psi_\rho(\bfu)$. Since $NE_X(\atp(\rho)(\bfu)) = NE_X(\atp(\f)(\bfu))$, we conclude that $\emptyset \ne NE_X(\atp(\rho)(\bfu)) \subseteq \Psi_\rho(\bfu)$.
\end{proof}

Finally, we note that a Nash equilibrium $\B \in NE(\atp(\rho)(\bfu))$ can be computed in polynomial time. Since $q_1\dots q_m$ are the Lagrange multipliers of the convex program for maximizing CES welfare, they can be computed in polynomial time. Then Theorem~\ref{thm:pc-to-tp} can be applied to obtain $\mathbf{b'} \in NE(\atp(\f)(\bfu))$, and finally Lemma~\ref{lem:scaling} yields an equivalent $\B \in NE(\atp(\rho)(\bfu))$.

\subsubsection{Maskin's approach and no veto power}\label{sec:no-veto-power}

As discussed in Section~\ref{sec:related}, Maskin proved that in a very general environment, any social choice rule satisfying monotonicity and no veto power is Nash-implementable~\cite{maskin_nash_1999}. We briefly show that bandwidth allocation does not satisfy no veto power for any $\rho \in (-\infty, 1]$, and thus is not conducive to Maskin's approach.

\begin{definition}\label{def:veto}
A social choice rule $\Psi$ satisfies \emph{no veto power} if whenever there exists an allocation $\x$ where for all $i \in N$ except at most 1, $u_i(x_i) \ge u_i(y_i)$ for all allocations $\y$, we have $\x \in \Psi(\bfu)$.
\end{definition}

In words, if there is a single allocation that everyone (except at most one agent) agrees is their favorite, then that allocation should be optimal under $\Psi$ (the last agent should not be able to ``veto" this allocation). In general, agents will not agree on a favorite allocation: each agent would like to receive all of the resources herself. However, when agents' $R_i$ sets are pairwise disjoint, it is possible for all agents to agree on a favorite allocation.

Consider an instance with $n$ agents and $n$ goods, each with supply 1. For all $i \in N$, let $R_i = \{i\}$: each agent just desires a single good. Consider the allocation $\x$ where for all $i \in \{1\dots n-1\}$, $x_{ii} = 1$, but $x_{nn} = 0$ (and $x_{ij} = 0$ otherwise). For agents $1\dots n-1$, this is the most utility they can possibly get, so this satisfies the precondition of Definition~\ref{def:veto}. However, for any $\rho \in (-\infty, 1]$, $\x \not \in \Psi_\rho(\bfu)$, because the CES welfare can be improved by increasing $x_{nn}$. Specifically, for every $\rho$, the unique optimal CES allocation has $x_{ii} = 1$ for all $i \in \{1\dots n\}$.

\section{Dominant strategy implementation, strategyproofness, and maxmin welfare}\label{sec:dse}

In Section~\ref{sec:ces}, we showed that for every $\rho \in (-\infty, 1)$, CES welfare maximization is Nash-implementable. A natural question to ask is whether this result can be improved to dominant strategy equilibrium implementation (DSE implementation, for short). In Section~\ref{sec:ces-not-sp}, we show that for every $\rho \in (-\infty, 1]$, $\Psi_\rho$ is not DSE-implementable (Theorem~\ref{thm:ces-not-sp}). In contrast, Section~\ref{sec:maxmin} shows that maxmin welfare ($\rho = -\infty$) is in fact DSE-implementable by a simple revelation mechanism (Theorem~\ref{thm:maxmin}). Finally, Section~\ref{sec:maxmin-nash} uses a more complex mechanism to show that maxmin welfare is also Nash-implementable (Theorem~\ref{thm:maxmin-nash}).

\paragraph{Review of relevant concepts.}

An important property related to DSE implementation is strategyproofness. Recall that a mechanism is strategyproof when honestly reporting one's preferences is always a dominant strategy. As discussed in Section~\ref{sec:impl}, DSE-implementability implies strategyproofness by the revelation principle, but the converse is not necessarily true: strategyproofness ensures that truth-telling is \emph{a} dominant strategy equilibrium, but there could also be bad dominant strategy equilibria. For our positive result DSE result (Theorem~\ref{thm:maxmin}), we will show that our mechanism is strategyproof, and also that there are no bad dominant strategy equilibria. For our negative DSE result (Theorem~\ref{thm:ces-not-sp}), we show that the social choice rule in question is not strategyproof, which implies that it is not DSE-implementable.

Furthermore, as also discussed in Section~\ref{sec:impl}, DSE-implementability does \emph{not} imply Nash-implementability: DSE-implementability requires every DSE to be consistent with $\Psi$, but the mechanism might have additional (non-DSE) Nash equilibria that are not consistent with $\Psi$. In fact, our DSE implementation of maxmin welfare is not a Nash implementation: it may have Nash equilibria that are not optimal (see Section~\ref{sec:maxmin-not-nash}). In Section~\ref{sec:maxmin-nash}, we give a more complex mechanism that does Nash-implement maxmin welfare (Theorem~\ref{thm:maxmin-nash}).

We briefly discuss a subtlety relating to uniqueness (and lack thereof). In a sense, all strategyproof mechanisms that implement a social choice rule $\Psi$ are the same: they all ask agents to report their utility functions $u_1\dots u_n$, then compute an outcome $\x \in \Psi(\bfu)$\footnote{This is in the setting where no payments are involved, like this paper. If payments are allowed, these mechanisms can of course differ in what agents are asked to pay.}. However, if $\Psi(\bfu)$ contains multiple elements (i.e., there are multiple optimal allocations), it may matter which is chosen. If leftover supply is allocated arbitrarily, it can be hard to reason about the optimal allocation under different utility profiles. Furthermore, not even the optimal vector of agent utilities is unique for maxmin welfare and utilitarian welfare (although it is for $\rho \in (-\infty, 1))$.

Consequently, for both of our positive results (Theorems~\ref{thm:maxmin} and \ref{thm:maxmin-nash}), we will specify our mechanism such that it selects a unique allocation for each utility profile $\bfu$ (even when they are multiple optimal allocations). For our negative result (Theorem~\ref{thm:ces-not-sp}), we will give an instance where an agent lying makes her utility in \emph{every} new optimal allocation strictly larger than her utility in \emph{every} optimal allocation under a truthful report.

%%%%%%%%%%%%%%%
%%%%%%%%%%%%%%%
%%%%%%%%%%%%%%% CES NEGATIVE
\subsection{For all $\rho \in (-\infty, 1]$, CES welfare maximization is not DSE-implementable}\label{sec:ces-not-sp}

To show impossibility of DSE implementation, it is sufficient to show impossibility of strategyproofness. Our counterexample will be the following instance with 5 agents and 7 goods, where each row is an agent, each column is a good, and the cell in the $i$th row and $j$th column gives $w_{ij}$:
\begin{table}[h]
\centering
\begin{tabular}{ c|ccccccc} 
& g1 & g2 & g3 & g4 & g5 & g6 & g7\\
\hline
agent 1 & 1 & 1 & 0 & 0 & 0 & 0 & 1\\
agent 2 & 0 & 0 & 1 & 1 & 0 & 0 & 1\\
agent 3 & 0 & 0 & 0 & 0 & 1 & 1 & 1\\
agent 4 & 1 & 0 & 1 & 0 & 1 & 0 & 0\\
agent 5 & 0 & 1 & 0 & 1 & 0 & 1 & 0\\
\end{tabular}
\end{table}

Let the supply of good 7 be 2, and let all other goods have supply 1. Notice that agents 1, 2, and 3 all conflict on good 7, but otherwise are not in competition. Agents 4 and 5 are not in competition with each other, but each conflicts with each of agents 1, 2, and 3. Let $\bfu$ denote this utility profile, and $\mathbf{u'}$ denote the utility profile where $R'_4 = \{1,3,5,7\}$ instead of $R_4 = \{1,3,5\}$, and all other utilities are unchanged. We will claim that under utility profile $\bfu$, agent 4 can increase her utility by misreporting $R'_4$ instead of $R_4$.

We will prove this using two main lemmas. Lemma~\ref{lem:truthful} states that when agent 4 truthfully reports $R_4$, her utility is strictly less than 1/2. Lemma~\ref{lem:lie} states that when agent 4 lies and reports $R'_4$ instead, her utility is at least 1/2 (Lemma~\ref{lem:concave-max} is a tool used in the proof of Lemma~\ref{lem:lie}). Note that each lemma is referring to agent 4's \emph{true} utility function $u_4$.

%%%%% truthful
\begin{lemma}\label{lem:truthful}
For every $\rho \in (-\infty, 1]$, every $\x \in \Psi_\rho(\bfu)$ has $u_4(x_4) < 1/2$.
\end{lemma}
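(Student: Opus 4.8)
The plan is to analyze the structure of any maximum CES welfare allocation $\x \in \Psi_\rho(\bfu)$ and show the supply constraints force agent 4 below $1/2$. Under the true profile $\bfu$, agent 4 requires $R_4 = \{1,3,5\}$, so $u_4(x_4) = \min\{x_{41}, x_{43}, x_{45}\}$. Each of goods $1,3,5$ has supply $1$ and is shared with exactly one of agents $1,2,3$ respectively (good 1 with agent 1, good 3 with agent 2, good 5 with agent 3; agent 5 does not want any of $1,3,5$, and neither does the ``other'' of agents 1/2/3 in each case). Meanwhile agents $1,2,3$ each also need good 7 (supply $2$), which they mutually contest, plus one private-ish good shared only with agent 5 (agent 1 also needs good 2, shared with agent 5; agent 2 needs good 4, shared with agent 5; agent 3 needs good 6, shared with agent 5).

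First I would argue that in any optimal allocation, every agent has strictly positive utility: the CES objective is $-\infty$ (for $\rho<0$), $0$ (for $\rho=0$), or strictly suboptimal (for $\rho\in(0,1]$) if some $u_i=0$, and one can always give everyone positive utility, so optimality forces $u_i(x_i)>0$ for all $i$. In particular $u_2(x_2)>0$ and $u_3(x_3)>0$, which means agents 2 and 3 consume positive amounts of every good they require. Now focus on good 7: agents $1,2,3$ all require it, so $x_{17}+x_{27}+x_{37}\le 2$ with all three positive; hence at least one of them, say agent $i^\star\in\{1,2,3\}$, has $x_{i^\star 7}\le 2/3 < 1$, so $u_{i^\star}(x_{i^\star})< 1$... but actually I need a cleaner handle. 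The key observation I would push is a counting/conflict argument: consider the three goods $1,3,5$ wanted by agent 4. Good 1 is also wanted by agent 1; good 3 also by agent 2; good 5 also by agent 3. If agent 4 were to get $\ge 1/2$ of each, then agents $1,2,3$ each get $\le 1/2$ of one of their required goods (good 1 for agent 1, good 3 for agent 2, good 5 for agent 3), so $u_1(x_1),u_2(x_2),u_3(x_3)\le 1/2$. I would then show this situation is not optimal by exhibiting a strictly better allocation: reduce agent 4's consumption on goods $1,3,5$ slightly and redistribute to agents $1,2,3$, simultaneously increasing $u_1,u_2,u_3$ while decreasing only $u_4$; since the CES function with $\rho\ne$ (appropriate sign) strictly prefers a Pigou–Dalton-type equalizing transfer among the low agents when $u_4\ge 1/2 > u_1,u_2,u_3$... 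I'd need to be careful here, so more precisely: use strict concavity of $t\mapsto t^\rho$ (for $\rho<1$, $\rho\ne 0$) or of $\log$ (for $\rho=0$) to show that moving mass from agent 4 (currently at $\ge 1/2$) toward agents $1,2,3$ (currently at $\le 1/2$) strictly increases $\sum u_i^\rho$ (resp. $\sum\log u_i$), hence strictly increases $\Phi_\rho$; this contradicts optimality. For $\rho=1$ the argument is even simpler since the transfer can be made to strictly increase the sum of utilities: feeding agent 4's released bandwidth to agent 1 on good 1 raises $u_1$ one-for-one while $u_4$ drops at most one-for-one, but we can choose the redistribution to reach more than one of agents $1,2,3$... actually for $\rho=1$ I should argue directly that the unique optimal structure gives agent 4 exactly the bottleneck value, which here is strictly less than $1/2$; I'd verify this by noting goods $2,4,6$ are each contested between agent 5 and one of $\{1,2,3\}$, and good 7 is triple-contested, forcing a tight budget.

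The main obstacle I anticipate is handling $\rho=1$ (utilitarian) and $\rho\to-\infty$ (maxmin) cleanly within one framework, since strict concavity arguments degenerate at these endpoints: for $\rho=1$ the objective is linear so equalizing transfers are only weakly improving, and for maxmin many utility vectors tie. For maxmin I would instead argue combinatorially that the maxmin-optimal value of this instance is some explicit $v^\star$, and separately that $v^\star<1/2$ by a feasibility bound — e.g., agents $1,2,3$ collectively require good 7 which has supply only $2$, and they also each individually need a supply-$1$ good shared with agent 4 or agent 5, so a flow/LP-duality bound on $\min_i u_i$ caps it below $1/2$; then note agent 4's utility in any maxmin-optimal (Pareto-refined) allocation is at most the overall min plus the structural constraint. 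For $\rho=1$, I'd argue that since the instance's conflict graph forces every agent's required goods to be ``tight'' (each good fully allocated among its claimants), the utilitarian optimum is attained only at allocations where agent 4's bottleneck is pinned by the three-way split of good 7 combined with the goods-$1,3,5$ sharing, yielding $u_4<1/2$. I would present the generic $\rho\in(-\infty,0)\cup(0,1)$ case via the strict-concavity swap argument as the main case, then dispatch $\rho=1$ and $\rho=-\infty$ with the explicit feasibility bounds as short separate paragraphs.
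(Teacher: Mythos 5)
Your core idea --- a Pigou--Dalton perturbation showing that $u_4 \ge 1/2$ (and hence $u_1,u_2,u_3 \le 1/2$) cannot be a maximizer --- is a genuinely different route from the paper's. The paper first shows that the achievable utility vectors are exactly those satisfying $u_i + u_k \le 1$ for $i\in\{1,2,3\}$, $k\in\{4,5\}$ and $u_1+u_2+u_3 \le 2$, writes CES maximization as a convex program over this polytope, splits into cases by whether the good-7 constraint binds (a pigeonhole bound in the tight case), reduces to two variables $(u_A,u_B)$ by a symmetry argument, and solves the first-order condition to get $u_A^* = 1/((3/2)^{1/(\rho-1)}+1) > 1/2$. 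Your perturbation would, if carried out carefully, let you skip the symmetry reduction and the explicit derivative computation, so it is in principle more elementary. But the writeup as given has holes at exactly the steps you yourself flag as uncertain.

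The central gap: you assert you can ``reduce agent 4's consumption on goods $1,3,5$ slightly and redistribute to agents $1,2,3$, simultaneously increasing $u_1,u_2,u_3$ while decreasing only $u_4$.'' That step is not valid as stated. To raise $u_1 = \min\{x_{11},x_{12},x_{17}\}$ you must also raise $x_{12}$ (good 2, shared with agent 5) and $x_{17}$ (good 7, shared with agents 2 and 3). Good 7 does have slack --- from $u_1,u_2,u_3 \le 1/2$ you get $x_{17}+x_{27}+x_{37}\le 3/2 < 2$ --- but you never say so. Goods 2, 4, 6 have slack only if $u_i + u_5 < 1$; otherwise you must also decrease $u_5$, so the transfer is really a $3$-for-$2$ move from $\{4,5\}$ to $\{1,2,3\}$. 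You gesture at this but never pin it down, and without it the claimed strict improvement does not follow. Your worry about $\rho = 1$ is a false obstacle: once the transfer is correctly a $3$-for-$2$ move, the sum of utilities rises by $3\delta - 2\delta = \delta > 0$ with no concavity needed (and for $\rho < 1$ the first-order change $\delta\bigl(\sum_{i\in A} h'(u_i) - \sum_{k\in B} h'(u_k)\bigr) \ge \delta\, h'(1/2) > 0$ even at the boundary $u_A=u_B=1/2$). Finally, the paragraph on maxmin is misdirected: in this paper's conventions $\rho=-\infty$ is not a member of $(-\infty,1]$, so the lemma never covers maxmin welfare; the effort spent there is wasted.
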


\begin{proof}
For $\rho = 0$, an optimal Nash welfare allocation can be computed explicitly, and any such allocation $\x$ will have $u_4(x_4) < 1/2$. Recall that although the optimal allocation may not be unique, the optimal utility vector is, since Nash welfare is strictly concave.

Let $A = \{1,2,3\}$ and $B = \{4,5\}$.  For $\rho \in (-\infty, 0)\cup(0, 1]$, we write the following convex program for maximizing CES welfare:
\begin{align*}
\max\limits_{u_1, u_2, u_3, u_4, u_5 \in \bbrpos} &\ (u_1^\rho + u_2^\rho + u_3^\rho + u_4^\rho + u_5^\rho)^{1/\rho} \\\nonumber
s.t.\ &\ u_i + u_k \le 1\ \ \forall i \in A, k \in B\\
&\ u_1 + u_2 + u_3 \le 2
\end{align*}
We are using $u_i$ as a variable in the convex program, but we will reserve $u_i(x_i)$ to denote agent $i$'s utility for $\x \in \Psi_\rho(\bfu)$. 

By construction, for every $i \in A$ and $k \in B$, there is a good such $j$ such that $x_{ij} + x_{kj} \le 1$ where $w_{ij} = w_{kj} = 1$. This implies that for any $\x \in \Psi_\rho(\bfu)$, $u_i(x_i) + u_k(x_k) \le 1$ for all $i \in A$ and $k \in B$. Similarly, the supply constraint of good 7 implies that $u_1(x_1) + u_2(x_2) + u_3(x_3) \le 2$. Furthermore, any such allocation is indeed feasible: simply let $x_{ij} = w_{ij} u_i(x_i)$ for all $i \in N$. This means that the set of possible utilities for feasible allocations is equal to the set of feasible solutions $u_1\dots u_5$ to the above convex program. Thus this program correctly maximizes CES welfare. This implies that for every $\x \in \Psi_\rho(\bfu)$, there exists an optimal solution to the above convex program $u_1^* \dots u_5^*$ such that $u_i(x_i) = u_i^*$ for all $i \in N$. We proceed by case analysis.

Case 1: $u_1^* + u_2^* + u_3^* = 2$. In this case, one of those three agents must have utility at least 2/3. Since agent 4 is in competition with each of those agents for a good with supply 1, this implies that $u_4^* = 1/3 < 1/2$.

Case 2: $u_1^* + u_2^* + u_3^* \ne 2$. Because of the convex program's second constraint, $u_1^* + u_2^* + u_3^* > 2$ is not feasible, so we must have $u_1^* + u_2^* + u_3^* < 2$. In this case, the convex program above reduces to:
\begin{align*}
\max\limits_{u_1, u_2, u_3, u_4, u_5 \in \bbrpos} &\ (u_1^\rho + u_2^\rho + u_3^\rho + u_4^\rho + u_5^\rho)^{1/\rho} \\\nonumber
s.t.\ &\ u_i + u_k \le 1\ \ \forall i \in A, k \in B
\end{align*}
We next claim that there exists $u_A^*$ and $u_B^*$ such that $u_i^* = u_A^*$ for all $i \in A$, and $u_k^* = u_B^*$ for all $i \in B$. Suppose there exists $i, i' \in A$ such that $u_i^* < u_{i'}^*$. Then we could increase $u_1^*$ while still obeying the constraints of this program, and increase $\frac{1}{\rho}(u_1^\rho + u_2^\rho + u_3^\rho + u_4^\rho + u_5^\rho)$. That would imply that $u_1^* \dots u_5^*$ could not be optimal. Furthermore, an identical argument applies to the agents in $\B$.

Therefore there exists $u_A^*$ and $u_B^*$ such that $u_i^* = u_A^*$ for all $i \in A$, and $u_k^* = u_B^*$ for all $i \in B$, so we can further rewrite the convex program as
\begin{align*}
\max\limits_{u_A, u_B\in \bbrpos} &\ (3 u_A^\rho + 2 u_B^\rho)^{1/\rho} \\\nonumber
s.t.\ &\ u_A + u_B \le 1
\end{align*}
where $(u_A^*, u_B^*)$ is the optimal solution of this program. Clearly we must have $u_B^* = 1- u_A^*$. Furthermore, we claim that we can change the objective function from $(3 u_A^\rho + 2 u_B^\rho)^{1/\rho}$ to $\frac{1}{\rho}(3 u_A^\rho + 2 u_B^\rho)$, and that this changes the optimal value of the program, but does not change the optimal solution, i.e., the argmax. This because when $\rho > 0$, maximizing $(3 u_A^\rho + 2 u_B^\rho)^{1/\rho}$ is equivalent to maximizing $3 u_A^\rho + 2 u_B^\rho$, and when $\rho < 0$, maximizing $(3 u_A^\rho + 2 u_B^\rho)^{1/\rho}$ is equivalent to minimizing $3 u_A^\rho + 2 u_B^\rho$, which is equivalent to maximizing $\frac{1}{\rho}(3 u_A^\rho + 2 u_B^\rho)$. Thus our new convex program is
\begin{align*}
\max\limits_{u_A \in [0,1]} &\ \frac{1}{\rho}(3 u_A^\rho + 2 (1-u_A)^\rho)\nonumber
\end{align*}
This is a program we can analyze. For $\rho = 1$, the objective function becomes $u_A + 2$, so we immediately have $u_A^* = 1$ and thus $u_4^* = u_B^*= 0$. For $\rho < 1$, we take the derivative with respect to $u_A$ should be 0 when evaluated at $u_A^*$:
\begin{align*}
3{u_A^*}^{\rho - 1} - 2(1- u_A^*)^{\rho - 1} =&\ 0\\
3{u_A^*}^{\rho - 1} =&\ 2(1- u_A^*)^{\rho - 1}\\
3^{\frac{1}{\rho - 1}}u_A^* =&\ 2^{\frac{1}{\rho - 1}}(1- u_A^*)\\
(3^{\frac{1}{\rho - 1}} + 2^{\frac{1}{\rho - 1}})u_A^* =&\ 2^{\frac{1}{\rho - 1}}\\
u_A^* =&\ \frac{2^{\frac{1}{\rho - 1}}}{3^{\frac{1}{\rho - 1}} + 2^{\frac{1}{\rho - 1}}}\\
u_A^* =&\ \frac{1}{(3/2)^{\frac{1}{\rho-1}} + 1}
\end{align*}
Since $\rho < 1$, $\rho - 1$ is negative, so $\frac{1}{\rho - 1}$ is negative. Then since $3/2 > 1$, $(3/2)^{\frac{1}{\rho - 1}} < 1$. Altogether, this implies that
\[
u_1^* = u_2^* = u_3^* = u_A^* > 1/2 \quad \text{and} \quad u_4^* = u_5^* = u_B^* < 1/2
\]
as required.
\end{proof}

%%%%% concavity
The following lemma is a standard property of strictly concave and differentiable functions: it essentially states that any such function is bounded above by any tangent line. This lemma is sometimes called the ``Rooftop Theorem". To avoid confusion with $\mathbf{u'}$ and $\xprime$, we use $D h$ to denote the derivative of $h$, instead of $h'$ (this is Euler's notation for derivatives).

\begin{lemma}\label{lem:concave-max}
Let $h: \bbr \to \bbr$ be strictly concave and differentiable. Then for all $a,b \in \bbr$ where $a\ne b$, $h(a) < h(b) + (Dh(b))(a-b)$.
\end{lemma}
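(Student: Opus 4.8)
The plan is to prove the Rooftop Theorem (Lemma~\ref{lem:concave-max}) directly from strict concavity and differentiability, by comparing the secant slope to the tangent slope at $b$. The statement $h(a) < h(b) + (Dh(b))(a-b)$ is equivalent, after dividing the difference $h(b) + (Dh(b))(a-b) - h(a)$ by $(a-b)$, to a comparison between the secant slope $\frac{h(a)-h(b)}{a-b}$ and $Dh(b)$; but the direction of the inequality flips depending on whether $a > b$ or $a < b$, so I would handle those two cases (and they are genuinely symmetric in flavor). The cleanest route avoids case analysis entirely: define $\phi(t) = h(b + t(a-b))$ for $t \in [0,1]$, which is strictly concave in $t$, with $\phi(0) = h(b)$, $\phi(1) = h(a)$, and $\phi'(0) = (Dh(b))(a-b)$; then the claim reduces to $\phi(1) < \phi(0) + \phi'(0)$, i.e., the one-variable statement that a strictly concave differentiable function lies strictly below its tangent line at $0$, evaluated at the point $1$.

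First I would establish the one-variable fact: if $\phi:[0,1]\to\bbr$ is strictly concave and differentiable, then $\phi(1) < \phi(0) + \phi'(0)$. By strict concavity, for any $t\in(0,1)$ we have $\phi(t) > (1-t)\phi(0) + t\phi(1)$, which rearranges to $\frac{\phi(t)-\phi(0)}{t} > \phi(1)-\phi(0)$. Letting $t\to 0^+$, the left side tends to $\phi'(0)$ by differentiability, giving $\phi'(0) \ge \phi(1) - \phi(0)$. To upgrade this to a strict inequality, I would use that strict concavity gives a genuinely strict gap at, say, $t = 1/2$: $\phi(1/2) > \tfrac12\phi(0) + \tfrac12\phi(1)$, and then by concavity of $\phi$ on $[0,1/2]$ the slope $\frac{\phi(t)-\phi(0)}{t}$ is nonincreasing in $t$ and hence is $\ge \frac{\phi(1/2)-\phi(0)}{1/2} > \phi(1)-\phi(0)$ for all $t \le 1/2$; taking $t \to 0^+$ then yields $\phi'(0) \ge \frac{\phi(1/2)-\phi(0)}{1/2} > \phi(1)-\phi(0)$, which is the strict inequality wanted.

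Then I would translate back: with $\phi(t) = h(b+t(a-b))$, the chain rule gives $\phi'(0) = (Dh(b))(a-b)$, and $\phi$ inherits strict concavity from $h$ because $t \mapsto b + t(a-b)$ is affine and non-constant (here $a \ne b$ is used essentially). Plugging into $\phi(1) < \phi(0) + \phi'(0)$ gives exactly $h(a) < h(b) + (Dh(b))(a-b)$.

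I do not expect a serious obstacle here — this is a standard calculus fact. The only point requiring a little care is getting the \emph{strict} inequality rather than the weak one, since naively taking the limit $t\to0^+$ of a strict inequality only yields a weak inequality; the fix, as above, is to interpose a fixed interior point (e.g. $t=1/2$) where strict concavity already gives a quantitative gap, and use monotonicity of secant slopes to push that gap down to the limit at $0$. An alternative that also works is the mean value theorem: write $h(a) - h(b) = (Dh(\xi))(a-b)$ for some $\xi$ strictly between $a$ and $b$, and then use that strict concavity makes $Dh$ strictly decreasing, so $(Dh(\xi))(a-b) < (Dh(b))(a-b)$ in both cases $a>b$ and $a<b$; I would mention this as the quicker route if the paper is willing to invoke the MVT and monotonicity of the derivative of a strictly concave function.
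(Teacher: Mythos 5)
The paper itself states Lemma~\ref{lem:concave-max} without proof, treating it as a standard fact (the ``Rooftop Theorem''), so there is no in-paper argument to compare against. Your proof is correct, and the structure is clean. The reduction to the one-variable problem via $\phi(t) = h(b + t(a-b))$ neatly absorbs the $a>b$ versus $a<b$ case split into the single claim $\phi(1) < \phi(0) + \phi'(0)$, and you correctly verify that $\phi$ inherits strict concavity because $t\mapsto b+t(a-b)$ is affine and injective when $a\ne b$. You also correctly identified the one genuinely delicate point: taking $t\to 0^+$ in the strict inequality $\frac{\phi(t)-\phi(0)}{t} > \phi(1)-\phi(0)$ only yields a weak inequality in the limit. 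Your fix --- interposing the fixed interior point $t=\tfrac12$, where strict concavity supplies a quantitative gap $\frac{\phi(1/2)-\phi(0)}{1/2} > \phi(1)-\phi(0)$, and then using the nonincreasing secant slopes of a concave function to bound $\phi'(0)$ below by that fixed quantity --- closes the gap cleanly. The MVT alternative you sketch is also sound; its only prerequisite is that $Dh$ is strictly decreasing for a strictly concave differentiable $h$, which holds because $Dh$ is nonincreasing by concavity, and if it were constant on a nondegenerate interval then $h$ would be affine there (contradicting strict concavity). Either route would be an acceptable, complete proof of the lemma.
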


%%%%% lie lemma
\begin{lemma}\label{lem:lie}
For any $\rho \in (-\infty, 1]$, for any $\xprime \in \Psi_\rho(\mathbf{u'})$, $u_4(x'_4) \ge 1/2$.
\end{lemma}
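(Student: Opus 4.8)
The plan is to reduce the claim to identifying the (unique) optimal utility vector for the reported profile $\mathbf{u'}$, and then pinning that vector down exactly. First, since $R_4 \subseteq R'_4$, for any allocation $\xprime$ we have $u_4(x'_4) = \min_{j \in R_4} x'_{4j} \ge \min_{j \in R'_4} x'_{4j} = u'_4(x'_4)$, where $u'_4$ denotes agent $4$'s \emph{reported} utility under $\mathbf{u'}$; so it suffices to show that in every $\xprime \in \Psi_\rho(\mathbf{u'})$ agent $4$'s reported utility is $1/2$. Exactly as in the proof of Lemma~\ref{lem:truthful}, since every good is shared by exactly two agents except good $7$ (shared by agents $1,2,3,4$ once agent $4$ reports $R'_4$), the set of utility vectors realizable by feasible allocations equals the feasible region $P$ of the program maximizing $(\sum_i u_i^\rho)^{1/\rho}$ subject to $u_i + u_4 \le 1$ and $u_i + u_5 \le 1$ for $i \in \{1,2,3\}$, together with $u_1 + u_2 + u_3 + u_4 \le 2$; realizability is witnessed by $x_{ij} = u_i$ for $j \in R'_i$ and $x_{ij} = 0$ otherwise. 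Note $\mathbf{u}^\circ := (\tfrac12,\tfrac12,\tfrac12,\tfrac12,\tfrac12) \in P$.

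The key step is the inequality $\sum_{i=1}^5 u_i \le \tfrac52$ for every $\mathbf{u} \in P$, with equality only at $\mathbf{u}^\circ$. This follows by adding the seven constraints with weights $\tfrac16$ on each of the three constraints $u_i + u_4 \le 1$, $\tfrac13$ on each of the three constraints $u_i + u_5 \le 1$, and $\tfrac12$ on $u_1 + u_2 + u_3 + u_4 \le 2$: the coefficient of each $u_i$ on the left-hand side works out to $1$, and the right-hand side is $3 \cdot \tfrac16 + 3 \cdot \tfrac13 + \tfrac12 \cdot 2 = \tfrac52$. Since all seven weights are strictly positive, equality forces every one of the seven constraints to be tight, and that linear system has the unique solution $\mathbf{u} = \mathbf{u}^\circ$.

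Finally I would combine this with concavity. For $\rho = 1$, CES welfare is $\sum_i u_i$ itself, so $\mathbf{u}^\circ$ is the unique maximizer over $P$ by the displayed inequality. For $\rho \in (-\infty,1)$, choose the strictly concave differentiable $h$ with $h(t) = t^\rho$ if $\rho \in (0,1)$, $h(t) = \log t$ if $\rho = 0$, and $h(t) = -t^\rho$ if $\rho < 0$, so that maximizing CES welfare over $P$ is equivalent to maximizing $\sum_i h(u_i)$ and $c := Dh(\tfrac12) > 0$. By Lemma~\ref{lem:concave-max} applied at $b = \tfrac12$, $h(u_i) \le h(\tfrac12) + c\,(u_i - \tfrac12)$, strictly unless $u_i = \tfrac12$; summing over $i$ and using $\sum_i u_i \le \tfrac52$ together with $c > 0$ gives $\sum_i h(u_i) \le \sum_i h(\tfrac12)$, strictly unless $\mathbf{u} = \mathbf{u}^\circ$. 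Hence for every $\rho \in (-\infty, 1]$ the unique CES-optimal utility vector on this instance is $\mathbf{u}^\circ$, so any $\xprime \in \Psi_\rho(\mathbf{u'})$ has $x'_{4j} \ge u'_4(x'_4) = \tfrac12$ for every $j \in R'_4 \supseteq R_4$, and therefore $u_4(x'_4) = \min_{j \in R_4} x'_{4j} \ge \tfrac12$.

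The one genuinely creative step is guessing the weight vector $(\tfrac16,\tfrac16,\tfrac16,\tfrac13,\tfrac13,\tfrac13,\tfrac12)$ certifying $\sum_i u_i \le \tfrac52$ (equivalently, the LP‑dual optimal certificate that $\mathbf{u}^\circ$ maximizes $\sum_i u_i$ over $P$); everything else is bookkeeping plus the standard Rooftop argument. As an alternative that avoids the guess, one could run a case analysis in the style of Lemma~\ref{lem:truthful}: split on whether the good‑$7$ constraint binds, use symmetry among agents $1,2,3$ (and, when good $7$ is slack, among agents $4,5$) to collapse to the two‑variable program $\max_{u_A,u_B}\ (3u_A^\rho + 2u_B^\rho)^{1/\rho}$ under $u_A + u_B \le 1$ and $3u_A + u_4 \le 2$, and observe that both sides of the kink push $u_A$ to $1/2$.
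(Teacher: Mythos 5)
Your proposal is correct, and its skeleton matches the paper's: reduce to agent 4's \emph{reported} utility via $R_4 \subseteq R'_4$, then show that under $\mathbf{u'}$ every CES-optimal allocation gives all agents reported utility exactly $1/2$, using the Rooftop inequality (Lemma~\ref{lem:concave-max}) anchored at $1/2$ together with the bound $\sum_i u_i \le 5/2$. The one place you genuinely diverge is how that bound and the $\rho=1$ case are established: the paper simply asserts that the $\rho=1$ optimum ``can be computed explicitly'' and then uses utilitarian optimality of that allocation to get $\sum_i u_i'^y \le \sum_i u_i'^*$, whereas you write down the feasible utility polytope $P$ for the reported profile and exhibit the explicit dual weights $(\tfrac16,\tfrac16,\tfrac16,\tfrac13,\tfrac13,\tfrac13,\tfrac12)$ certifying $\sum_i u_i \le \tfrac52$ with equality only at the all-$\tfrac12$ vector. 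This buys you a self-contained verification of the step the paper leaves implicit (and it handles $\rho=1$ uniqueness in the same stroke), at the cost of having to guess the certificate; the paper's version is shorter but leans on an unproved explicit computation. Both arguments are sound, and your handling of the sign conventions for $h$ across $\rho<0$, $\rho=0$, $\rho\in(0,1)$ is consistent with the paper's use of $\frac{1}{\rho}a^\rho$.
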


\begin{proof}
It suffices to show that $u'_4(x'_4) \ge 1/2$: since $R_4 \subset R'_4$, we have
\[
u_4(x'_4) = \min_{j \in R_4} x'_{4j} \ge \min_{j \in R'_4} x'_{4j} = u'_4(x'_4)
\]
For $\rho = 1$, the set of optimal allocations can be computed explicitly to see that for all $\xprime \in \Psi_{\rho = 1}(\mathbf{u'})$, $u_i'(x'_i) = 1/2$ for all $i \in N$. 

We now use this to show that for any $\rho \in (-\infty, 1)$, for any $\xprime \in \Psi_\rho(\mathbf{u'})$, $u'_i(x'_i) = 1/2$ for all $i \in N$. Intuitively, the larger $\rho$ is, the more we care about efficiency and the less we care about fairness. But if the most efficient solution (i.e., the optimal allocation for $\rho = 1$) coincides with the most fair solution (i.e., having all utilities equal), then no matter how much we care about efficiency vs fairness, we should get the same outcome.

Let $\xs$ be any allocation in $\Psi_{\rho = 1}(\mathbf{u'})$: then $u_i'(x_i^*) = 1/2$ for all $i \in N$. Fix a $\rho \in (-\infty, 1)$ and let $h(a) = \frac{1}{\rho} a^\rho$ if $\rho \ne 0$, and $h(a) = \log(a)$ if $\rho = 1$. In both of these $h$ is strictly concave and differentiable. Consider any allocation $\y$ where for some $i \in N$, $u_i(y_i) \ne 1/2$. For brevity, let $u_i'^* = u_i'(x_i^*)$ and $u_i'^y = u_i'(y_i)$.

For all such $i$, Lemma~\ref{lem:concave-max} implies that $h(u_i'^y) < h(u_i'^*) + (Dh(u_i'^*))(u_i'^y - u_i'^*)$ for $u_i'^* \ne u_i'^y$. When $u_i'^y = 1/2 = u_i'^*$, we have $h(u_i'^y) = h(u_i'^*) + (Dh(u_i'^*))(u_i'^y - u_i'^*) = 0$. Thus for $\rho \ne 0$, we have
\begin{align*}
 \sum_{i \in N}  \frac{1}{\rho}{(u_i'^y)}^\rho =&\ \sum_{i \in N} h(u_i'^y)\\
<&\ \sum_{i \in N} \Big(h(u_i'^*) + (Dh(u_i'^*))(u_i'^y - u_i'^*)\Big)\\
=&\ \sum_{i \in N} \Big( \frac{1}{\rho} {(u_i'^*)}^\rho + (Dh(1/2))(u_i'^y - u_i'^*)\Big)\\
=&\ \frac{1}{\rho} \sum_{i \in N} {(u_i'^*)}^\rho +\sum_{i \in N} (Dh(1/2))(u_i'^y - u_i'^*)\\
=&\ \frac{1}{\rho} \sum_{i \in N} {(u_i'^*)}^\rho +(Dh(1/2))\Big(\sum_{i \in N}u_i'^y - \sum_{i \in N} u_i'^*\Big)
\end{align*}
Since $\xs \in \Psi_{\rho = 1}(\mathbf{u'})$, $\sum_{i \in N} u_i'^* \ge \sum_{i \in N}u_i'^y$. Therefore $(Dh(1/2))\big(\sum_{i \in N}u_i'^y - \sum_{i \in N} u_i'^*\big) \le 0$, so
\begin{align*}
\frac{1}{\rho} \sum_{i \in N}  {(u_i'^y)}^\rho < \frac{1}{\rho} \sum_{i \in N} {(u_i'^*)}^\rho
\end{align*}
As before, this implies that $\big(\sum_{i \in N}  {(u_i'^y)}^\rho\big)^{1/\rho} < \big(\sum_{i \in N} {(u_i'^*)}^\rho\big)^{1/\rho}$. The analysis for $\rho = 0$ (i.e., Nash welfare) is the same, except we end up with $\sum_{i \in N}  \log(u_i'^y) < \sum_{i \in N} \log(u_i'^*)$ instead, which implies $\prod_{i \in N} u_i'^y < \prod_{i \in N} u_i'^*$.

Thus for any allocation $\y$ where there exists $i \in N$ with $u_i'(y_i) \ne 1/2$, the CES welfare of $\xs$ is better than the CES welfare of $\y$. This implies that for any $\rho \in (-\infty, 1]$, any $\xprime \in \Psi_\rho(\mathbf{u'})$ must have $u_i'(x'_i) = 1/2$ for all $i \in N$.
\end{proof}

%%%% MAIN THEOREM
\begin{theorem}\label{thm:ces-not-sp}
For all $\rho \in (-\infty, 1]$, $\Psi_\rho$ is not DSE-implementable.
\end{theorem}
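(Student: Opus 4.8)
The plan is to reduce the statement to the impossibility of strategyproofness, and then derive that impossibility directly from Lemmas~\ref{lem:truthful} and~\ref{lem:lie} applied to the fixed $5$-agent, $7$-good instance. Since DSE-implementability implies strategyproofness via the revelation principle (in the no-payments setting, a DSE implementation can be turned into a direct revelation mechanism in which truth-telling is a dominant strategy), it suffices to show that no direct revelation mechanism can simultaneously implement $\Psi_\rho$ and be strategyproof.

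First I would fix an arbitrary direct revelation mechanism $H$ that implements $\Psi_\rho$, so that for every reported profile $\bfv$ the outcome $H(\bfv)$ lies in $\Psi_\rho(\bfv)$. I apply this twice: once to the truthful profile $\bfu$, yielding $\x := H(\bfu) \in \Psi_\rho(\bfu)$, and once to the profile $\mathbf{u'}$ in which agent $4$ reports $R'_4 = \{1,3,5,7\}$ and all other agents report exactly as in $\bfu$, yielding $\xprime := H(\mathbf{u'}) \in \Psi_\rho(\mathbf{u'})$. The profile $\mathbf{u'}$ is precisely the one obtained from $\bfu$ by agent $4$ unilaterally misreporting her required-link set, so comparing agent $4$'s \emph{true} utility $u_4$ at $\x$ versus at $\xprime$ decides whether truth-telling is a best response for her when the other agents report truthfully.

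Next I would invoke the two lemmas. Lemma~\ref{lem:truthful} gives $u_4(x_4) < 1/2$ for \emph{every} element of $\Psi_\rho(\bfu)$, and Lemma~\ref{lem:lie} gives $u_4(x'_4) \ge 1/2$ for \emph{every} element of $\Psi_\rho(\mathbf{u'})$; the universal quantifiers are exactly what make the argument robust to whatever tie-breaking $H$ uses to select among multiple optimal allocations. Chaining the inequalities gives $u_4(x'_4) \ge 1/2 > u_4(x_4)$, so agent $4$ strictly increases her true utility by reporting $R'_4$ instead of her true $R_4$. Hence truth-telling is not a dominant strategy in $H$ at the profile $\bfu$, so $H$ is not strategyproof. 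Since $H$ was an arbitrary implementing mechanism, no strategyproof mechanism implements $\Psi_\rho$, and therefore $\Psi_\rho$ is not DSE-implementable, for every $\rho \in (-\infty, 1]$.

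The substantive work has already been done in Lemmas~\ref{lem:truthful}--\ref{lem:lie} (and the auxiliary Lemma~\ref{lem:concave-max}); the only point requiring care at the level of the theorem is handling the multiplicity of optimal allocations correctly — which is precisely why both lemmas are phrased with ``for every $\x \in \Psi_\rho(\cdot)$'' rather than for a single optimal allocation — together with stating the revelation-principle reduction properly. I do not anticipate any genuine obstacle: the proof of the theorem itself is essentially a short assembly of the preceding lemmas with the revelation principle.
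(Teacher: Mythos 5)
Your proposal is correct and takes essentially the same route as the paper: reduce DSE-implementability to strategyproofness via the revelation principle, then use Lemma~\ref{lem:truthful} (every $\x \in \Psi_\rho(\bfu)$ has $u_4(x_4) < 1/2$) and Lemma~\ref{lem:lie} (every $\xprime \in \Psi_\rho(\mathbf{u'})$ has $u_4(x'_4) \ge 1/2$) to exhibit a profitable unilateral misreport for agent~4. Your write-up is somewhat more careful than the paper's in explicitly noting that the universal quantifiers in the two lemmas make the argument independent of the tie-breaking rule among optimal allocations, but the underlying argument is identical.
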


\begin{proof}
Similar to the proof of Theorem~\ref{thm:maxmin}, it suffices to show that $\Psi_\rho$ cannot be computed in a strategyproof mechanism. Suppose there were a strategyproof mechanism $H$: then for utility profile $\bfu$, $H$ must return an allocation $\x \in \Psi_\rho(\bfu)$, and for utility profile $\mathbf{u'}$, $H$ must return an allocation $\xprime \in \Psi_\rho(\mathbf{u'})$. By Lemmas~\ref{lem:truthful} and \ref{lem:lie}, we have $u_4(x_4) < 1/2$ and $u_4(x'_4) \ge 1/2$. If agent 4 reports $R'_4 = \{1,3,5,7\}$ instead of $R_4 = \{1,3,5\}$, she alters the utility profile from $\bfu$ to $\mathbf{u'}$, which resulting in her receiving a bundle with higher utility. Therefore $H$ is not strategyproof.
\end{proof}

%%%%%%%%%% MAXMIN DSE
\subsection{Maxmin welfare is DSE-implementable}\label{sec:maxmin}

We will claim that Mechanism~\ref{mech:maxmin-dse} DSE-implements maxmin welfare. We are using $u_i$ as a variable in the convex program in Step 2, but we will reserve $u_i(x_i)$ for denoting agent $i$'s utility for a bundle $x_i$. We could have used $u_i \ge \gamma$ and $u_i w_{ij} \le x_{ij}$ for our first two constraints, but requiring $u_i = \gamma$ and $u_i w_{ij} = x_{ij}$ ensures a unique solution (and does not affect the optimal value).

%%%%
\begin{mechanism}
\begin{enumerate}
\item Ask each agent $i$ report her set of desired goods $R_i$ (which fully specifies her utility function $u_i$ and her weights $w_{i1}\dots w_{im}$).
\item Let $(\xs, \mathbf{u^*})$ be an optimal solution the following convex program:
\begin{alignat*}{2}
\max\limits_{\substack{\x \in \bbr_{\ge 0}^{n\times m},\\ \bfu = (u_1...u_n) \in \bbrpos^n}} &\ \gamma  \\\nonumber
s.t.\ &\ u_i = \gamma &&\ \forall i \in N\\
&\ u_i w_{ij} = x_{ij} \quad\quad &&\ \forall i \in N, j \in M\\
&\ \sum\limits_{i \in N} x_{ij}\leq s_j\quad\quad &&\ \forall j \in M 
\end{alignat*}
\item Return the allocation $\xs$.
\end{enumerate}
\caption{A simple revelation mechanism which DSE-implements maxmin welfare.}
\label{mech:maxmin-dse}
\end{mechanism}

%%%% maxmin correct
We say that an allocation $\x$ is \emph{maxmin-optimal} if the minimum utility in $\x$ is the largest possible minimum utility among all valid allocations. Formally,  $\min_{i \in N} u_i(x_i) = \max_{\xprime} \min_{i \in N} u_i(x'_i)$. 

\begin{lemma}\label{lem:maxmin-correct}
Assume agents truthfully report their desired sets of goods. Then Mechanism~\ref{mech:maxmin-dse} correctly computes a maxmin-optimal allocation.
\end{lemma}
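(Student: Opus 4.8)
The plan is to identify the optimal value $\gamma^*$ of the convex program in Step 2 with the largest achievable minimum utility, and then observe that the returned allocation $\xs$ attains exactly this value. First I would record that the program is feasible (take $\gamma = 0$ and $\x = \zero$) and that its feasible region is compact: every constraint is an equality or a non-strict inequality, and since $R_i \neq \emptyset$ we may pick some $j \in R_i$, giving $\gamma = u_i = x_{ij} \le s_j$, which bounds $\gamma$ and hence all variables. So an optimal solution $(\xs, \mathbf{u^*})$ with value $\gamma^*$ exists, and it suffices to show $\min_{i \in N} u_i(x^*_i) = \max_{\xprime}\min_{i \in N} u_i(x'_i)$.

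For the easy inequality, note that the optimal solution satisfies $\sum_{i \in N} x^*_{ij} \le s_j$, so $\xs$ is a valid allocation; moreover $x^*_{ij} = u^*_i w_{ij} = \gamma^* w_{ij}$, and since $w_{ij} = 1$ exactly for $j \in R_i$ with $R_i \neq \emptyset$, we get $u_i(x^*_i) = \min_{j \in R_i} x^*_{ij} = \gamma^*$ for every $i \in N$. Hence $\min_{i \in N} u_i(x^*_i) = \gamma^*$, and since $\xs$ is one particular valid allocation, $\gamma^* \le \max_{\xprime}\min_{i \in N} u_i(x'_i)$. For the reverse inequality, take an arbitrary valid allocation $\xprime$, let $\gamma = \min_{i \in N} u_i(x'_i)$, and build the candidate solution $\bar u_i = \gamma$, $\bar x_{ij} = \gamma w_{ij}$. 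I would check it is feasible: for $j \in R_i$ we have $x'_{ij} \ge u_i(x'_i) \ge \gamma = \gamma w_{ij}$, and for $j \notin R_i$ we have $\gamma w_{ij} = 0 \le x'_{ij}$, so $\bar x_{ij} \le x'_{ij}$ for all $i,j$ and therefore $\sum_{i \in N}\bar x_{ij} \le \sum_{i \in N} x'_{ij} \le s_j$. This feasible solution has objective value $\gamma$, so $\gamma^* \ge \gamma = \min_{i \in N} u_i(x'_i)$; taking the maximum over $\xprime$ gives $\gamma^* \ge \max_{\xprime}\min_{i \in N} u_i(x'_i)$.

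Combining the two inequalities yields $\min_{i \in N} u_i(x^*_i) = \gamma^* = \max_{\xprime}\min_{i \in N} u_i(x'_i)$, which is precisely the definition of $\xs$ being maxmin-optimal, so Mechanism~\ref{mech:maxmin-dse} returns a maxmin-optimal allocation. The only mildly delicate step — the one I would flag as the crux — is the ``rounding down'' observation in the reverse direction: any valid allocation can be shrunk coordinatewise to one in which each agent $i$ receives exactly $\gamma w_{ij}$ of each good $j$ without violating any supply constraint. Everything else is bookkeeping; in particular, nothing essential uses the strict form $u_i = \gamma$ (as opposed to $u_i \ge \gamma$) of the constraints beyond producing a single clean solution.
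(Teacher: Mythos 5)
Your proposal is correct and uses essentially the same key idea as the paper: shrinking an arbitrary valid allocation coordinatewise to $\bar x_{ij} = w_{ij}\,\min_k u_k(x'_k)$ to obtain a feasible solution of the convex program, which is exactly the construction the paper uses (phrased there as a contradiction rather than your direct two-inequality argument). Your added remarks on feasibility/compactness and on the returned allocation attaining the objective value are just extra bookkeeping, not a different route.
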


\begin{proof}
Let $\x$ be the allocation returned by Mechanism~\ref{mech:maxmin-dse}, and suppose there exists $\y$ such that $\min_{k \in N} u_k(y_k) > \min_{k \in N} u_k(x_k)$. Consider the allocation $\xprime$ where $x'_{ij} = w_{ij} \min_{k \in N} u_k(y_k)$ for all $i,j$. Then $u_i(x'_i) =  \min_{k \in N} u_k(y_k)$ for all $i \in N$. Furthermore, $y_{ij} \ge w_{ij} u_i(y_i)$ by definition of $u_i(y_i)$, and $u_i(y_i) \ge u_i(x'_i)$, so
\[
y_{ij} \ge w_{ij} u_i(y_i) \ge w_{ij} u_i(x'_i) = x'_{ij}
\]
Thus since $\y$ is a valid allocation, so is $\xprime$. Therefore $\xprime$ is feasible for our convex program, and $\min_{i \in N} u_i(x'_i) = \min_{i \in N} u_i(y_i) > \min_{i \in N} u_i(x_i)$, so $\x$ could not have been an optimal solution to our convex program.
\end{proof}

%%% strategyproof
\begin{lemma}\label{lem:maxmin-sp}
Mechanism~\ref{mech:maxmin-dse} is strategyproof.
\end{lemma}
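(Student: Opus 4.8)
The plan is to prove that for every agent $i$, truthfully reporting $R_i$ weakly dominates every other report, no matter what the other agents report. Fix $i$ and arbitrary reports of the agents $k \ne i$, with induced weights $w_{kj}$; let $\hat R_i$ be any report by agent $i$ and let $\hat{w}$ denote the resulting full weight matrix (so $\hat{w}_{ij} = 1$ iff $j \in \hat R_i$, and $\hat{w}_{kj} = w_{kj}$ for $k \ne i$). First I would record a closed form for the mechanism's output under $\hat{w}$. Since the convex program in Step~2 forces $x_{kj} = u_k \hat{w}_{kj}$ and $u_k = \gamma$ for all $k$, a value $\gamma$ is feasible exactly when $\gamma \sum_{k \in N}\hat{w}_{kj} \le s_j$ for every $j \in M$; hence the optimal value is
\[
\gamma^*(\hat{w}) \;=\; \min\Big\{\, \tfrac{s_j}{\sum_{k \in N}\hat{w}_{kj}} \;:\; j \in M,\ \textstyle\sum_{k \in N}\hat{w}_{kj} > 0 \,\Big\},
\]
which is finite and nonnegative (at least one good is desired, as $R_i \ne \emptyset$), and the unique optimal allocation returned in Step~3 is $x_{kj} = \gamma^*(\hat{w})\,\hat{w}_{kj}$.

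Next I would evaluate agent $i$'s \emph{true} utility $u_i(x_i) = \min_{j \in R_i} x_{ij}$ at this output, splitting into two cases. If $R_i \not\subseteq \hat R_i$, pick $j_0 \in R_i \setminus \hat R_i$; then $x_{i j_0} = \gamma^*(\hat{w})\cdot \hat{w}_{i j_0} = 0$, so $u_i(x_i) = 0$. If $R_i \subseteq \hat R_i$, then $\hat{w}_{ij} = 1$ for every $j \in R_i$, so $u_i(x_i) = \min_{j \in R_i}\gamma^*(\hat{w})\,\hat{w}_{ij} = \gamma^*(\hat{w})$; in particular, reporting truthfully (so $\hat{w} = w$) gives agent $i$ true utility exactly $\gamma^*(w)$.

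The remaining ingredient is a monotonicity observation: if $R_i \subseteq \hat R_i$ then $\hat{w}_{ij} \ge w_{ij}$ for every $j$, so every column sum $\sum_{k} \hat{w}_{kj}$ weakly increases, every ratio $s_j/\sum_k \hat{w}_{kj}$ weakly decreases, and the minimum is taken over a (weakly) larger index set, whence $\gamma^*(\hat{w}) \le \gamma^*(w)$. Combining the three observations: truthful reporting yields true utility $\gamma^*(w)$; any report with $R_i \subseteq \hat R_i$ yields $\gamma^*(\hat{w}) \le \gamma^*(w)$; and any report with $R_i \not\subseteq \hat R_i$ yields $0 \le \gamma^*(w)$. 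Hence no report does strictly better than the truth, so Mechanism~\ref{mech:maxmin-dse} is strategyproof. I expect no genuinely hard step here; the only points that need care are (i) reducing the Step~2 program to the explicit min-of-ratios form so that both the optimum and the returned allocation are pinned down, and (ii) remembering that for deviations which omit a truly needed good the relevant quantity is agent $i$'s \emph{true} utility (which collapses to $0$), not the program's value $\gamma^*(\hat{w})$.
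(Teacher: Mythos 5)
Your proof is correct and follows essentially the same route as the paper's: both reduce to showing that enlarging agent $i$'s reported set can only weakly decrease the program's optimal $\gamma$, while any report dropping a truly needed good forces her true utility to $0$. Your write-up is a touch cleaner in two small ways --- the explicit min-of-ratios formula for $\gamma^*(\hat{w})$ makes the monotonicity transparent rather than arguing by inclusion of feasible sets, and your case split on $R_i \subseteq \hat R_i$ versus $R_i \not\subseteq \hat R_i$ is exhaustive, whereas the paper's split into $R'_i \subsetneq R_i$ and $R_i \subsetneq R'_i$ leaves the incomparable case implicit (though the paper's first-case argument covers it).
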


\begin{proof}
Let $R_1\dots R_n$ be the true desired sets of goods. Suppose for sake of contradiction that there exists an instance where an agent $i$ can increase her utility by reporting some $R_i' \ne R_i$. Let $x_i$ and $x'_i$ be agent $i$'s bundles when she reports $R_i$ and $R'_i$, respectively (assuming other agents make the same reports in both cases). Due to the constraint $u_i w_{ij} = x_{ij}$, our mechanism will set $x'_{ij} = 0$ for all $j \not \in R_i'$. If $R_i' \subsetneq R_i$, then there exists a $j \in R_i$ where $j \not \in R_i'$. Thus $x'_{ij} = 0$, which implies that $u_i(x'_i)= 0$, since $j \in R_i$. 

Suppose $R_i \subsetneq R'_i$, and let $w'_{i1}\dots w'_{im}$ be the weights associated with $R'_i$. In this case, there exists a $j \in R'_i \setminus R_i$, so $w'_{ij} = 1$ and $w_{ij} = 0$. We claim that any utility vector $u_1\dots u_n$ that is feasible in the original program (when agent $i$ reports $R_i$) is also feasible in the new program (when agent $i$ reports $R'_i$). Let $x_{kj} = u_k w_{kj}$ and $x'_{kj} = u_k w'_{kj}$. Since $w'_{kj} \ge w_{kj}$ for all $k,j$, we have $x'_{kj} \ge x_{kj}$ for all $j \in M$, so $\sum_{k \in N} x_{kj} \le \sum_{k \in N} x'_{kj} \le s_j$. Thus if $\xprime$ and $u_1\dots u_n$ are feasible together, so are $\x$ and $u_1\dots u_n$. This means that the optimal value of the new program is at most the optimal value of the original program: the objective functions are the same, and the feasible set for the new program is a subset of that of the original program. Since each agent's utility is equal to the objective value of the convex program, this means that agent $i$'s utility when she reports $R'_i$ cannot improve.

Thus we have shown that reporting $R'_i \ne R_i$ cannot improve agent $i$'s utility. We conclude that this mechanism is strategyproof.
\end{proof}

%%%% THEOREM
\begin{theorem}\label{thm:maxmin}
Mechanism~\ref{mech:maxmin-dse} DSE-implements maxmin welfare.
\end{theorem}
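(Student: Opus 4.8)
The plan is to derive Theorem~\ref{thm:maxmin} from the two preceding lemmas together with a characterization of the dominant strategies of Mechanism~\ref{mech:maxmin-dse}. Recall that DSE-implementation of $\Psi_{-\infty}$ requires, for every utility profile $\bfu$: (i) at least one dominant strategy equilibrium exists, and (ii) every DSE outcome is a maxmin-optimal allocation, i.e.\ lies in $\Psi_{-\infty}(\bfu)$. Claim (i) is immediate from Lemma~\ref{lem:maxmin-sp}: truthful reporting is a dominant strategy for each agent, so the all-truthful profile is a DSE, and by Lemma~\ref{lem:maxmin-correct} its outcome is maxmin-optimal. All the work is in (ii) --- ruling out DSEs built from dominant strategies other than truth-telling, which is exactly the gap between strategyproofness and DSE-implementation flagged earlier in the paper.

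For (ii), write $\gamma(\hat R_1,\dots,\hat R_n)$ for the optimal value of the Step 2 convex program when the reported sets are $\hat R_1,\dots,\hat R_n$, and let $\gamma^*=\gamma(R_1,\dots,R_n)$ be the true maxmin value. First I would pin down the dominant strategies. If agent $i$ reports $\hat R_i$ with $R_i\not\subseteq\hat R_i$, the constraint $u_i w_{ij}=x_{ij}$ (with weights taken from $\hat R_i$) forces $x_{ij}=0$ for some $j\in R_i$, so agent $i$'s true utility is $\min_{j\in R_i}x_{ij}=0$; since reporting truthfully always yields strictly positive utility (the allocation $x_{kj}=w_{kj}\epsilon$ with small $\epsilon>0$ is feasible regardless of the other reports), such a report is strictly dominated and hence not a dominant strategy. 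Therefore every dominant strategy satisfies $R_i\subseteq\hat R_i$, and for such a report the mechanism gives agent $i$ exactly $\gamma(\hat R_i,\hat R_{-i})$ of every good in $\hat R_i\supseteq R_i$, so her true utility equals $\gamma(\hat R_i,\hat R_{-i})$. Moreover, exactly as in the proof of Lemma~\ref{lem:maxmin-sp}, enlarging a reported set only shrinks the feasible region, so $\gamma(\hat R_i,\hat R_{-i})\le\gamma(R_i,\hat R_{-i})$; since truthful reporting (value $\gamma(R_i,\hat R_{-i})$) is itself dominant, a dominant $\hat R_i$ must attain this bound, i.e.\ $\gamma(\hat R_i,X)=\gamma(R_i,X)$ for \emph{every} profile $X$ of the other reports.

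Now fix any DSE $(\hat R_1,\dots,\hat R_n)$. By the above, each $\hat R_i\supseteq R_i$ and each satisfies $\gamma(\hat R_i,X)=\gamma(R_i,X)$ for all $X$. Telescoping one coordinate at a time (substituting the true sets, which are valid reports, for the agents already processed),
\[
\gamma(\hat R_1,\hat R_2,\dots,\hat R_n)=\gamma(R_1,\hat R_2,\dots,\hat R_n)=\gamma(R_1,R_2,\hat R_3,\dots,\hat R_n)=\cdots=\gamma(R_1,\dots,R_n)=\gamma^*.
\]
Call this common value $\hat\gamma$; then $\hat\gamma=\gamma^*$. The mechanism outputs $\xs$ with $x^*_{ij}=w_{ij}\hat\gamma$ (weights from $\hat R_i$), and since $R_i\subseteq\hat R_i$ we get $u_i(x^*_i)=\hat\gamma=\gamma^*$ for every $i$, hence $\min_i u_i(x^*_i)=\gamma^*$. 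Because $\xs$ is feasible (Step 2 enforces $\sum_i x^*_{ij}\le s_j$) and $\gamma^*$ is the largest possible minimum utility over all valid allocations (Lemma~\ref{lem:maxmin-correct}), $\xs$ is maxmin-optimal, so $\xs\in\Psi_{-\infty}(\bfu)$, establishing (ii).

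The main obstacle is precisely the one the paper already highlights: strategyproofness alone does not rule out non-truthful dominant strategies, so I have to show that any such strategy is an over-report $\hat R_i\supsetneq R_i$ that never changes the convex program's optimal value against any opponents, and then argue that this ``harmlessness'' property is robust enough to be applied to all $n$ agents at once --- which works only because it is quantified over \emph{all} opponent profiles, so I can peel off one agent at a time while substituting true sets for the rest.
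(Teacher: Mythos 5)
Your proposal is correct and follows essentially the same route as the paper: existence of a DSE via strategyproofness (Lemma~\ref{lem:maxmin-sp}), the observation that any dominant strategy must be an over-report $\hat R_i \supseteq R_i$ whose true utility equals the convex program's value, and a coordinate-by-coordinate telescoping through hybrid profiles showing the program value (and hence every agent's true utility) in an arbitrary DSE equals the truthful value, which is optimal by Lemma~\ref{lem:maxmin-correct}. The only difference is packaging: you first extract the invariance $\gamma(\hat R_i,X)=\gamma(R_i,X)$ for all opponent profiles $X$ and then telescope, whereas the paper applies dominance (indifference) directly at each consecutive pair of hybrid profiles; the two arguments are substantively identical.
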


\begin{proof}

Lemma~\ref{lem:maxmin-sp} implies that there is at least one DSE: in particular, truthful revelation is a DSE. Thus it remains only to show that there are no ``bad" dominant strategy equilibria, i.e., every DSE results in a maxmin-optimal allocation.

We claim that in any DSE, the vector of utilities is the same as in the truthful DSE, which we know has optimal maxmin welfare by Lemma~\ref{lem:maxmin-correct}. Since this is a revelation mechanism, each agent just reports a utility function $u_i$. Let $\bfu = u_1\dots u_n$ be the true utility profile, and let $\mathbf{u'} = u'_1\dots u'_n$ be an arbitrary DSE. Since the mechanism is strategyproof, we know that $\bfu$ is also a DSE. Thus for every agent $i$, $u_i$ and $u_i'$ are both dominant strategies (it is possible that $u_i = u'_i$). 

For each $r \in \{1\dots n+1\}$, define another utility profile $\bfu^r$ where each agent $i \in \{1\dots r-1\}$ reports $u_i$ and each agent $i \in \{r\dots n\}$ reports $u_i'$. Suppose every agent is reporting according to $\bfu^r$. If agent $r$ switches from reporting $u'_r$ to truthfully reporting $u_r$, she alters the utility profile from $\bfu^r$ to $\bfu^{r+1}$. Let $\x^r$ and $\x^{r+1}$ be the resulting allocations for reported utility profiles $\bfu^r$ and $\bfu^{r+1}$, respectively. Since reporting $u_r$ and reporting $u_r'$ are both dominant strategies for agent $r$, she must be indifferent between $\x^{r}$ and $\x^{r+1}$ (according to her true utility function, $u_r$). Formally, $u_r(x_r^r) = u_r(x_r^{r+1})$.

Next, by the definition of the mechanism, each agent has the same utility for her resulting bundle (according to the utility function she reports). Let $\gamma^r$ be every agent's utility for the allocation $\x^r$, according to her reported utility function $u_i^r$: $u_i^r(x_i^r) = \gamma^r$. Our next claim is that $u_i(x_i^r) = \gamma^r$, i.e., each agent's true utility for $\x^r$ is $\gamma^r$. As before, we know that $R_i \subseteq R'_i$ for each agent $i$: reporting $R'_i \subsetneq R_i$ always results in getting zero utility. This means that $R_i \subseteq R_i^r$. Furthermore, the convex program ensures that each agent $i$ receives the same amount of every good in her reported set $R_i^r$. Thus we have
\[
u_i^r(x_i^r) = \min_{j \in R_i^r} x_{ij}^r = \min_{j \in R_i} x_{ij}^r = u_i(x_i^r)
\]
Therefore $u_i(x_i^r) = \gamma^r$, i.e., each agent's true utility for $\x^r$ is $\gamma^r$. In particular, $u_r(x_r^r) = \gamma^r$ and $u_r(x_r^{r+1}) = \gamma^{r+1}$.

We showed above that $u_r(x_r^r) = u_r(x_r^{r+1})$, so we now have $\gamma^{r+1} = \gamma^r$ for all $r$. This implies $\gamma^1 = \gamma^{n+1}$.  Thus each agent's true utility for $\x^1$ (which is $\gamma^1$) is the same as each agent's true utility for $x^{n+1}$ (which is $\gamma^{n+1}$). By definition, $\x^{n+1}$ is the resulting allocation when each agent truthfully reports $u_i$, and $\x^1$ is the resulting allocation when each agent reports $u_i'$. Thus we have shown that each agent's utility is the same in these two allocations.

Since $\mathbf{u'}$ was an arbitrary DSE, we have shown that in any DSE, every agent's utility is the same as in the truthful outcome. Therefore the outcome of any DSE is a maxmin-optimal allocation.
\end{proof}

\subsubsection{The above mechanism does not Nash-implement maxmin welfare}\label{sec:maxmin-not-nash}

In this section, we show that our DSE implementation of maxmin welfare is not a Nash implementation, i.e., there may be Nash equilibria that are not optimal. Consider an instance with $n$ agents and $n$ goods, where each agent $i$'s true set of desired goods is $R_i = \{i\}$. Assume each good has supply 1. The unique maxmin-optimal allocation has $x_{ii} = 1$ for all $i \in N$ and $x_{ij} = 0$ for $j \ne i$, i.e., it gives the entirety of each good to the unique agent who desires it. This results in each agent having utility 1, and thus maxmin welfare of 1.

Now consider the strategy profile where each agent $i$ reports that she desires every good, i.e., reports $M$. The resulting allocation will give each agent exactly $1/n$ of each good, resulting in each agent's utility (according to her true utility function) being $1/n$. We claim that this is a Nash equilibrium. Suppose an agent $i$ reports $R'_i$ instead of $M$. If $R'_i = \emptyset$, agent $i$ receives nothing, so that cannot increase her utility. Thus let $j$ be any good in $R'_i$. Since the other $n-1$ agents are also reporting that they desire good $j$, our mechanism would divide $j$ evenly across all the agents, resulting in agent $i$ receiving $x_{ij} = 1/n$. Since our mechanism gives each agent equal the same quantity of each good in their reported set, agent $i$ does not receive more than $1/n$ of any good, so her utility is at most $1/n$ (in fact, it will be exactly $1/n$). Thus agent $i$ cannot improve her utility by bidding some $R'_i \ne M$, so the strategy profile where each agent reports $M$ is a Nash equilibrium. Furthermore, the maxmin welfare is $1/n$, which is actually a factor of $n$ worse than the optimal maxmin welfare of 1.

%%%%%%%%
%%%%%%%%
%%%%%%%% MAXMIN NASH
\subsection{Maxmin welfare is Nash-implementable}\label{sec:maxmin-nash}

Our Nash implementation of maxmin welfare will use some of the same intuition from our DSE implementation. However, we will have to be careful to avoid bad equilibria like the one described in Section~\ref{sec:maxmin-not-nash}. Let $H$ denote Mechanism~\ref{mech:maxmin-dse} (which DSE-implements maxmin welfare), and let $H_X(\bfu)$ denote the allocation produced by $H$ when the reported utility profile is $\bfu$.

Mechanism~\ref{mech:maxmin-nash} operates as follows. First, it asks each agent to report not only her own utility function, but the utility function of \emph{every} agent. Recall that the utility function is specified by the set of desired goods; hence, agent $i$ reports $R_1(i)\dots R_n(i)$, where $R_k(i)$ is the set of goods that agent $i$ says agent $k$ desires.

Step 2 defines $\eta_i$ and $\barn$, which will be used later to penalize agents in a way that aligns incentives. The scalar $\eta_i$ denotes the number of agents $k$ where what agent $i$ says that agent $k$ wants ($R_k(i)$) conflicts what agent $k$ says that she wants ($R_k(k)$). We will penalize agents for having large values of $\eta_i$ to incentivize them to come to a consensus. The set $\barn$ contains the set of agents $i$ where for some other agent $k$, agent $i$ is saying that agent $k$ wants more goods than agent $k$ is saying that she actually wants (i.e., $R_k(k) \subsetneq R_k(i)$). We will penalize this specific type of disagreement more strongly; the reason will become clear in the proof of Theorem~\ref{thm:maxmin-nash}. 

Step 3 defines $\alpha_i \in [0,1]$: $\alpha_i = 1$ represents no penalty, and $\alpha_i = 0$ represents an absolute penalty (i.e., that agent will end up with no utility). Specifically, $\alpha_i = 0$ for each agent in $\barn$, and for those agents not in $\barn$, a higher $\eta_i$ leads to a higher penalty. Next, we use the mechanism $H$ to compute a maxmin-optimal allocation $\xprime$ \emph{ignoring agents in $\barn$}. It is crucial that we ignore those agents when computing this allocation. Finally, we return an allocation $\x$ which is just $\xprime$ with the $\alpha_i$ penalties applied. As usual, $x_i$ and $y_i$ are vectors in $\bbrpos^m$.

The first thing to notice is that we have solved the problem from Section~\ref{sec:maxmin-not-nash}. We claim that each agent $i$ reporting $R_k(i) = M$ for all $k \in N$ is not longer a Nash equilibrium when agent $i$'s true desired set is a strict subset of $M$. This is because if agent $i$ shrinks $R_i(i)$ to her true subset, but reports $R_k(i) = M$ for all $k \ne i$, $\barn$ will contain every agent except for her. This means that she would get all of the resources in Step 3, which clearly increases her utility. The other agents would then respond by setting $R_i(k) = R_i(i)$ so that they are no longer in $\barn$, but this at least shows that $R_k(i) = M$ for all $i,k\in N$ is not a Nash equilibrium.

%%%%%%%%%
\begin{mechanism}
\begin{enumerate}
\item Ask each agent $i$ to report $R_1(i)\dots R_n(i)$, where $R_k(i) \subseteq M$ for each $k \in N$.
\item For each $i \in N$, let $\eta_i = |\{k \in N: R_k(k) \ne R_k(i) \}|$. Define the set $\barn$ by $\barn = \{i \in N: \exists k \in N \text{ s.t. } R_k(k) \subsetneq R_k(i)\}$.
\item For each $i \in N$, define $\alpha_i \in [0,1]$ as follows. If $i \in \barn$, then $\alpha_i = 0$. Otherwise, let $\alpha_i = 1-\eta_i/n$.
\item Let $\bfu$ be the utility profile where the set of goods desired by agent $i$ is $R_i(i)$ if $i\not \in \barn$, and is $\emptyset$ if $i \in \barn$. Let $\y = H_X(\bfu)$.
\item Return the allocation $\x$ where for each $i \in N$, $x_i = \alpha_i y_i$.
\end{enumerate}
\caption{A mechanism which Nash-implements maxmin welfare.}
\label{mech:maxmin-nash}
\end{mechanism}

For the rest of this section, we will use $\mathbf{\tilr} = \tilr_1\dots \tilr_n$ to denote the true desired sets of goods, and $\mathbf{\tilde{u}} = \tilde{u_1}\dots \tilde{u_n}$ to denote the corresponding utility profile.

%%% truthful case
\begin{lemma}\label{lem:maxmin-nash-truthful}
When each agent reports $\mathbf{\tilr}$, Mechanism~\ref{mech:maxmin-nash} returns a maxmin-optimal allocation.
\end{lemma}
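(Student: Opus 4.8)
The plan is to simply unwind the definitions in Steps 2--5 of Mechanism~\ref{mech:maxmin-nash} under the hypothesis that every agent $i$ reports $R_k(i) = \tilr_k$ for all $k \in N$, and observe that every penalty vanishes, so that the mechanism reduces to running $H$ (Mechanism~\ref{mech:maxmin-dse}) on the true utility profile.

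First I would compute the Step 2 quantities. Since $R_k(i) = \tilr_k = R_k(k)$ for all $i,k$, we get $\eta_i = |\{k : R_k(k) \ne R_k(i)\}| = 0$ for every $i$, and there is no pair $i,k$ with $R_k(k) \subsetneq R_k(i)$, so $\barn = \emptyset$. Consequently, in Step 3, no agent lies in $\barn$, so $\alpha_i = 1 - \eta_i/n = 1$ for every $i \in N$. In Step 4, because $\barn = \emptyset$, the derived profile $\bfu$ assigns to agent $i$ the desired set $R_i(i) = \tilr_i$; that is, $\bfu = \mathbf{\tilde u}$. Hence $\y = H_X(\mathbf{\tilde u})$, and in Step 5 we return $\x$ with $x_i = \alpha_i y_i = y_i$, i.e., $\x = \y = H_X(\mathbf{\tilde u})$.

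It then remains to invoke Lemma~\ref{lem:maxmin-correct}: since $H$ here receives the truthful reports $\mathbf{\tilr}$ (the profile $\bfu$ fed to $H$ in Step 4 is exactly $\mathbf{\tilde u}$, with no agent zeroed out), $H_X(\mathbf{\tilde u})$ is a maxmin-optimal allocation with respect to the true utilities $\mathbf{\tilde u}$. Therefore $\x$ is maxmin-optimal, which is what we wanted.

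There is essentially no hard step here: the argument is a direct substitution. The only point that warrants a sentence of care is confirming that "maxmin-optimal'' in the conclusion is measured against the true utility profile $\mathbf{\tilde u}$ and that Lemma~\ref{lem:maxmin-correct} indeed applies, which holds precisely because the profile passed to $H$ in Step 4 coincides with $\mathbf{\tilde u}$ when all reports are truthful (no agent is in $\barn$, so no agent's desired set is replaced by $\emptyset$).
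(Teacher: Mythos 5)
Your proposal is correct and follows essentially the same route as the paper: both note that truthful reports give $\barn = \emptyset$ and $\alpha_i = 1$ for all $i$, so the profile fed to $H$ in Step 4 is the true profile and the returned allocation equals $H_X(\mathbf{\tilde u})$, which is maxmin-optimal by Lemma~\ref{lem:maxmin-correct}. Your explicit citation of Lemma~\ref{lem:maxmin-correct} just makes precise what the paper states more tersely.
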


\begin{proof}
In this case, we have $\barn = \emptyset$ and $\alpha_i = 1$ for all $i \in N$. Thus the $\bfu$ used in Step 4 is the true utility profile, so $H$ computes a maxmin-optimal $\y$ allocation with respect to the true preferences. Since $\alpha_i = 1$ for all $i \in N$, we have $\x = \y$, so Mechanism~\ref{mech:maxmin-nash} does indeed return a maxmin-optimal allocation.
\end{proof}

%%%% truth is a Nash eq
\begin{lemma}\label{lem:maxmin-truthful-nash-eq}
The strategy profile where each agent reports $\mathbf{\tilr}$ is a Nash equilibrium.
\end{lemma}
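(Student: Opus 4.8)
The plan is to let $\gamma = \max_{\x}\min_{k\in N}\tilde{u}_k(x_k)$ be the optimal maxmin welfare under the true preferences, and to observe that by Lemma~\ref{lem:maxmin-nash-truthful} together with the constraint ``$u_i=\gamma$'' in the convex program defining $H$, every agent's true utility is exactly $\gamma$ when all agents report $\mathbf{\tilr}$. It then suffices to fix an agent $i$, fix an arbitrary deviation by $i$ (all other agents still reporting truthfully, so $R_\ell(k)=\tilr_\ell$ for every $k\ne i$ and every $\ell$), and show that $i$'s true utility for the resulting allocation $\x$ is at most $\gamma$. I would begin by recording the effect of the deviation on Step 2: an agent $k\ne i$ lies in $\barn$ if and only if $R_i(i)\subsetneq\tilr_i$ (the only candidate index for ``$R_\ell(\ell)\subsetneq R_\ell(k)$'' is $\ell=i$), while $i$ lies in $\barn$ if and only if $i$ strictly over-reports some other agent, i.e.\ $\tilr_\ell\subsetneq R_\ell(i)$ for some $\ell\ne i$.

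Next I would carry out a short case analysis, each time bounding $\tilde{u}_i(x_i)$ for the output $\x$ of Mechanism~\ref{mech:maxmin-nash}. If $i\in\barn$, then $\alpha_i=0$, so $x_i=\zero$ and $\tilde{u}_i(x_i)=0\le\gamma$. If $i\notin\barn$ but $R_i(i)\subsetneq\tilr_i$, then the profile $\bfu$ fed to $H$ in Step~4 assigns $i$ the desired set $R_i(i)$; picking any good $j^\star\in\tilr_i\setminus R_i(i)$, the constraint $y_{ij}=u_iw_{ij}$ in $H$ forces $y_{ij^\star}=0$ regardless of the optimal $u_i$, hence $x_{ij^\star}=\alpha_i y_{ij^\star}=0$ and $\tilde{u}_i(x_i)\le x_{ij^\star}=0\le\gamma$. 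In the remaining case $i\notin\barn$ and $R_i(i)\not\subsetneq\tilr_i$, so $\barn=\emptyset$ and $\bfu$ agrees with $\mathbf{\tilde{u}}$ except that $i$'s desired set is $R_i(i)$: if $\tilr_i\not\subseteq R_i(i)$ the previous argument again gives $\tilde{u}_i(x_i)=0$, and if $\tilr_i\subseteq R_i(i)$ then every coordinate $y_{ij}$ with $j\in\tilr_i$ equals the optimal value $\gamma'$ of $H$'s program on $\bfu$, so $\tilde{u}_i(x_i)=\alpha_i\gamma'\le\gamma'$.

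The one genuinely substantive step — and the one I expect to require the most care to state cleanly — is the inequality $\gamma'\le\gamma$ in the last case. Since $R_i(i)\supseteq\tilr_i$, for every valid allocation $\x$ we have $\min_{j\in R_i(i)}x_{ij}\le\min_{j\in\tilr_i}x_{ij}=\tilde{u}_i(x_i)$, so the maxmin objective evaluated under $\bfu$ is pointwise at most the maxmin objective under $\mathbf{\tilde{u}}$; taking the maximum over $\x$ gives $\gamma'\le\gamma$ (here I would also invoke Lemma~\ref{lem:maxmin-correct} so that $H$ actually attains the maxmin optimum of $\bfu$). Combining all cases, any unilateral deviation by agent $i$ leaves her with true utility at most $\gamma$, which is precisely what she obtains under $\mathbf{\tilr}$; since $i$ was arbitrary, $\mathbf{\tilr}$ is a Nash equilibrium. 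A minor point worth one sentence is the degenerate possibility that a reported set (or the $\emptyset$ assigned to $\barn$-agents inside $\bfu$) is empty: such an agent imposes no supply constraint in $H$'s program and has $y_{ij}=u_iw_{ij}=0$ in that coordinate, which is all the argument above actually uses.
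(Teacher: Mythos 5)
Your proof is correct, but it takes a different route from the paper's. The paper argues by contradiction: it first shows that any profitable unilateral deviation would have to leave $\barn = \emptyset$ (a deviation putting the deviator in $\barn$ gives her zero, and the only way to put other agents in $\barn$ is to report $R_i(i) \subsetneq \tilr_i$, which also forces her utility to zero), and then observes that with $\barn = \emptyset$ the intermediate allocation is exactly the output of Mechanism~\ref{mech:maxmin-dse} on the deviated profile, so a strict improvement would contradict the already-established strategyproofness of that mechanism (Lemma~\ref{lem:maxmin-sp}). You instead give a direct quantitative argument: truthful reporting yields true utility exactly $\gamma$ (the maxmin optimum), and every deviation yields at most $\gamma$, handled by cases on $\barn$-membership and on whether $R_i(i)$ under- or over-reports $\tilr_i$. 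Your final case, the inequality $\gamma' \le \gamma$ for $\tilr_i \subseteq R_i(i)$ via the pointwise comparison of maxmin objectives, is essentially an inlined special case of the feasibility argument inside the paper's proof of Lemma~\ref{lem:maxmin-sp}, so your route re-derives locally what the paper reuses as a black box. What each buys: the paper's proof is shorter because it leans on the strategyproofness lemma; yours is more self-contained, makes the equilibrium payoff explicit ($\gamma$ versus at most $\gamma$), and does not require the contradiction framing. One shared blemish: in the case where the deviator reports $R_i(i) \subsetneq \tilr_i$ (so all other agents fall into $\barn$ and are fed empty sets), the convex program in $H$ can be degenerate if $R_i(i) = \emptyset$; you flag empty sets in your closing sentence, and the paper glosses over the same point, so this is not a gap specific to your argument.
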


\begin{proof}
Suppose the opposite: then there exists an agent $i$ who can report $R'_1(i) \dots R'_n(i)$ and increase her utility. When all agents report $\mathbf{\tilr}$, let $\tilde{\alpha}_i$ be agent $i$'s value of $\alpha_i$, let $\mathbf{\tilde{y}}$ be the intermediate allocation produced in Step 4, and let $\mathbf{\tilde{x}}$ be the final resulting allocation. When agent $i$ reports $R'_1(i) \dots R'_n(i)$ and all other agents report $\mathbf{\tilr}$, we use $\alpha'_i$, $\mathbf{y'}$, and $\xprime$ analogously. 

Thus we have assumed that $\tilde{u}_i(x'_i) > \tilde{u}_i (\tilde{x}_i)$, i.e., she is strictly happier when she deviates and reports $R'_1(i) \dots R'_n(i)$. Since $x'_i = \alpha'_i y'_i$ and $\tilde{x}_i = \tilde{\alpha}_i \tilde{y}_i$, we have $\tilde{u}_i(\alpha'_i y'_i) > \tilde{u}_i (\tilde{\alpha}_i \tilde{y}_i)$. When all agents report $\mathbf{\tilr}$, all agents are in agreement, so $\tilde{\alpha}_i = 1$. Since $\alpha'_i \le 1$, we have $\alpha'_i \le \tilde{\alpha}_i$. Therefore we must have $\tilde{u}_i(y'_i) > \tilde{u}_i (\tilde{y}_i)$.

We claim that after this deviation, $\barn = \emptyset$. If $i \in \barn$, she receives zero utility, so such a deviation could not help her. Since all agents $k\ne i$ report the same thing, the only way for agent $k\ne i$ to be in $\barn$ is if what agent $k$ is reporting that agent $i$ wants (which in this case is $\tilr_i$) is a superset of what agent $i$ is reporting that she wants (which in this case is $R'_i(i)$). If $R'_i(i) \subsetneq \tilr_i$, then there is a good $j \in \tilr_i$ that agent $i$ will receive none of, i.e., $y'_{ij} = 0$. This is because the convex program in $H$ will only allocate agent $i$ a portion of good $j$ if good $j$ is in her reported set. Thus $R'_i(i) \subsetneq \tilr_i$ implies that $\tilde{u}_i(y'_i) = 0$. Therefore agent $i$'s utility cannot have improved, which is a contradiction.

Therefore after the deviation, $\barn = \emptyset$. This means that $\mathbf{y'}$ is just the maxmin-optimal allocation computed by Mechanism~\ref{mech:maxmin-dse} for utility profile $\tilr_1, \tilr_2\dots R'_i(i)\dots \tilr_n$. But this implies that agent $i$ is improving her utility for the allocation produced by Mechanism~\ref{mech:maxmin-dse} by reporting $R'_i(i)$ instead of $\tilr_i$. This contradicts the strategyproofness of Mechanism~\ref{mech:maxmin-dse} (by Lemma~\ref{lem:maxmin-sp}). Thus $\tilde{u}_i(y'_i) > \tilde{u}_i (\tilde{y}_i)$ is impossible, which implies that each agent reporting $\mathbf{\tilr}$ is in fact a Nash equilibrium.
\end{proof}

%%%% MAXMIN NASH THEOREM
\begin{theorem}\label{thm:maxmin-nash}
Mechanism~\ref{mech:maxmin-nash} Nash-implements maxmin welfare.
\end{theorem}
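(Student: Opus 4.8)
The plan is to combine the two preceding lemmas — Lemma~\ref{lem:maxmin-nash-truthful} and Lemma~\ref{lem:maxmin-truthful-nash-eq}, which already exhibit one maxmin-optimal Nash equilibrium — with a structural analysis of an \emph{arbitrary} Nash equilibrium, showing that its outcome is maxmin-optimal with respect to the true preferences $\mathbf{\tilr}$. I introduce $R_k^*$ as shorthand for a self-report $R_k(k)$, write $\tilde\gamma$ for the true maxmin-optimal value, and rely on two facts about $H$ (Mechanism~\ref{mech:maxmin-dse}): it returns an allocation in which every agent whose reported set is nonempty receives exactly the common optimal value $\gamma$ on each good of that set (this follows from the equality constraints $u_i=\gamma$, $u_iw_{ij}=x_{ij}$), and $\gamma>0$ whenever at least one reported set is nonempty and all supplies are positive. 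Recall also that $\tilr_i\neq\emptyset$ for every $i$.

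First I would pin down the structure of an arbitrary Nash equilibrium via three claims, each proved by a single-agent deviation. (i) $\tilr_i\subseteq R_i(i)$ for every $i$: otherwise agent $i$ misses one of her true goods and so gets utility $0$, whereas the deviation $R_i(i)'=\tilr_i$ and $R_k(i)'=R_k(k)$ for $k\neq i$ keeps her out of $\barn$, forces $\eta_i=0$ hence $\alpha_i=1$, and — since her reported set is now the nonempty $\tilr_i$ — leaves her with the strictly positive common value of the recomputed allocation. (ii) $\barn=\emptyset$: if $i\in\barn$ then $\alpha_i=0$ so she gets $0$, and the same deviation again yields positive utility. (iii) $\eta_i=0$ for every $i$, so the reports form a consensus $R_k(i)=R_k(k)=R_k^*$ for all $i,k$: given (i) and (ii), $\alpha_i=1-\eta_i/n$, and if $\eta_i>0$ agent $i$ can keep $R_i(i)$ fixed and reset $R_k(i)'=R_k(k)$ for $k\neq i$; altering agent $i$'s reports about others affects neither the self-reports nor any agent $j\neq i$'s $\barn$-membership, so $\barn$ stays empty and the profile $\bfu$ fed to $H$ in Step~4 is unchanged, leaving her bundle $y_i$ unchanged while raising $\alpha_i$ from $1-\eta_i/n$ to $1$ — a strict improvement because $\tilde u_i(y_i)$ equals the (positive) common value, using (i). Thus at any Nash equilibrium the outcome is $\x=\y=H_X(\bfu)$ where $\bfu$ is the consensus profile $(R_1^*,\dots,R_n^*)$ with $R_i^*\supseteq\tilr_i$; since $\tilr_i\subseteq R_i^*$, every agent's \emph{true} utility in $\x$ equals the common value $\gamma^*$ of that allocation, so $\min_i\tilde u_i(x_i)=\gamma^*$.

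It then remains to prove $\gamma^*=\tilde\gamma$. One inequality is immediate: $\y$ is a valid allocation attaining minimum true utility $\gamma^*$, so $\tilde\gamma\geq\gamma^*$. For the reverse I would split into two cases. If $R_i^*=\tilr_i$ for all $i$, then $\bfu=\mathbf{\tilde u}$ and Lemma~\ref{lem:maxmin-correct} gives $\gamma^*=\tilde\gamma$. Otherwise some agent $i_0$ strictly over-reports, $\tilr_{i_0}\subsetneq R_{i_0}^*$; this is the crux. Consider the deviation in which $i_0$ shrinks her self-report to the truth, $R_{i_0}(i_0)'=\tilr_{i_0}$, while keeping $R_k(i_0)'=R_k^*$ for $k\neq i_0$. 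Because by the consensus every other agent $k'$ was reporting $R_{i_0}(k')=R_{i_0}^*\supsetneq\tilr_{i_0}=R_{i_0}(i_0)'$, each $k'\neq i_0$ now enters $\barn$, so in Step~4 agent $i_0$ is the \emph{only} agent with a nonempty set; $H$ therefore hands her the value $\min_{\ell\in\tilr_{i_0}}s_\ell$ on each of her goods, and since $i_0\notin\barn$ and $\eta_{i_0}=0$ she keeps all of it. Nash equilibrium forces her true utility $\gamma^*$ to be at least this, i.e.\ $\gamma^*\geq\min_{\ell\in\tilr_{i_0}}s_\ell$; but no valid allocation gives any agent more than $\min_{\ell\in\tilr_{i_0}}s_\ell$ on her required goods, so $\tilde\gamma\leq\min_{\ell\in\tilr_{i_0}}s_\ell\leq\gamma^*$. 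Hence $\gamma^*=\tilde\gamma$ in both cases, $\x$ attains the true maxmin value, and combined with Lemmas~\ref{lem:maxmin-nash-truthful} and \ref{lem:maxmin-truthful-nash-eq} this shows Mechanism~\ref{mech:maxmin-nash} Nash-implements maxmin welfare.

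The step I expect to be the main obstacle is precisely the last case analysis: ruling out equilibria that settle on a consensus in which agents over-report their own sets, which would deflate the apparent maxmin value below $\tilde\gamma$. The ``monopolize-by-shrinking'' deviation is the tool that forecloses this, and getting its bookkeeping right — exactly which agents land in $\barn$, what $H$ returns on the degenerate profile, and that the resulting utility upper-bounds $\tilde\gamma$ — is the delicate part; by contrast, the three structural claims are routine unilateral-deviation arguments.
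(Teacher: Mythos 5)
Your proof is correct and follows essentially the same approach as the paper: establish structurally that any Nash equilibrium yields $\barn=\emptyset$, $\alpha_i=1$, a consensus profile with $\tilr_i\subseteq R_i(i)$, and then rule out strict over-reporting via the ``monopolize-by-shrinking'' deviation that dumps every other agent into $\barn$, forcing $\gamma^*$ to match the true maxmin optimum. Your writeup is slightly more careful than the paper's in ordering the structural claims (so that the strict-improvement step has the needed positive-utility hypothesis already in hand) and in explicitly noting that changing $R_k(i)$ for $k\neq i$ does not affect any $j\neq i$'s $\barn$-membership or the profile fed to $H$, but the key lemmas, deviations, and case split are the same.
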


\begin{proof}
We need to show that for each problem instance, Mechanism~\ref{mech:maxmin-nash} has at least one Nash equilibrium, and every Nash equilibrium is optimal. Lemma~\ref{lem:maxmin-truthful-nash-eq} implies that at least one Nash equilibrium exists, so it remains to show that every Nash equilibrium results in a maxmin-optimal allocation.

Consider an arbitrary Nash equilibrium where agent $i$ reports $R_1(i) \dots R_n(i)$. First, note that each agent $i$ can always achieve $i \not \in \barn$ and $\alpha_i = 1$ by having $R_k(i) = R_k(k)$ for each $k \ne i$. This also does not restrict what she reports for $R_i(i)$, which is what actually affects the allocation $\y$. Thus in any Nash equilibrium, we have $\barn = \emptyset$, $\alpha_i = 1$ for all $i \in N$, and $R_k(i) = R_k(k)$ for all $i,k \in N$.

Since $H$ allocates a portion of good $j$ to agent $i$ only if $j$ is in agent $i$'s reported set, if $R_i(i) \subsetneq \tilr_i$, then agent $i$ will receive zero utility. Reporting $R_i(i) = \tilr_i$ instead (and still reporting $R_k(i) = R_k(k)$ for $k \ne i$, so that she is not in $\barn$) would give her nonzero utility, so $R_i(i) \subsetneq \tilr_i$ is impossible in a Nash equilibrium. Thus we must have $\tilr_i \subseteq R_i(i)$ for all $i \in N$.

Now suppose that $\tilr_i \subsetneq R_i(i)$. Suppose that agent $i$ reports $R'_i(i) = \tilr_i$ instead (and reports the same $R_k(i)$ for each $k \ne i$). Then since every agent $k \ne i$ is reporting $R_i(k) = R_i(i)$, we now have $R'_i(i) \subsetneq R_i(k)$ for each $k \ne i$. This means that $\barn$ contains every agent except for $i$, so every agent other than $i$ is ignored when computing the allocation $\y$. This means that the allocation $\y$ gives agent $i$ her maximum possible utility (which is $\min_{j \in \tilr_i} s_j$: the minimum supply of any good she desires). Since agent $i$ is still reporting $R_k(i)= R_k(k)$ for all $k\ne i$, she still has $\alpha_i = 1$, which means that in the final allocation $\x$, she receives her maximum possible utility.

Therefore in any Nash equilibrium, for each $i \in N$, either $R_i(i) = \tilr_i$ (i.e., she is reporting her true set), or $\tilr_i \subsetneq R_i(i)$ and agent $i$ is receiving her maximum possible utility. We proceed by case analysis.

Case 1: Every agent agent is reporting $R_i(i) = \tilr_i$. Since we have $R_k(i) = R_k(k)$ for all $i,k \in N$, each agent must be reporting $\mathbf{\tilr}$. Then by Lemma~\ref{lem:maxmin-nash-truthful}, we get a maxmin-optimal allocation in this case.

Case 2: At least one agent $i$ is reporting $\tilr_i \subsetneq R_i(i)$ and thus is receiving her maximum possible utility of $\min_{j \in \tilr_i} s_j$. Let $\gamma=\min_{j \in \tilr_i} s_j$: then for any allocation $\xprime$,
\[
\min_{k \in N} u_k(x_k') \le u_i(x_i') \le \gamma
\]
i.e., the value of the maxmin objective can never be more than $\gamma$.
Since $H$ gives each agent the same utility (according to her reported preferences), for all $k \in N$ we have
\[
\gamma = \min_{j \in R_k(k)} x_{kj} \ge \min_{j \in \tilr_k} = \tilde{u}_k(x_k)
\]
where the inequality is because $\tilr_i\subseteq R_i(i)$. Thus we have $\min_{k \in N} \tilde{u}_k(x_k) = \gamma$. Thus $\x$ must be maxmin-optimal. 

Therefore we have shown that in either case, the Nash equilibrium must result in a maxmin-optimal allocation. We conclude that Mechanism~\ref{mech:maxmin-nash} Nash-implements maxmin welfare.
\end{proof}

\section{Conclusion and future work}

In this paper, we showed that every CES welfare function except $\rho = 1$ can be Nash-implemented by an augmented trading post mechanism. This strengthened previous results which only handled Nash welfare~\cite{branzei_nash_2017} or assumed agents did not behave strategically~\cite{goel_beyond_2018}. Next, we showed that DSE implementation for this problem is generally impossible, with the exception of maxmin welfare, where a simple revelation mechanism does indeed DSE-implement maxmin welfare. Although this revelation mechanism does not Nash-implement maxmin welfare, we were able to Nash-implement maxmin welfare with a different mechanism.

We were not able to resolve whether utilitarian welfare is Nash-implementable for bandwidth allocation. Our trading post mechanism breaks down in this setting, since $f_j(b) = b^{1-1} = 1$ is not a valid constraint curve. Maskin's monotonicity approach is not viable either, since utilitarian welfare does not satisfy no veto power. We leave this as an open question.

Another interesting direction would be to extend these results to a wider range of utility functions. Our reduction between price curves and trading post means that if price curve equilibria maximizing CES welfare were shown to exist for a wider range of utility functions, it seems likely that our Nash implementation results would carry over as well (depending on the form of the price curves).

It would also be interesting to consider another dimension of strategic behavior by allowing agents to choose which path in the network to use. In this case, we could write each agent's utility function as $u_i(x_i) = \max_{p \in P_i} \min_{j \in p} x_{ij}$, where $P_i$ is the set of paths from agent $i$'s desired source to desired destination. This is reminiscent of routing games, in that agents are strategically choosing their paths, but still distinct, in that each agent may use the same link in different quantities (i.e., receive different amounts of bandwidth). Although this model is less accurate in terms of how the internet actually works (see Section~\ref{sec:intro}), it may be an appropriate model for other situations.

More broadly, we feel that trading post is a powerful mechanism that is able to simulate a price-taking market while also handling strategic behavior. We wonder if trading post, or variants thereof, may be useful in designing mechanisms for other resource allocation problems as well.

\section*{Acknowledgements}

This work would not have been possible without my advisor Ashish Goel, who I would like to thank for continued guidance and feedback. This research was supported by NSF Graduate Research Fellowship under grant DGE-1656518.

\bibliographystyle{plain}
\bibliography{refs}

\begin{thebibliography}{10}

\bibitem{adsul_nash_2010}
Bharat Adsul, Ch.~Sobhan Babu, Jugal Garg, Ruta Mehta, and Milind Sohoni.
\newblock Nash equilibria in fisher market.
\newblock In {\em Proceedings of the Third International Conference on
  Algorithmic Game Theory}, SAGT'10, pages 30--41, Berlin, Heidelberg, 2010.
  Springer-Verlag.

\bibitem{arrow_existence_1954}
Kenneth~J. Arrow and Gerard Debreu.
\newblock Existence of an equilibrium for a competitive economy.
\newblock {\em Econometrica}, 22(3):265--290, 1954.

\bibitem{atkinson_1970_measurement}
Anthony~B Atkinson.
\newblock On the measurement of inequality.
\newblock {\em Journal of Economic Theory}, 2(3):244 -- 263, 1970.

\bibitem{bergson_1938_reformulation}
Abram Bergson.
\newblock A reformulation of certain aspects of welfare economics.
\newblock {\em The Quarterly Journal of Economics}, 52(2):310--334, 1938.

\bibitem{blackorby_1978_measures}
Charles Blackorby and David Donaldson.
\newblock Measures of relative equality and their meaning in terms of social
  welfare.
\newblock {\em Journal of Economic Theory}, 18(1):59 -- 80, 1978.

\bibitem{brainard_how_2005}
William~C. Brainard and Herbert~E. Scarf.
\newblock How to compute equilibrium prices in 1891.
\newblock {\em American Journal of Economics and Sociology}, 64(1):57--83,
  2005.

\bibitem{branzei_fisher_2014}
Simina Br\^{a}nzei, Yiling Chen, Xiaotie Deng, Aris Filos-Ratsikas, S{\o}ren
  Kristoffer~Stiil Frederiksen, and Jie Zhang.
\newblock The fisher market game: Equilibrium and welfare.
\newblock In {\em Proceedings of the Twenty-Eighth AAAI Conference on
  Artificial Intelligence}, AAAI'14, pages 587--593. AAAI Press, 2014.

\bibitem{branzei_nash_2017}
Simina Branzei, Vasilis Gkatzelis, and Ruta Mehta.
\newblock Nash social welfare approximation for strategic agents.
\newblock In {\em Proceedings of the 2017 ACM Conference on Economics and
  Computation}, EC '17, pages 611--628, New York, NY, USA, 2017. ACM.

\bibitem{buchanan_toward_1980}
James~M Buchanan, Robert~D Tollison, and Gordon Tullock.
\newblock {\em Toward a theory of the rent-seeking society}.
\newblock Number~4. Texas A \& M Univ Pr, 1980.

\bibitem{cerf_1974_protocol}
V.~Cerf and R.~Kahn.
\newblock A protocol for packet network intercommunication.
\newblock {\em IEEE Transactions on Communications}, 22(5):637--648, May 1974.

\bibitem{cole_mechanism_2013}
Richard Cole, Vasilis Gkatzelis, and Gagan Goel.
\newblock Mechanism design for fair division: Allocating divisible items
  without payments.
\newblock In {\em EC 2013}, 2013.

\bibitem{dalton_1920_measurement}
Hugh Dalton.
\newblock The measurement of the inequality of incomes.
\newblock {\em The Economic Journal}, 30(119):348--361, 1920.

\bibitem{dasgupta_impl_1979}
Partha Dasgupta, Peter Hammond, and Eric Maskin.
\newblock The implementation of social choice rules: Some general results on
  incentive compatibility.
\newblock {\em The Review of Economic Studies}, 46(2):185--216, 1979.

\bibitem{dubey_theory_1978}
Pradeep Dubey and Martin Shubik.
\newblock A theory of money and financial institutions. 28. the non-cooperative
  equilibria of a closed trading economy with market supply and bidding
  strategies.
\newblock {\em Journal of Economic Theory}, 17(1):1 -- 20, 1978.

\bibitem{eisenberg_aggregation_1961}
E.~Eisenberg.
\newblock Aggregation of {Utility} {Functions}.
\newblock {\em Management Science}, 7(4):337--350, July 1961.

\bibitem{eisenberg_consensus_1959}
Edmund Eisenberg and David Gale.
\newblock Consensus of {Subjective} {Probabilities}: {The} {Pari}-{Mutuel}
  {Method}.
\newblock {\em The Annals of Mathematical Statistics}, 30(1):165--168, March
  1959.

\bibitem{feldman_proportional_2009}
M.~{Feldman}, K.~{Lai}, and L.~{Zhang}.
\newblock The proportional-share allocation market for computational resources.
\newblock {\em IEEE Transactions on Parallel and Distributed Systems},
  20(8):1075--1088, Aug 2009.

\bibitem{floyd_1993_random}
S.~Floyd and V.~Jacobson.
\newblock Random early detection gateways for congestion avoidance.
\newblock {\em IEEE/ACM Transactions on Networking}, 1(4):397--413, Aug 1993.

\bibitem{giraud_strategic_2003}
Ga{\"e}l Giraud.
\newblock Strategic market games: an introduction.
\newblock {\em Journal of Mathematical Economics}, 39(5):355 -- 375, 2003.
\newblock Strategic Market Games.

\bibitem{goel_beyond_2018}
Ashish Goel, Reyna Hulett, and Benjamin Plaut.
\newblock Markets beyond nash welfare for leontief utilities.
\newblock {\em CoRR}, abs/1807.05293, 2018.

\bibitem{jain_eisenberggale_2010}
Kamal Jain and Vijay~V. Vazirani.
\newblock Eisenberg–{Gale} markets: {Algorithms} and game-theoretic
  properties.
\newblock {\em Games and Economic Behavior}, 70(1):84--106, September 2010.

\bibitem{kaneko_nash_1979}
Mamoru Kaneko and Kenjiro Nakamura.
\newblock The nash social welfare function.
\newblock {\em Econometrica}, 47(2):423--35, 1979.

\bibitem{kelly_1998_rate}
F~P Kelly, A~K Maulloo, and D~K~H Tan.
\newblock Rate control for communication networks: shadow prices, proportional
  fairness and stability.
\newblock {\em Journal of the Operational Research Society}, 49(3):237--252,
  Mar 1998.

\bibitem{papa_worst_1999}
Elias Koutsoupias and Christos Papadimitriou.
\newblock Worst-case equilibria.
\newblock In {\em Proceedings of the 16th Annual Conference on Theoretical
  Aspects of Computer Science}, STACS'99, pages 404--413, Berlin, Heidelberg,
  1999. Springer-Verlag.

\bibitem{maskin_nash_1999}
Eric Maskin.
\newblock Nash equilibrium and welfare optimality.
\newblock {\em The Review of Economic Studies}, 66(1):23--38, 1999.

\bibitem{maskin_2002_book}
Eric Maskin and T.~Sj{\"o}str{\"o}m.
\newblock {\em Implementation Theory}, pages 237--288.
\newblock North Holland, Amsterdam, 2002.

\bibitem{matros_chinese_2011}
Alexander Matros.
\newblock Chinese auctions.
\newblock In {\em GAME THEORY AND MANAGEMENT. Collected abstracts of papers
  presented on the Fifth International Conference Game Theory and
  Management/Editors Leon A. Petrosyan and Nikolay A. Zenkevich.--SPb.:
  Graduate School of Management SPbU, 2011.--268 p. The collection contains
  abstracts of papers accepted for the Fifth International}, page 153, 2011.

\bibitem{moulin_2003_fair}
Herv{\'e} Moulin.
\newblock {\em Fair Division and Collective Welfare}, chapter~3.
\newblock MIT Press, January 2003.

\bibitem{nash_bargaining_1950}
John Nash.
\newblock {The Bargaining Problem}.
\newblock {\em Econometrica}, 18(2):155--162, April 1950.

\bibitem{nisan_2007_algorithmic}
Noam Nisan, Tim Roughgarden, {\'E}va Tardos, and Vijay~V Vazirani, editors.
\newblock {\em Algorithmic Game Theory}.
\newblock Cambridge University Press, 2007.

\bibitem{padhye_1998_modeling}
Jitendra Padhye, Victor Firoiu, Don Towsley, and Jim Kurose.
\newblock Modeling tcp throughput: A simple model and its empirical validation.
\newblock {\em SIGCOMM Comput. Commun. Rev.}, 28(4):303--314, October 1998.

\bibitem{pan_2000_choke}
Rong Pan, B.~Prabhakar, and K.~Psounis.
\newblock Choke - a stateless active queue management scheme for approximating
  fair bandwidth allocation.
\newblock In {\em Proceedings IEEE INFOCOM 2000. Conference on Computer
  Communications. Nineteenth Annual Joint Conference of the IEEE Computer and
  Communications Societies (Cat. No.00CH37064)}, volume~2, pages 942--951
  vol.2, 2000.

\bibitem{pigou_1912_wealth}
A.C. Pigou.
\newblock {\em Wealth and Welfare}.
\newblock PCMI collection. Macmillan and Company, limited, 1912.

\bibitem{rawls_1971_theory}
John Rawls.
\newblock {\em A Theory of Justice}.
\newblock Belknap Press of Harvard University Press, Cambridge, Massachussets,
  1 edition, 1971.

\bibitem{roughgarden_selfish_2005}
Tim Roughgarden.
\newblock {\em Selfish Routing and the Price of Anarchy}.
\newblock The MIT Press, 2005.

\bibitem{samuelson_1947_foundations}
Paul~Anthony Samuelson.
\newblock Foundations of economic analysis.
\newblock 1947.

\bibitem{sen_1976_welfare}
Amartya Sen.
\newblock Welfare inequalities and rawlsian axiomatics.
\newblock {\em Theory and Decision}, 7(4):243--262, Oct 1976.

\bibitem{sen_1977_social}
Amartya Sen.
\newblock Social choice theory: A re-examination.
\newblock {\em Econometrica}, 45(1):53--89, 1977.

\bibitem{shapley_trade_1977}
Lloyd Shapley and Martin Shubik.
\newblock Trade {Using} {One} {Commodity} as a {Means} of {Payment}.
\newblock {\em Journal of Political Economy}, 85(5):937--68, 1977.

\bibitem{varian_equity_1974}
Hal Varian.
\newblock Equity, envy, and efficiency.
\newblock {\em Journal of Economic Theory}, 9(1):63 -- 91, 1974.

\bibitem{vazirani_2007_combinatorial}
Vijay~V Vazirani.
\newblock Combinatorial algorithms for market equilibria.
\newblock {\em Algorithmic Game Theory}, pages 103--134, 2007.

\bibitem{walras_elements_1874}
L.~Walras.
\newblock {\em {\'E}l{\'e}ments d'{\'e}conomie politique pure; ou, Th{\'e}orie
  de la richesse sociale}.
\newblock Number v. 1-2. Corbaz, 1874.

\end{thebibliography}

\end{document}